\newcommand{\C}{{\mathbb{C}}}
\newcommand{\R}{{\mathbb{R}}}
\newcommand{\Z}{{\mathbb{Z}}}
\newcommand{\abs}[1]{\left\lvert#1\right\rvert}
\def\C{{\mathbb C}}
\def\R{{\mathbb R}}
\def\Z{{\mathbb Z}}
\def\le{\leqslant}
\def\ge{\geqslant}
\newcommand{\rd}{\mathrm{d}}
\newcommand{\Or}{\mathcal{O}}
\numberwithin{equation}{section}
\DeclareMathOperator{\poly}{poly}
\newcommand{\D}{{\mathscr{D}}}
\renewcommand{\d}{\mathrm{d}}
\newcommand{\range}[1]{[{#1}]}
\newcommand{\rangez}[1]{[{#1}]_0}
\newtheorem{problem}{Problem}
\newtheorem{theorem}{Theorem}[section]
\newtheorem{lemma}[theorem]{Lemma}
\newtheorem{corollary}[theorem]{Corollary}
\theoremstyle{definition}
\newtheorem{remark}[theorem]{Remark}
\numberwithin{equation}{section}
\newcommand\norm[1]{\left\lVert#1\right\rVert}
\numberwithin{equation}{section}
\newcommand{\eq}[1]{(\ref{eq:#1})}
\renewcommand{\sec}[1]{\hyperref[sec:#1]{Section~\ref*{sec:#1}}}
\newcommand{\app}[1]{\hyperref[app:#1]{Appendix~\ref*{app:#1}}}
\newcommand{\thm}[1]{\hyperref[thm:#1]{Theorem~\ref*{thm:#1}}}
\newcommand{\lem}[1]{\hyperref[lem:#1]{Lemma~\ref*{lem:#1}}}
\newcommand{\cor}[1]{\hyperref[cor:#1]{Corollary~\ref*{cor:#1}}}
\newcommand{\prb}[1]{\hyperref[prb:#1]{Problem~\ref*{prb:#1}}}
\newcommand{\fgr}[1]{\hyperref[fgr:#1]{Figure~\ref*{fgr:#1}}}
\newcommand{\tab}[1]{\hyperref[tab:#1]{Table~\ref*{tab:#1}}}
\newcommand{\beq}{\begin{equation}}
\newcommand{\eeq}{\end{equation}}
\newcommand{\beqa}{\begin{eqnarray}}
\newcommand{\eeqa}{\end{eqnarray}}
\newcommand{\bra}[1]{\ensuremath{{\langle{#1}|}}}
\newcommand{\ket}[1]{\ensuremath{{|{#1}\rangle}}}
\newcommand{\Idnd}{\ensuremath{\mathbb{I}_{n_d\times n_d}}}
\let\@@magyar@captionfix\relax
\begin{document}

\title{Efficient quantum algorithm for nonlinear \\reaction-diffusion equations and energy estimation}

\author{
Jin-Peng Liu$^{1,2,3,4}$,\quad Dong An$^{1}$,\quad Di Fang$^{5}$,\quad Jiasu Wang$^{3}$,\quad
Guang Hao Low$^{6}$, \quad Stephen Jordan$^{6, 7}$\\
\footnotesize $^{1}$ Joint Center for Quantum Information and Computer Science, University of Maryland, MD\\
\footnotesize $^{2}$ Simons Institute for the Theory of Computing, Berkeley, CA\\
\footnotesize $^{3}$ Department of Mathematics, University of California, Berkeley, CA\\
\footnotesize $^{4}$ Center for Theoretical Physics, Massachusetts Institute of Technology, Cambridge, MA\\
\footnotesize $^{5}$ Department of Mathematics and Duke Quantum Center, Duke University, Durham, NC\\
\footnotesize $^{6}$ Microsoft Quantum, Redmond, WA\\
\footnotesize $^{7}$ Google Quantum AI, Santa Barbara, CA\\
}
\date{}
\maketitle

\begin{abstract}
Nonlinear differential equations exhibit rich phenomena in many fields but are notoriously challenging to solve. Recently, Liu et al. \cite{LKK20} demonstrated the first efficient quantum algorithm for dissipative quadratic differential equations under the condition $R < 1$, where $R$ measures the ratio of nonlinearity to dissipation using the $\ell_2$ norm. Here we develop an efficient quantum algorithm based on \cite{LKK20} for reaction-diffusion equations, a class of nonlinear partial differential equations (PDEs). To achieve this, we improve upon the Carleman linearization approach introduced in \cite{LKK20} to obtain a faster convergence rate under the condition $R_D < 1$, where $R_D$ measures the ratio of nonlinearity to dissipation using the $\ell_{\infty}$ norm. Since $R_D$ is independent of the number of spatial grid points $n$ while $R$ increases with $n$, the criterion $R_D<1$ is significantly milder than $R<1$ for high-dimensional systems and can stay convergent under grid refinement for approximating PDEs. As applications of our quantum algorithm we consider the Fisher-KPP and Allen-Cahn equations, which have interpretations in classical physics. In particular, we show how to estimate the mean square kinetic energy in the solution by postprocessing the quantum state that encodes it to extract derivative information.
\end{abstract}

\section{Introduction}

Nonlinear partial differential equations (PDEs) of reaction-diffusion type are widespread and have many applications, ranging from biology and ecology to data science. Exhibiting rich phenomena, reaction-diffusion equations have been applied to describe biological transport networks such as leaf venations and blood flow \cite{LuHu2022,CaiHu2013,caihupde1,caihupde2,caihupde3,caihupde4,caihupde5,caihupde6,caihupdebook,FJMP19}, predator-prey dynamics in interacting populations \cite{Garvie2007, Malchow2007, Petrovskii1999}, prediction of brain functions and tumor growth \cite{Lefevre2010, Habib2003}, the formation of the Turing patterns in tissues and organs \cite{Murray2001, Murray2002, Genieys2006, Meinhardt1982}, dendritic colony growth \cite{Golding1998, Mimura2000}, complex chemical processes such as combustion \cite{Berestycki1985, Zeldovich1985, Poinsot2005, Perthame2012} and calcium dynamics \cite{Means2006}. Reaction-diffusion equations have also been applied to data classification \cite{BF12,BF16,MKB13}, and image segmentation and inpainting \cite{BEG06,DB08,EM03,ET06}. In many cases, the underlying model can be viewed as an energy optimization procedure, with the reaction-diffusion equations as the gradient flow. Such reaction-diffusion equations inherit the property that energy decays with time. Moreover, the maximum principle is satisfied, which states that if the initial and boundary conditions are bounded by a certain constant, then the entire solution remains bounded (in the $L^\infty$ sense) for all time. When designing numerical approximation schemes it is often of great interest to maintain these properties exactly. 

Attempts to solve such PDEs on classical computers are hampered by the so-called \textit{curse of dimensionality}, in which computational complexity grows exponentially with spatial dimension \cite{Bel57}. For example, in $d$ dimensions, if each coordinate is discretized by $n$ grid points, the grid will have $\Omega(n^d)$ grid points.

Recent advances in quantum computing offer a fresh approach to the efficient solution of such high-dimensional problems. Quantum algorithms have been developed to prepare a quantum state encoding the solution to an $n^d$-dimensional linear system, while in some cases only requiring quantum circuits of complexity $\poly(d,\log n)$ \cite{Amb12, AL19, CKS15, GSLW18, HHL08, LT19, SSO18, TAWL20,CAS21}. Such quantum algorithms have been applied to address high-dimensional problems governed by linear ODEs \cite{Ber14,BCOW17,CL19,FLT23} and PDEs \cite{CJS13,CPPTK13,MP16,CJO19,CLO20,ESP19,LMS20}. 

It has been a longstanding open problem to understand the capability of quantum computers to solve nonlinear differential equations. An early work proposed a quantum algorithm for ODEs that simulates polynomial nonlinearities by employing multiple copies of the solution \cite{LO08}. For $n$-dimensional systems of polynomial ordinary differential equations, this quantum algorithm scales as $\poly(\log n, 1/\epsilon^{T})$. Finding quantum algorithms with polynomial scaling in $T$ for solving nonlinear differential equations remained an open problem. Furthermore, complexity-theoretic arguments indicate that this should not be achievable in the most general case \cite{AL98,Aar05,CY16}, but rather will require exploiting specific properties of restricted classes of nonlinear differential equations.

Recently, in \cite{LKK20}, a quantum algorithm based on Carlemann linearization \cite{Car32,KS91,FP17} was proposed for solving a class of nonlinear differential equations 
\begin{equation}
    \frac{dU}{dt} = F_1 U + F_2 U^{\otimes 2} + F_0(t).
\label{eq:equation}
\end{equation}
Here we assume $F_1$ is dissipative, i.e. all the eigenvalues of $F_1$ are negative. We are given an initial condition $U(0) = U_{\mathrm{in}}$, an error tolerance $\epsilon$, and a time-duration $T$. The efficiency of the algorithm depends on $R$, defined as 
\begin{equation}
R = \frac{\|F_2\|}{|\lambda_1|}\|U_{\mathrm{in}}\|,
\end{equation}
where $\lambda_1$ is the largest eigenvalue of $F_1$. The quantity $R$ is used to quantify the relative strength of the nonlinearity and forcing to the linear dissipation according to the $\ell_2$ norm. A more general definition of $R$ for polynomial differential equations is given in \eq{A}. Under the condition $R < 1$, the algorithm has complexity $\Or \left( \frac{T^2 q}{g \epsilon} \poly(\log T, \log n, \log 1/\epsilon) \right)$, where $q = \|U_{\mathrm{in}}\|/\|U(T)\|$, and $g = \|U(T)\|$. This quadratic scaling with $T$ was an exponential improvement over prior quantum algorithms for solving nonlinear differential equations. The error dependence of quantum Carleman linearization was subsequently improved from $\poly(1/\epsilon)$ to $\poly(\log 1/\epsilon)$ in \cite{krovi2022improved}, by assuming the log-norm of the dissipation matrix is negative rather than exploiting a diagonalizability condition.

Various quantum algorithms for nonlinear differential equations have also been investigated recently based on Koopman-von Neumann linearization \cite{DS20,Jos20,LPG20,ESP21,TJ21} and the related level set formalism \cite{Jin2022quantum}. Others have been proposed based on the non-Hermitian Hamiltonian approach \cite{DS20,LPG20,DS21} and the homotopy perturbation approach \cite{XWG21}. Carleman linearization can be treated as a particular Koopman-von Neumann linearization, while the non-Hermitian Hamiltonian approach is inspired by quantum simulation. The quantum algorithm of \cite{XWG21} combines the homotopy perturbation method with the high-precision quantum linear ODE solver of \cite{BCOW17}, achieving complexity that scales linearly in $T$ and polylogarithmically in $1/\epsilon$. This is shown under the condition $K = 4R < 1$, which is stricter than the condition $R < 1$ used in \cite{LKK20}. The quantum algorithm of \cite{Jin2022quantum} is based on the level set method, which maps a nonlinear differential equation into a linear differential equation describing the dynamics of the level sets of the solution to the original nonlinear differential equation. Given a specific construction for the encoding of initial data, the quantum algorithm of \cite{Jin2022quantum} encodes the level set function from which physical observables can be estimated corresponding to multiple initial conditions.

Many of the quantum algorithms proposed for solving differential equations depend on solving a high dimensional linear system, and their complexities are thus determined by the condition number of this system. Deriving bounds on this condition number based on the properties of the original differential equations is challenging and largely unsolved. Quantum complexity lower bounds on simulating nonlinear quantum dynamics \cite{CY16} or classical dynamics \cite{LKK20} show that this condition number becomes exponential in the worst case.

In this paper, we extend upon the quantum algorithm of \cite{LKK20} by adapting the Carlemann linearization approach to the context of reaction-diffusion PDEs. 
We also show that reaction-diffusion equations are still tractable on quantum computers even for larger $R$ under the Maximum Principle, ruling out the worst-case exponential time complexity in \cite{LKK20}.
Finally, we conduct several numerical experiments for Fisher-KPP equations and Allen-Cahn equations to verify the convergence rate and efficiency of the improved Carleman linearization.

We compare our improved quantum Carleman linearization algorithm to the original one \cite{LKK20} in \tab{compare}. Both quantum Carleman linearization methods solve an $n_d$-dimensional system of ordinary differential equations with initial condition $U_{\mathrm{in}}$, for a given evolution time $T$ and normalized $\ell_2$ error tolerance $\epsilon$. (In the present work, we consider this set of ODEs as arising from the discretization of a PDE.)
The quantum algorithm in \cite{LKK20} solves nonlinear dissipative differential equations of the form in \eq{equation}.
Our new quantum algorithm solves reaction-diffusion equations, where $F_1$ is Laplacian and $F_0 = 0$, but the $F_2 U^{\otimes 2}$ term is instead allowed to be a high-degree polynomial. Thus, the class of problems we consider here is neither a strict generalization nor a strict special case of that considered in \cite{LKK20}.

Our new quantum algorithm produces a Feynman-Kitaev history state encoding the full time-evolution of the solution $U(t): t \in[0,T]$, whereas the algorithm of \cite{LKK20} produces a quantum state encoding the final value $U(T)$. The history state we produce corresponds to the gradient flow of the energy functional. We can post-process this state to extract derivative information which can be interpreted as kinetic energy in classical physical systems modeled by the nonlinear PDE.

The quantum algorithm of \cite{LKK20} has time-complexity proportional to $q/g$ where $q = \|U_{\mathrm{in}}\|/\|U(t)\|$ and $g = \|U(T)\|$ is the final norm of the solution. Our new algorithm instead has complexity proportional to $\|U_{\mathrm{in}}\|/G$, where $G = \frac{1}{T}\int_0^T \|U(t)\|$ is the time-averaged norm of the solution. In some cases, $1/G$ can be much smaller than $1/g$. For example, the solution $u(t) = e^{-t}$ arises in many homogeneous dissipative differential equations. In this case $1/g = e^T$ and $1/G = \Omega (T)$.

Polynomial complexity is here shown under the assumption $R_D < 1$, where $R_D$ is a ratio of nonlinearity to dissipation in $\ell_\infty$ norm, whereas the algorithm of \cite{LKK20} requires $R < 1$, where $R$ is a ratio of nonlinearity to dissipation in $\ell_2$ norm. The latter is a stronger assumption, not well suited for the solution of high-dimensional PDEs because it grows under grid refinement, whereas $R_D$ converges to a constant. Specifically, in the limit where each spatial dimension is discretized into $n \to \infty$ steps, the number of lattice sites scales as $n^d$ and the $\ell_2$ norm of the discretized solution vector scales as $n^{d/2}$, which leads to divergent $R$ (see \eq{trapezoidal} and \eq{rescaled-norm} for detailed discussion).

The solutions to general nonlinear differential equations can have exponentially growing norms, which would result in an exponential complexity for the algorithm introduced here. However, we rule this out for reaction-diffusion equations by establishing an upper bound on the $\ell_{\infty}$ norm of the solution independent of $T$ as shown in \thm{lower}. 

We also study the extraction of classical information of practical interest from the history state produced by our algorithm. First, from the quantum state, we can directly estimate the mean square amplitude over a specific sub-domain, which can be understood as the portion of a physical observable on this sub-domain. Our approach is a direct application of amplitude estimate technique~\cite{BHM02} and can achieve a quadratic speedup in precision over standard classical Monte Carlo sampling. Secondly, we show how to estimate the portion of the kinetic energy on a specific sub-domain by developing a quantum algorithm that can transfer a quantum state with function values to a quantum state encoding its partial derivatives. This algorithm is based on the discrete Fourier transform. It only requires $\Or(1)$ uses of quantum Fourier transform (QFT) and input oracle of a diagonal matrix, and can potentially be of independent interest in other problems such as quantum optimization algorithms. Our main results are summarized in \tab{application}. Second, we briefly discuss the potential advantages of the history state compared to the final state. In particular, the history state structure allows us to estimate the time when the system reaches equilibrium and run a pre-diagnosis procedure to avoid possible exponential overhead brought by the fast decay of the solution. 

\begin{table}[ht]
    \centering
    \scriptsize{
    \renewcommand{\arraystretch}{1.25}
    \begin{tabular}{|c|c|c|c|c|c|c|}
       \hline
       \textbf{Algorithm} & \textbf{Model}  & \textbf{Output} & \textbf{Condition} & \textbf{Complexity} & \textbf{Grid refinement} 
       \\
       \hline
       \cite[Theorem 1]{LKK20}  & dissipative quadratic ODEs & final state & $R < 1$ \eq{A} & $T^2q/(g\epsilon)$ & $R = \Or(n_d^{1/2})$ 
       \\
       \hline
       \thm{main}  & polynomial R-D systems \eq{RDODE} & history state & $R_D < 1$ \eq{A1} & $T^2\|U_{\mathrm{in}}\|^{2N}/(G\epsilon)$ & $R_D = \Or(1)$ 
       \\
       \hline
    \end{tabular}
    }
    \caption{\small
    Comparison between original and improved results of quantum Carleman linearization (QCL). We consider an $n_d$-dimensional system of ODEs, given the initial condition $U_{\mathrm{in}}$, the evolution time $T$ and normalized $\ell_2$ error tolerance $\epsilon$. We denote $g$, $q$, and $G$ as $\ell_2$ norm of the solution, $\ell_2$ relation between initial and final solutions, and time-average $\ell_2$ norm of the solution. R-D refers to reaction-diffusion. The condition $R<1$ implies the weaker condition $R_D<1$. When $R_D<1\le R$, we usually choose a suitably small value of the truncation number of Carleman linearization $N$ to reduce the cost resulting from the prefactor $\|U_{\mathrm{in}}\|^{2N}$. Grid refinement refers to the case that the system of ODEs is discretized from a PDE using an increasing number $n_d$ of grid points. Logarithmic factors in the complexities are omitted. 
    }
    \label{tab:compare}
\end{table}

\begin{table}[ht]
    \centering
    \scriptsize{
    \renewcommand{\arraystretch}{1.25}
    \begin{tabular}{|c|c|c|}
       \hline
       \textbf{Quantum algorithm} & \textbf{Output} & \textbf{Query complexity} 
       \\
       \hline
        \thm{app_deriv_state_smooth} & history state of derivatives & $\mathcal{\widetilde{O}}\left(\sqrt{d}\|\vec{f}\|/\|\vec{\nabla  f}\|\right)$ \\\hline
        \cor{app_deriv} & kinetic energy ratio & $\mathcal{\widetilde{O}}\left(\sqrt{d}\|\vec{f}\|/(\|\vec{\nabla f}\|\epsilon)\right)$ \\\hline
    \end{tabular}
    }
    \caption{\small Summary of potential applications of a quantum state encoding the information of a smooth function in its amplitudes. Query complexity is the number of queries to the circuit preparing the quantum state encoding the function $f$. Here $d$ is the spatial dimension of the function, $\epsilon$ is the tolerated level of error. We use $\vec{f}$ and $\vec{\nabla f}$ to denote the unnormalized vectors of $f$ and $\nabla f$ evaluated at discrete grid points, respectively. }
    \label{tab:application}
\end{table}

The paper is organized as follows. \sec{pde} introduces the background of reaction-diffusion equations. \sec{Carleman} develops the Carleman linearization with $\ell_2$ and $\ell_{\infty}$ convergence analysis. \sec{algorithm} presents the problem model and gives the quantum algorithm with a detailed complexity analysis. \sec{lower} establishes lower bound results. \sec{application} describes how our approach could be applied to kinetic energy estimation problems. Finally, we conclude with a discussion of the results and some possible future directions in \sec{discussion}.

\subsection{Preliminaries}

Here we denote the domain, boundaries, functions, and norms as follows.

We consider a $d$-dimensional hypercube as the spatial domain, denoting as $\D\coloneqq [0,1]^d$. We denote the spatial and time domain as $\D_T\coloneqq[0,1]^d\times(0,T]$. We also denote $\partial\D$ and $\partial\D_T$ as boundary domains of $\D$ and $\D_T$, respectively.

We consider a uniform spatial discretization on $\D$ and introduce $n$ discretization nodes for each coordinate. To represent it, we denote $\rangez{n} \coloneqq \{0,1,\ldots,n-1\}$ and a set of multi-indices as
\begin{equation}
\mathcal{I} \coloneqq \rangez{n}^d = \Bigl\{l=(l_1,\ldots,l_d) ~\Big|~ l_j\in \rangez{n}\Bigr\}.
\label{eq:indices}
\end{equation}
We then denote the set of uniform nodes as 
\begin{equation}
\chi \coloneqq \Bigl\{\chi_l ~\Big|~ l\in\mathcal{I}\Bigr\},
\label{eq:nodes}
\end{equation}
where $\chi_l$ maps index $l$ to the discretization node. The exact expression for $\chi_l$ depends on the boundary condition. For periodic boundary condition, $\chi_l$ is defined as
\begin{equation}
\chi_l \coloneqq \left(\frac{l_1}{n}, \ldots, \frac{l_d}{n}\right),
\label{eq:node}
\end{equation}
while for Dirichlet boundary condition, it is given by
\begin{equation}
\chi_l \coloneqq \left(\frac{l_1+1}{n+1}, \ldots, \frac{l_d+1}{n+1}\right).
\end{equation}
For convenience, we also introduce the set of boundary indices, which is defined as
\begin{equation}
\mathcal{B} \coloneqq \Bigl\{l=(l_1,\ldots,l_d) ~\Big|~ \chi_l \in \partial \D\Bigr\}.
\end{equation}

Let $u:\D_T \to \mathbb{R}$ be the solution to a PDE. We can discretize $u(x,t)$ on the set of uniform nodes $\chi$ to obtain an $n_d$-dimensional vector $U(t)$, where $n_d = n^d$. 
The vector's entries $U_1(t), U_2(t), \ldots, U_{n_d}(t)$ are the elements of $\{u(\chi_l,t)|l \in \mathcal{I}\}$, arranged according to the lexicographic order on $\mathcal{I}$.

We now discuss our notations for norms. For a vector $a = [a_1, a_2, \ldots, a_n]\in\R^n$, we denote the vector $\ell_p$ norm as
\begin{equation}
\|a\|_p: = \left({\sum_{k=0}^{n-1}|a_k|^p} \right)^{1/p}.
\end{equation}
For a matrix $A\in\R^{n\times n}$, we denote the operator norm $\norm{\cdot}_{p,q}$ induced by the vector $\ell_p$ and $\ell_q$ norms as
\begin{equation}
\|A\|_{p, q} : =\sup_{x \neq 0}  \frac{\norm{A x}_q }{\norm{x}_p} , \quad \|A\|_{p }: = \|A\|_{p, p}. 
\end{equation}
For a continuous scalar function $f: [0,T]\rightarrow\R$, we denote the $L^\infty$ norm as
\begin{equation}
\|f\|_\infty: = \max_{t \in [0,T]} |f(t)|.
\end{equation}
For a continuous scalar function $u: \overline{\D}_T\rightarrow\R$, for a fixed $t$, the $L^p$ norm of $u(\cdot,t)$ is given by
\begin{equation}
    \|u(\cdot,t)\|_{L^p(\D)} := \left( \int_\D  |u(x,t)|^p \, dx \right)^{1/p}.
\end{equation}
In particular, when no subscript is used, we mean $\norm{\cdot} = \norm{\cdot}_2$ for vector and matrix norms by default, and $\norm{\cdot} = \norm{\cdot}_{L^2}$ for function norm by default. 

For a continuous scalar function $u: [0,1]\times[0,T]\rightarrow\R$, which is discretized in space using uniform interpolation nodes, obtain an estimate of its $L^p$ norm as in a Riemann sum:
\begin{equation}
    \|u(\cdot,t)\|_{L^p(\D)}^p = \int_\D  |u(x,t)|^p \, dx \approx \sum_{k=0}^{n-1} \left( \left|u\left(\frac{k}{n},t\right)\right|^p \frac{1}{n} \right).
    \label{eq:trapezoidal}
\end{equation}
If we denote $U(t) = [u(\frac{0}{n},t), u(\frac{1}{n},t), \ldots, u(\frac{n-1}{n},t)]$, the RHS of \eq{trapezoidal} is $\frac{1}{n}\|U(t)\|_{p}^p$. This indicates that
\begin{equation}
    \|u(\cdot,t)\|_{L^p(\D)} \approx \frac{1}{n^{1/p}}\|U(t)\|_{p}.
    \label{eq:rescaled-norm}
\end{equation}
Thus, if $u(\cdot,t)$ is a given function in continuous one-dimensional space, then the $\ell_p$ norm of its spatially discretized function vector $U(t)$ increases under grid refinement as $n^{1/p}$. Similarly, for a $u: \overline{\D}_T\rightarrow\R$ with a general spatial dimension $d$, the $\ell_p$ norm of the spatially discretized function vector $U(t)$ increases as $n^{d/p}$. 

Note that when $p=\infty$, 
\begin{equation}
    \|u(\cdot,t)\|_{L^{\infty}(\D)} \approx \|U(t)\|_{\infty}.
    \label{eq:rescaled-norm-inf}
\end{equation}
That means the $\ell_{\infty}$ norm of the spatially discretized function vector $U(t)$ stays convergent in the continuum limit of $n \to \infty$.

For real functions $f,g: \R\rightarrow\R$, we write $f=\Or (g)$ if there exists $c>0$, such that $|f(\tau)|\le c|g(\tau)|$ for all $\tau\in\R$. We write $f=\Omega(g)$ if $g=\Or (f)$, and $f=\Theta(g)$ if both $f=\Or (g)$ and $g=\Or (f)$. We use $\widetilde \Or$ to suppress logarithmic factors in the asymptotic expression, i.e., $f=\widetilde \Or (g)$ if $f=\Or\Bigl(g\poly(\log g)\Bigr)$. We write $f=o(g)$ if $\limsup_{\tau\to\infty} \frac{|f(\tau)|}{|g(\tau)|}=0$, where $g$ is nonzero.

\section{Problem Settings} \label{sec:pde}

In this section, we introduce the class of nonlinear PDEs that we focus on and discuss its spatial discretization, as well as \textit{a priori} bounds on its solutions. We then introduce the problem statement with input and output settings.

\subsection{Reaction-diffusion equation} 
We focus on a class of nonlinear PDEs -- the reaction-diffusion equations
\begin{equation}
\frac{\partial u}{\partial t}(x,t) = D \Delta u(x,t) + f(u(x,t)),  
\label{eq:NPDE}
\end{equation}
where $u$ is a real-valued scalar function at position $x \in \D = [0,1]^d$ and time $t\in\R^+$, $f(u)$ is the nonlinear term and $D$ is a positive number.
Without loss of generality, denoting $\D = \D_1\times\D_2$, we are given homogeneous Dirichlet boundary conditions imposed on $d_1$ dimensions ($x\in\D_1$) and periodic boundary conditions imposed on $d_2$ dimensions ($x\in\D_2$, $d_1 + d_2 = d$)
\begin{align}
u(x,t) &= 0, \qquad\qquad\qquad x \in \partial \D_1, \\
u(x,t) &= u(x+v,t), \qquad v \in 0^{d_1}\times\Z^{d_2}, ~ x \in \partial \D_2.
\label{eq:boundary}
\end{align}
The solution $u(x,t)$ in \eq{NPDE} is the $L^2$ gradient flow of the free energy functional
\begin{equation}
E(u) = \frac{D}{2} \int |\nabla u|^2\d x + \int F(u) \d x 
\label{eq:energy-functional}
\end{equation}
with a potential satisfying
\begin{equation}
\frac{\partial F}{\partial u}(u) = - f(u).
\label{eq:potential}
\end{equation}
In other words, $f$ is a field driven by the potential $F$.

In this paper, we focus on the following specific reaction-diffusion equations 
\begin{equation}
\frac{\partial u}{\partial t}(x,t) = D \Delta u(x,t) + au(x,t) + bu^{M}(x,t),
\label{eq:RDPDE}
\end{equation}
with the integer $M\ge 2$. Without loss of generality, we assume $|a|, |b| = o(d)$. The motivation to consider this type of PDEs is two-fold: first, in physical and biological applications, a nonlinearity of this form is frequently encountered. For example, in the phase transition model (the so-called Allen-Cahn equation), $M = 3$ \cite{AllenCahn1979}, while $M = 2$ corresponds to the Fisher-KPP equation \cite{Fisher1937, Murray2001}. Furthermore, on quantum computers, it is a reasonable task to construct tensor powers of a quantum state, such as $u^{\otimes M}$, which exactly corresponds to the polynomial nonlinearity in \eq{RDPDE}. Although we do not do so in this paper, one might also consider an input model in which a more general nonlinearity $f$ is specified by an oracle $U_f$. 

\subsection{Spatial Discretization} \label{sec:spatial_dis}
Our approach to solving reaction-diffusion PDEs \eq{RDPDE} on quantum computers starts by performing spatial discretization to reduce to a problem of solving a system of nonlinear ODEs. Specifically, we apply the central difference discretization on \eqref{eq:NPDE} to obtain the $n^d$-dimensional polynomial ODE
\begin{equation}
\frac{\d U_j}{\d t} = D \sum_k (\Delta_h)_{jk} U_k +  f(U_j), \quad j \in \mathcal{I},
\label{eq:GAC}
\end{equation}
where $\Delta_h$ stands for the central difference of the Laplacian with homogeneous Dirichlet boundary condition or periodic boundary condition, defined as
\begin{equation}  \label{eqn:dis_lap_highD_tensor}
    \Delta_h = D_h \otimes \underbrace{I  \otimes \cdots \otimes I}_{d-1 \text{ terms}}+ I \otimes D_h \otimes \underbrace{I \otimes \cdots \otimes I}_{d-2 \text{ terms}} + \cdots +  \underbrace{I \otimes \cdots \otimes I}_{d-1 \text{ terms}} \otimes D_h.
\end{equation}
Here $D_h$ is the one-dimensional discrete Laplacian operator. For homogeneous Dirichlet boundary conditions, $D_h$ is
\begin{equation}\label{eqn:Dh_D1}
         D_h  = D_h^\text{Dir}:= (n+1)^2 \left(\begin{array}{ccccc}
        -2 & 1 & & &  \\
         1& -2 & 1 & & \\
          & \ddots& \ddots& \ddots& \\
           & & 1& -2 & 1\\
        & & & 1 & -2 \\
    \end{array}\right)_{n\times n}.
\end{equation}
We denote the eigenvalues of $D_h$ as $\mu_1\geq \mu_2\geq \cdots\geq \mu_n$. Specifically, $\mu_1=-4(n+1)^2\sin^2\left(\frac{\pi}{2n+2}\right)$.
For periodic boundary conditions, the one-dimensional discrete Laplacian operator is 
\[
    D_h=D_h^\text{per}  := n^2 \left(\begin{array}{ccccc}
        -2 & 1 & & & 1 \\
         1& -2 & 1 & & \\
          & \ddots& \ddots& \ddots& \\
           & & 1& -2 & 1\\
        1 & & & 1 & -2 \\
    \end{array}\right)_{n\times n}.
\]
In this case, the largest eigenvalue of $D_h^\text{per}$ is $\mu_1^\text{per} = 0$.

It is worth pointing out that besides serving as a numerical discretization of the corresponding PDE, the discrete reaction-diffusion equation is of interest unto itself. For example, the discrete Allen-Cahn equation has been applied to unsupervised and semi-supervised graph classification, graph cut minimization, social network segmentation, and image inpainting \cite{BF12,BF16,GB12,GGOB14,LB17,MKB13}.

We also introduce bounds on the solution to the discrete reaction-diffusion equations \eqref{eq:GAC}, which are used in the proof of later theorems.

\begin{lemma}[$\ell_\infty$ A priori Bounds on the Solution] \label{lem:apriori} Assume $f \in C^\infty(\mathbb{R})$ and has at least two distinct real-valued roots. Denote any two distinct roots of $f$ as $\gamma_1, \gamma_2$ with $\gamma_1 < \gamma_2$. Consider the solution $U(t) = (U_j)_{j \in \mathcal{I}}$ to \eqref{eq:GAC} with initial condition $U_j(0) = u_0(x_j)$ for all $j \in \mathcal{I}$. \\
(i) \textbf{Comparison Principle.} If the initial condition satisfies
\begin{equation}
\gamma_1 \le U_j(0) \le \gamma_2, \quad \text{for all $j \in \mathcal{I}$,}
\end{equation}
and so does the solution on the boundary indices $\mathcal{B}$, then the solution $U_j(t)$ remains bounded, that is,
\begin{equation}
    \gamma_1 \le U_j(t) \le \gamma_2, \quad \text{for all $t\geq 0$ and all $j \in \mathcal{I}$.}
\end{equation}
(ii) \textbf{Maximum Principle.} In particular, we denote $\gamma$ as the largest absolute value of roots of $f$. If the initial condition satisfies
\begin{equation} \label{eq:A0}
\norm{U(0)}_\infty \le \gamma, 
\end{equation}
then
\begin{equation} \label{eq:U_infty_bdd}
    \norm{U(t)}_\infty \le \gamma, \quad \text{for all $t\geq 0$.}
\end{equation}
\end{lemma}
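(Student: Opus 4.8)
The plan is to prove (i) first and then derive (ii) as a corollary, since if $\gamma$ is the largest absolute value of a root of $f$ then $-\gamma$ and $\gamma$ need not both be roots, but we can instead use that $f$ has at least two real roots together with $-\gamma \le \gamma_1 \le \gamma_2 \le \gamma$ for a suitable choice—more carefully, for (ii) I would pick the smallest root $\gamma_1$ and largest root $\gamma_2$ of $f$ and note $[\gamma_1,\gamma_2] \subseteq [-\gamma,\gamma]$ is the wrong direction, so instead one invokes that $-\gamma$ and $\gamma$ themselves sandwich all roots and runs the comparison argument against the constant super/sub-solutions $\pm\gamma$ directly, using that $f(\gamma)\le 0\le f(-\gamma)$ when $\gamma$ is the largest-magnitude root and $f$ changes sign appropriately; I would state this sign condition cleanly at the start of the proof of (ii).

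For part (i) the main tool is a discrete comparison (invariant-region) argument. First I would observe that the constant vectors $U_j \equiv \gamma_1$ and $U_j \equiv \gamma_2$ are, respectively, a subsolution and a supersolution of \eqref{eq:GAC}: the discrete Laplacian annihilates constants, $(\Delta_h \gamma_i)_j = 0$ (using that row sums of $D_h^{\mathrm{per}}$ vanish, and for the Dirichlet case that the ``ghost'' boundary values are $\gamma_i$ as assumed), and $f(\gamma_i) = 0$ since $\gamma_i$ is a root. Next, the key step: suppose for contradiction that the solution exits $[\gamma_1,\gamma_2]^{\mathcal I}$, and let $t^* = \inf\{t \ge 0 : \max_j U_j(t) > \gamma_2 \text{ or } \min_j U_j(t) < \gamma_1\}$. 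By continuity, at $t^*$ the solution still lies in $[\gamma_1,\gamma_2]$ and touches the boundary, say $U_{j_0}(t^*) = \gamma_2$ at some index $j_0$ attaining the maximum. Then I would examine $\frac{d}{dt}(U_{j_0} - \gamma_2)$ at $t^*$: the reaction term is $f(U_{j_0}(t^*)) = f(\gamma_2) = 0$, and the diffusion term $D\sum_k (\Delta_h)_{j_0 k} U_k(t^*) = D\sum_k (\Delta_h)_{j_0 k}(U_k(t^*) - \gamma_2) \le 0$ because the off-diagonal entries of $\Delta_h$ are nonnegative, the diagonal entry is negative with $\sum_k (\Delta_h)_{j_0 k} \le 0$, and $U_k(t^*) - \gamma_2 \le 0$ for all $k$ (including boundary indices by hypothesis). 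Hence $U_{j_0}$ cannot cross $\gamma_2$; the symmetric argument at a minimizing index handles $\gamma_1$. To make the ``cannot cross'' rigorous I would either use a standard strong-maximum-principle style argument or, cleaner, perturb: apply the argument to $U_j^\delta$ solving the equation with $\gamma_2$ replaced by $\gamma_2 + \delta$ as a strict supersolution (since then the diffusion term is strictly negative at a touching point, or one adds $+\delta$ to the initial data and a small drift), obtain $U_j(t) \le \gamma_2 + \delta(1 + Ct)$ type bounds via Grönwall against the linearization of $f$, and let $\delta \to 0$.

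Part (ii) then follows: $\norm{U(0)}_\infty \le \gamma$ means $-\gamma \le U_j(0) \le \gamma$, and with $\gamma_1 = -\gamma$, $\gamma_2 = \gamma$ playing the role of the barriers—valid because every root of $f$, in particular the two guaranteed distinct ones, lies in $[-\gamma,\gamma]$, and because for a reaction-diffusion nonlinearity of the relevant type $f(\gamma) \le 0 \le f(-\gamma)$ so that $\pm\gamma$ are still a super/subsolution even if not exact roots—part (i)'s argument gives $-\gamma \le U_j(t) \le \gamma$ for all $t \ge 0$, i.e.\ \eqref{eq:U_infty_bdd}. The main obstacle I anticipate is the rigor at the touching time: naively $\frac{d}{dt}(U_{j_0} - \gamma_2)|_{t^*} \le 0$ only gives non-increase, not strict decrease, so one must handle the possibility of the solution staying on the boundary and later escaping; the cleanest fix is the $\delta$-perturbation / strict-supersolution trick combined with Grönwall, which I would present as the technical heart of the proof. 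A secondary point needing care is the Dirichlet case, where the discrete Laplacian $D_h^{\mathrm{Dir}}$ has rows that do \emph{not} sum to zero near the boundary; there the hypothesis ``$\gamma_1 \le U_j(0) \le \gamma_2$ and so does the solution on $\mathcal B$'' is exactly what lets one write $D\sum_k (\Delta_h)_{j_0 k} U_k = D\sum_k (\Delta_h)_{j_0 k}(U_k - \gamma_2) + D\gamma_2 \sum_k (\Delta_h)_{j_0 k}$ and conclude the sign, provided one interprets boundary contributions correctly—I would spell out this bookkeeping once and for all at the start.
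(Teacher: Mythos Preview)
Your proposal is correct and uses a closely related but differently-packaged argument from the paper. The paper isolates a separate \emph{comparison principle} (if $U_j(0)\le V_j(0)$ and the differential inequality holds, then $U_j(t)\le V_j(t)$) and proves it by linearizing the difference $W=U-V$ via the mean value theorem, $f(U_j)-f(V_j)=f'(\xi_j)W_j$, thus reducing to a maximum principle for the \emph{linear} operator $\partial_t - D\Delta_h + \tilde C_j$; the non-strict case is then handled by the change of variables $\tilde V = e^{c_{\min}t}V - \delta t$ and letting $\delta\to 0$. Part~(i) follows by comparing against the constant exact solutions $\gamma_1,\gamma_2$. Your route instead works directly with the nonlinear equation at the first exit time, exploiting that $f(\gamma_2)=0$ exactly at the touching point, and you correctly flag that the ``cannot cross'' step needs a $\delta$-perturbation---this is morally the same fix as the paper's $-\delta t$ trick. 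Your approach is more direct but tied to barriers that are roots of $f$; the paper's linearization buys a reusable comparison principle valid for arbitrary sub/supersolutions.

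Two minor points. First, your worry about whether $\pm\gamma$ are both roots in part~(ii) is legitimate, but the paper simply applies part~(i) with two genuine roots $\gamma_1,\gamma_2$ and does not attempt the sign analysis $f(\gamma)\le 0\le f(-\gamma)$ you sketch; you can safely drop that discussion. Second, your Dirichlet bookkeeping is slightly off: the matrix $D_h^{\mathrm{Dir}}$ does \emph{not} annihilate the constant vector, so the constant $\gamma_2$ is not a solution of the interior ODE with homogeneous Dirichlet data---rather, one must view $(\Delta_h V)_j$ via the elementwise stencil formula with the \emph{boundary values of $V$ set to $\gamma_2$}, which is exactly how the paper's comparison principle is phrased (it only requires $U_j\le V_j$ on $\mathcal B$, not equality). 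Once you state it that way your sign argument for the diffusion term goes through cleanly.
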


We present the proof of \lem{apriori} in \sec{proof_apriori}. It is worth remarking that this result asserts that the solution $U(t)$ stays in the invariant set $[\gamma_1, \gamma_2]$, which is different from the type of estimate where $\norm{U(t)}_\infty \le \norm{U(0)}_\infty$. In fact, the solution $\norm{U(t)}_\infty$ can increase in time as depicted in \fgr{increase_u}. We also point out that the formation of this invariant region is precisely due to the nonlinear terms in $f$, and hence the nonlinear parts of the differential equation cannot be neglected, even if the solution remains small.

For the case of the specific reaction-diffusion equation \eq{RDPDE}, the roots of $f(u) = au + bu^{M} = 0$ gives the explicit expression $\gamma = (\frac{|a|}{|b|})^{\frac{1}{M-1}}$.

\begin{figure}
    \centering
    \begin{subfigure}{.5\textwidth}
    \centering
    \includegraphics[scale=0.3]{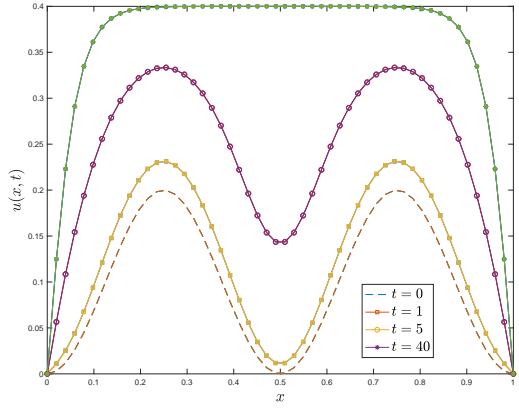}
    \caption{$f(u)=0.4u-u^2$ and $u_0(x)=0.1-0.1\cos(4\pi x)$.}
    \label{fig:sub1}
    \end{subfigure}
    \begin{subfigure}{.5\textwidth}
    \centering
    \includegraphics[scale=0.3]{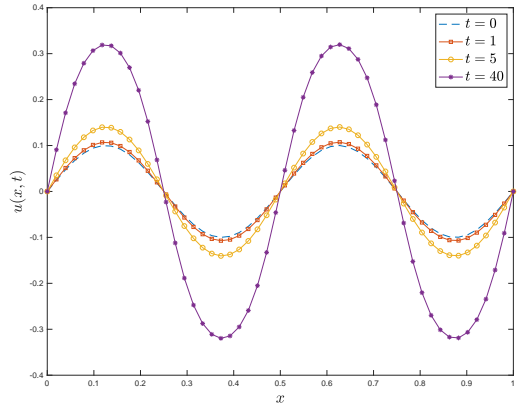}
    \caption{$f(u)=0.16u-u^3$ and $u_0(x)=0.1\sin(4\pi x)$.}
    \label{fig:sub2}
    \end{subfigure}
    \caption{Solutions $U = (U_j(t))$ to \eqref{eq:GAC} with homogeneous Dirichlet boundary condition for different nonlinear terms $f(u)$ and initial conditions $U_j(0) = u_0(x_j)$, where $D=0.005$. }
    \label{fgr:increase_u}
\end{figure}

\subsection{Problem statement} 
We are interested in solving high-dimensional reaction-diffusion equations with quantum computers. Given the initial condition described by a quantum state, we aim to provide a quantum state description of the solution given the evolution time $T$. 

The main computational problem we consider is as follows.
\begin{problem}\label{prb:node}
We consider an initial value problem of an $n_d$-dimensional polynomial ODE on $[0,T]$ as in \eq{NODE} 
\begin{equation}
\frac{\d{U}}{\d{t}} = F_1U + F_MU^{\otimes M}, 
\qquad
U(0) = U_{\mathrm{in}},
\label{eq:NODE}
\end{equation} 
Here $U=[U_1, \ldots, U_{n_d}]^{T}\in\R^{n_d}$, $U^{\otimes M}=[U_1^M, U_1^{M-1}U_2, \ldots, U_{n_d}^{M-1}U_{n_d-1}, U_{n_d}^M]^{T}\in\R^{n_d^M}$.
We assume $F_1$, $F_M$ are $s$-sparse\footnote{have at most $s$ nonzero entries in each column and row.}, $F_1$ is symmetric diagonalizable and eigenvalues of $F_1$ are negative, and $\|F_M\| \le |\lambda_1|$ by rescaling\footnote{given any nonlinear ODE, we can rescale $u\to\alpha u$ with a proper $\alpha$ to ensure $\|F_M\| \le |\lambda_1|$.}. We also know an a priori bound on the solution $\max_t\|U(t))\|_{\infty}\le\gamma$ as in \eq{U_infty_bdd}. 
We have oracles $O_{F_1}$, $O_{F_M}$ that provide the locations and values of the nonzero entries of $F_1$, $F_M$. We also know $\|U_{\mathrm{in}}\|=\|U(0)\|$ and have an oracle $O_x$ that maps $|00\ldots0\rangle \in\C^n$ to a quantum state proportional to $U_{\mathrm{in}}$.
Our goal is to produce a quantum state as a superposition of the solution at different timesteps
\begin{equation}
|\hat Y_{\mathrm{evo}}\rangle = \frac{1}{\|\hat Y_{\mathrm{evo}}\|}\hat Y_{\mathrm{evo}} = \frac{1}{\|\hat Y_{\mathrm{evo}}\|}\sum_{k=0}^{m}\hat y_1(kh)|k\rangle
\label{eq:history}
\end{equation}
with a sufficiently large $m$, where $\hat y_1(t)\in\R^{n_d}$ is a vector function that approximates $U(t)$, $\hat Y_{\mathrm{evo}} = \sum_{k=0}^{m}\hat y_1(kh)|k\rangle$ is a superposition of $\hat y_1(t)$ at different timesteps, and $\|\hat Y_{\mathrm{evo}}\| = \sqrt{\sum_{k=0}^{m}\|\hat y_1(kh)\|^2}$ is the normalization factor. 
\end{problem}

For the reaction-diffusion equation, we have the specific form of \eq{NODE} as
\begin{equation}
\frac{\d{U}}{\d{t}} = (D \Delta_h + aI)U + bU^{.M}, \qquad
U(0) = U_{\mathrm{in}}.
\label{eq:RDODE}
\end{equation}
The corresponding $F_1$ and $F_M$ in \eq{NODE} satisfy $F_1 = D \Delta_h + aI \in \R^{n_d\times n_d}$, and $F_M \in \R^{n_d\times n_d^M}$ maps $u^{\otimes M}$ to $bu^{M}$, with $U^{.M}=[U_1^M, U_2^M, \ldots, U_{n_d}^M]^{T}\in\R^{n^M}$, and henceforth $F_1$ and $F_M$ are $s$-sparse with $s=O(1)$. The representation of the Laplacian matrix $\Delta_h$ with mixed boundary conditions refers to \eq{boundary}, with $d_1$-dimensional Dirichlet boundary conditions and $d_2$-dimensional periodic boundary conditions ($d_1+d_2 = d$). We require the eigenvalues $\lambda_j$ of $F_1$ are negative, i.e., $ 4 D d_1 (n+1)^2\sin^2\left(\frac{\pi}{2n+2}\right) > a$.

The quantum state \eq{history} corresponding with $[U(0), U(h), \ldots, U(mh)]$, also known as the history state in \cite{KPE20}, describes a quantum-state evolution for the gradient flow of the free energy \eq{energy-functional}.

\section{Carleman linearization}
\label{sec:Carleman}

We aim to perform Carleman linearization on discretized nonlinear PDEs \eqref{eq:NODE} and then use quantum linear system solvers to obtain quantum states proportion to the solutions. In this section, we revisit the Carleman linearization procedure and then introduce the improved convergence result.

Defining $\widetilde y_j : = U^{\otimes j}$ for $j = 1, 2 \cdots$, one has
\begin{equation} \label{eqn:y_A}
\frac{d}{dt} \widetilde y_j = A_{j}^j \widetilde y_j  + A_{j+M-1}^j \widetilde y_{j+M-1},
\end{equation}
where
\begin{align}\label{eq:A_block}
A_{j}^j = \sum_{\nu=1}^j   \overset{\text{j factors}}{\overbrace{\Idnd \otimes \cdots \otimes \underset{\underset{\nu-\text{th position}}{\uparrow}}{F_1} \otimes  \cdots  \otimes \Idnd }}, \\
A_{j+M-1}^j = \sum_{\nu=1}^j   \overset{\text{j factors}}{\overbrace{\Idnd \otimes \cdots \otimes \underset{\underset{\nu-\text{th position}}{\uparrow}}{F_M} \otimes  \cdots  \otimes \Idnd }}.
\end{align}
Therefore, the Carleman linearization procedure gives rise to the following infinite-dimensional system
$\widetilde y'(t) = \widetilde A \widetilde y(t),$ where $\widetilde A$ is an infinite-dimensional block upper-triangular matrix
\begin{equation*}
\widetilde A := \begin{pmatrix}
A_1^1 & 0 & \cdots & 0 & A_M^1 & 0& \ldots & 0 & 0 & 0 & \ldots \\ 
0& A_2^2 & 0 & \cdots & 0 & A_{M+1}^2 &\ldots & 0 & 0&0 & \ldots \\
0 & 0 & A_3^3 & 0 & \cdots & 0 & A_{M+2}^3 \ldots  & 0 & 0 & \ldots \\   
\vdots &  \vdots &  \vdots &  \vdots  & \vdots &\vdots & \vdots & \vdots & \vdots & \vdots &\\
\end{pmatrix}. 
\end{equation*}

It follows from the definition of $A_k^j$ that the following inequalities are satisfied. 
\begin{lemma} \label{lem:ineq_Ajk}
For all $j\geq 1$, 
\begin{equation} \label{eqn:ineq_Ajk_Ajj}
    \norm{A^j_{j+M-1}}_2, \norm{A^j_{j+M-1}}_\infty \le j |b|,
\end{equation}
\begin{equation}
        \norm{e^{tA_j^j}}_2 \le e^{j\lambda_1 t} , 
\end{equation}
where $\lambda_1 \coloneqq D d_1 \mu_1 + a$, $\mu_1=-4(n+1)^2\sin^2\left(\frac{\pi}{2n+2}\right)$.
\end{lemma}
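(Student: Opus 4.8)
The plan is to prove each of the three estimates directly from the explicit tensor-sum expressions for $A_j^j$ and $A_{j+M-1}^j$ in \eqref{eq:A_block}, using only submultiplicativity of the relevant operator norms, the triangle inequality, and the structure of the underlying matrices $F_1$ and $F_M$.

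First I would bound $A^j_{j+M-1}$. Since $A^j_{j+M-1}$ is a sum of $j$ terms, each of which is a tensor product of $j-1$ copies of $\Idnd$ with a single copy of $F_M$ in the $\nu$-th slot, I would use the fact that for both the $\ell_2$ and the $\ell_\infty$ operator norms one has $\norm{A \otimes B} = \norm{A}\,\norm{B}$ and $\norm{\Idnd} = 1$, so each summand has norm $\norm{F_M}$. For the reaction-diffusion case $F_M$ simply sends $U^{\otimes M}$ to $b\,U^{.M}$, so $\norm{F_M}_2 = \norm{F_M}_\infty = |b|$; in general the hypothesis $\norm{F_M}\le|\lambda_1|$ and the analogous $\ell_\infty$ statement apply. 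The triangle inequality over the $j$ summands then gives $\norm{A^j_{j+M-1}}_2, \norm{A^j_{j+M-1}}_\infty \le j|b|$, which is \eqref{eqn:ineq_Ajk_Ajj}.

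Next I would bound $\norm{e^{tA_j^j}}_2$. The key observation is that $A_j^j = \sum_{\nu=1}^j \Idnd\otimes\cdots\otimes F_1\otimes\cdots\otimes\Idnd$ is exactly the generator of the $j$-fold tensor product semigroup, so $e^{tA_j^j} = e^{tF_1}\otimes\cdots\otimes e^{tF_1}$ ($j$ factors); this follows because the individual summands commute and $e^{t(\Idnd\otimes\cdots\otimes F_1\otimes\cdots)} = \Idnd\otimes\cdots\otimes e^{tF_1}\otimes\cdots\otimes\Idnd$. Taking $\ell_2$ operator norms and using multiplicativity over tensor products, $\norm{e^{tA_j^j}}_2 = \norm{e^{tF_1}}_2^j$. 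Since $F_1 = D\Delta_h + aI$ is symmetric with largest eigenvalue $\lambda_1 = Dd_1\mu_1 + a$ (the Laplacian $\Delta_h$ being a sum of $d$ one-dimensional pieces whose top eigenvalue is $\mu_1$ on the $d_1$ Dirichlet directions and $0$ on the periodic directions, so the largest eigenvalue of $\Delta_h$ is $d_1\mu_1$), we get $\norm{e^{tF_1}}_2 = e^{t\lambda_1}$ for $t\ge 0$, hence $\norm{e^{tA_j^j}}_2 \le e^{j\lambda_1 t}$.

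The only genuine subtlety — and the step I would be most careful about — is the claim $\norm{e^{tF_1}}_2 = e^{\lambda_1 t}$: one must use that $F_1$ is symmetric (hence diagonalizable by an orthogonal matrix with real eigenvalues), so that the $\ell_2$ operator norm of $e^{tF_1}$ equals $e^{t\,\lambda_{\max}(F_1)}$ exactly, and that $\lambda_{\max}(F_1) = \lambda_1$, which in turn requires the eigenvalue computation for $\Delta_h$ with the mixed Dirichlet/periodic boundary conditions. Everything else is routine manipulation of tensor products, so I would simply record the commutation and norm-multiplicativity facts and let the three bounds follow.
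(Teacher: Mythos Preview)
Your proposal is correct and is precisely the natural argument the paper has in mind: the paper does not give an explicit proof of this lemma, stating only that ``it follows from the definition of $A_k^j$,'' and your tensor-product/triangle-inequality derivation together with the identification $e^{tA_j^j}=(e^{tF_1})^{\otimes j}$ and the symmetry of $F_1=D\Delta_h+aI$ fills in exactly those details. The one point worth recording carefully, as you note, is the eigenvalue computation $\lambda_{\max}(\Delta_h)=d_1\mu_1$ under the mixed boundary conditions, but otherwise there is nothing to add.
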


We then truncate the above infinite-dimensional system of linear ODEs at order $N$, thereby obtaining a finite system
\begin{equation}
  \frac{\d{\hat y}}{\d{t}} = A \hat y, \qquad
  \hat y(0) = \hat y_{\mathrm{in}}
\label{eq:LODE}
\end{equation}
with the upper triangular block structure
\begin{equation}
\frac{\d{}}{\d{t}}
  \begin{pmatrix}
    \hat y_1 \\
    \hat y_2 \\
    \vdots \\
    \vdots \\
    \vdots \\
    \hat y_{N-1} \\
    \hat y_N \\
  \end{pmatrix}
=
  \begin{pmatrix}
    A_1^1 & 0 & \cdots & 0 & A_M^1 &  &  \\
     & A_2^2 & \ddots &  & \ddots & \ddots &  \\
     &  & \ddots & \ddots &  & \ddots & A_N^{N-M+1} \\
     &  &  & \ddots & \ddots &  & 0 \\
     &  &  &  & \ddots & \ddots & \vdots \\
     &  &  &  &  & A_{N-1}^{N-1} & 0 \\
     &  &  &  &  &  & A_N^N \\
  \end{pmatrix}
  \begin{pmatrix}
    \hat y_1 \\
    \hat y_2 \\
    \vdots \\
    \vdots \\
    \vdots \\
    \hat y_{N-1} \\
    \hat y_N \\
  \end{pmatrix}.
\label{eq:UODE}
\end{equation}
Here, $\hat y_j=U^{\otimes j}\in\R^{n_d^j}$, $\hat y_{\mathrm{in}}=[U_{\mathrm{in}}; U_{\mathrm{in}}^{\otimes 2}; \ldots; U_{\mathrm{in}}^{\otimes N}]$, and $A_j^j \in \R^{n_d^j\times n_d^j}$, $A_{j+1}^j \in \R^{n_d^j\times n_d^{j+1}}$ for $j\in\range{N}$ are defined as \eq{A_block}.
Note that $A$ is an $(Ns)$-sparse matrix, where $s$ is the largest nonzero number of each column and row in $F_1$ and $F_M$. The dimension of \eq{LODE} is denoted as
\begin{equation}
  \mathcal{N}_{d,N} \coloneqq n_d+n_d^2+\cdots+n_d^N=\frac{n_d^{N+1}-n_d}{n_d-1}=\Or(n^{Nd}).
\end{equation}

Denote the solution to the truncated system as $\hat y_j$ ($j = 0, 1, \cdots$) and define the error resulting from the truncation as 
\begin{equation}
\eta_j(t) \coloneqq \widetilde y_j(t) - \hat{y}_j(t) = U^{\otimes j}(t) - \hat{y}_j(t).
\end{equation} 
In particular, $\eta_1(t) = \widetilde y_1(t) - \hat{y}_1(t)  = U(t) - \hat{y}_1(t)$ is the error due to the Carleman linearization procedure.

\begin{theorem}[$\ell_2$ Convergence of the Carleman Linearization] \label{thm:Carleman_0}
For the discrete reaction-diffusion equation \eq{RDODE} with mixed boundary conditions in \eq{boundary}, as originally proposed in \cite{LKK20}, we define
\begin{equation}
\begin{aligned}
R &=  \frac{\|F_M\|}{|\lambda_1|}\|U_{\mathrm{in}}\|^{M-1},\\
\overline{R} &= \frac{\|F_M\|}{|\lambda_1|}\max_t\|U(t)\|^{M-1}.
\end{aligned}
\label{eq:A}
\end{equation}
Suppose that the largest eigenvalue of $D\Delta_h+a I$, denoted by $\lambda_1$, is negative. Then the approximation error of the Carleman linearization satisfies
\begin{equation}
\norm{\eta_j(t)} \le \max_t\|U(t)\|^{j}\overline{R}^{\lceil \frac{N-j+1}{M-1} \rceil} \left(1 - e^{j\lambda_1 t} \right) 
\end{equation}
for $t\geq 0$. In particular, if $N$ is some integer multiple of $M-1$, then the error of the solution $\eta_1(t) = y_1(t) - \hat{y}_1(t)  = U(t) - \hat U(t)$ satisfies
\begin{equation} \label{eq:carleman_0_eta1}
\norm{\eta_1(t)} \le \max_t\|U(t)\| \overline{R}^{ \frac{N}{M-1} } \left(1 - e^{\lambda_1 t} \right) 
\end{equation}
for $t\geq 0$. Furthermore, if $R\leq1$, then $R = \overline{R}$.
\end{theorem}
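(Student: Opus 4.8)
The plan is to bound $\eta_j(t)$ by a Duhamel / variation-of-parameters argument applied to the truncated Carleman system, exploiting the block upper-triangular structure of $A$ together with the norm estimates in \lem{ineq_Ajk}. First I would write down the exact ODE satisfied by the truncation error. Since $\widetilde y_j$ solves \eqref{eqn:y_A} exactly while $\hat y_j$ solves the truncated system \eqref{eq:UODE}, subtracting gives, for $1 \le j \le N$,
\begin{equation*}
\frac{d}{dt}\eta_j = A_j^j \eta_j + A_{j+M-1}^j \eta_{j+M-1}, \qquad \eta_j(0) = 0,
\end{equation*}
with the convention that $\eta_{j+M-1} \equiv \widetilde y_{j+M-1}$ (i.e. the ``error'' is the whole block) whenever $j+M-1 > N$, because those higher blocks are simply dropped from the truncated system. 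So the error in block $j$ is fed only by the error in block $j+M-1$, and the recursion terminates at the first block index exceeding $N$, which occurs after $\lceil (N-j+1)/(M-1)\rceil$ steps — this is exactly the exponent appearing in the claimed bound.

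Next I would solve this recursion from the top down. For the top blocks where $j + M - 1 > N$, Duhamel gives $\eta_j(t) = \int_0^t e^{(t-s)A_j^j} A_{j+M-1}^j \widetilde y_{j+M-1}(s)\, ds$. Using $\|e^{tA_j^j}\|_2 \le e^{j\lambda_1 t}$ and $\|A_{j+M-1}^j\|_2 \le j|b|$ from \lem{ineq_Ajk}, together with the a priori bound $\|\widetilde y_{j+M-1}(s)\|_2 = \|U(s)^{\otimes(j+M-1)}\|_2 \le \max_t\|U(t)\|^{j+M-1}$, and integrating $\int_0^t e^{j\lambda_1(t-s)}\,ds = \frac{1-e^{j\lambda_1 t}}{j|\lambda_1|}$, the factor $j|b|$ cancels against $j|\lambda_1|$ up to $\|F_M\|/|\lambda_1|$ (noting $|b| = \|F_M\|$ for this equation, or more generally $\|A_{j+M-1}^j\| \le j\|F_M\|$), producing exactly one factor of $\overline R = \frac{\|F_M\|}{|\lambda_1|}\max_t\|U(t)\|^{M-1}$ times $(1-e^{j\lambda_1 t})$. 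Then I would induct downward: assuming $\|\eta_{j+M-1}(t)\| \le \max_t\|U(t)\|^{j+M-1}\,\overline R^{\,k-1}(1-e^{(j+M-1)\lambda_1 t})$ where $k = \lceil(N-j+1)/(M-1)\rceil$, apply Duhamel to block $j$ again, bound $(1-e^{(j+M-1)\lambda_1 s}) \le 1$ (or more carefully $\le 1 - e^{j\lambda_1 s}$-ish after the integral, but the crude bound suffices to get the stated estimate), and collect another factor of $\overline R$ and the $(1-e^{j\lambda_1 t})$ envelope. Setting $j=1$ and $N$ a multiple of $M-1$ gives $k = N/(M-1)$ and the displayed bound \eqref{eq:carleman_0_eta1}. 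The final sentence, that $R \le 1 \Rightarrow R = \overline R$, follows because under $R<1$ the exact solution norm $\|U(t)\|$ never exceeds $\|U_{\mathrm{in}}\|$ (a dissipativity estimate: $\frac{d}{dt}\|U\|^2 = 2\langle U, F_1 U\rangle + 2\langle U, F_M U^{\otimes M}\rangle \le 2\lambda_1\|U\|^2 + 2\|F_M\|\|U\|^{M+1}$, which is negative as long as $\|U\| \le \|U_{\mathrm{in}}\|$ and $R<1$, so $\max_t\|U(t)\| = \|U_{\mathrm{in}}\|$ and the two definitions coincide; one should double check whether it is really $R\le 1$ or $R<1$ that is needed here, and handle the boundary case).

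The main obstacle I expect is getting the induction step clean: the naive composition of Duhamel integrals produces nested factors like $\prod_\ell \frac{1-e^{(j+\ell(M-1))\lambda_1 t}}{\cdot}$ rather than a single clean $(1-e^{j\lambda_1 t})$, so some care is needed to bound each inner $(1-e^{\cdot})$ factor by $1$ while keeping the outermost one to form the stated envelope, and to make sure the cancellation of the combinatorial prefactor $j$ against $|\lambda_1|^{-1}$ in $\int_0^t e^{j\lambda_1(t-s)}ds$ is tracked consistently across all levels of the recursion. A secondary subtlety is the bookkeeping of which block indices get truncated — whether the ceiling function is exactly right requires checking that block $j$ reaches a truncated block precisely when $j + (M-1)(k-1) \le N < j + (M-1)k$. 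Both issues are routine but must be done carefully; I would organize the argument as a single downward induction on the block index with the induction hypothesis stated in the form displayed in the theorem.
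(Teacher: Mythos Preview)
Your proposal is correct and follows essentially the same approach as the paper: the paper's proof sets up the same error ODE, applies Duhamel for the top blocks $N-M+2 \le j \le N$, uses the bounds from \lem{ineq_Ajk} to extract one factor of $\overline{R}$ together with the envelope $(1-e^{j\lambda_1 t})$, and then inducts downward by groups of $M-1$, bounding each inner $(1-e^{\cdot})$ factor by $1$ exactly as you anticipate. For the final claim $R\le 1 \Rightarrow R=\overline{R}$, the paper invokes \lem{l2_estimate2}, whose proof is precisely the energy/Bernoulli-type computation on $\frac{d}{dt}\|U\|^2$ that you sketch.
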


The detailed proof is presented in \sec{proof_Carleman_0}. This theorem for polynomial differential equations is a straightforward extension of the quadratic case in \cite[Corollary 1]{LKK20} and implies exponential convergence in the order of truncation $N$ in terms of $\ell_2$ norm as long as $R \leq 1$, that is
\begin{equation}
    \frac{\|F_M\|}{|\lambda_1|}\|U_{\mathrm{in}}\|^{M-1} = \frac{|b|}{|\lambda_1|}\|U_{\mathrm{in}}\|^{M-1} < 1,
\end{equation}
as shown in \eq{A}. However, if $R>1$, we only have $R \le \overline{R}$, and $\overline{R}$ is the exact convergence radius. 

\begin{theorem}[$\ell_{\infty}$ Convergence of the Carleman Linearization] \label{thm:Carleman}
For the discrete reaction-diffusion equation \eq{RDODE} with mixed boundary conditions as proposed in \eq{boundary}, we define
\begin{equation}
\begin{aligned}
R_{D} \coloneqq \frac{|b|}{|\lambda_1|}\gamma^{M-1}C(\lambda).
\label{eq:A1}
\end{aligned}
\end{equation}
Here $\gamma = (\frac{|a|}{|b|})^{\frac{1}{M-1}}$, and $\gamma$ upper bounds $\|U(t)\|_{\infty}$ for all $t\ge 0$, as stated in \lem{apriori}; the constant $C(\lambda)$ has the form
\begin{equation}\label{eq:E}
\begin{aligned}
C(\lambda) \coloneqq
    \begin{cases}
        \frac{|\lambda_1|}{a} (e^{\frac{\ln(3) d_1}{2(\lambda-\lambda_1)}a }-1) + \frac{|\lambda_1|}{|\lambda|}, & a \neq 0, \\
        \frac{\ln(3)d_1}{2(\lambda-\lambda_1)} |\lambda_1| + \frac{|\lambda_1|}{|\lambda|}, &  a = 0,
    \end{cases}
\end{aligned}
\end{equation}
where $\mu_1=-4(n+1)^2\sin^2\left(\frac{\pi}{2n+2}\right)$, $\lambda_1 = D d_1 \mu_1 + a < 0$, and $\lambda$ is an arbitrary value satisfying $\lambda_1<\lambda < 0$.
Then the approximation error of the Carleman linearization satisfies
\begin{equation}
\norm{\eta_j(t)}_{\infty} \le \gamma^{j}R_{D}^{\lceil \frac{N-j+1}{M-1} \rceil}
\end{equation}
for $t\geq 0$, with $R_{D}$ defined as \eq{A1}. In particular, if $N$ is some integer multiple of $M-1$, then the error of the solution $\eta_1(t) = y_1(t) - \hat{y}_1(t)  = U(t) - \hat U(t)$ satisfies
\begin{equation} \label{eq:carleman_eta1}
\norm{\eta_1(t)}_{\infty} \le \gamma R_{D}^{ \frac{N}{M-1} }
\end{equation}
for $t\geq 0$. 
\end{theorem}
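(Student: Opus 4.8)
The plan is to mimic the structure of the $\ell_2$ proof (\thm{Carleman_0}) but replace every use of the $\ell_2$ semigroup bound $\norm{e^{tA_j^j}}_2\le e^{j\lambda_1 t}$ with a corresponding $\ell_\infty$ estimate, and to replace the growing quantity $\max_t\norm{U(t)}$ by the $T$-independent bound $\gamma$ from \lem{apriori}. First I would write down the integral (Duhamel) equation for the truncation error: from \eqref{eqn:y_A} and the truncated system \eqref{eq:UODE}, the errors $\eta_j(t)=\widetilde y_j(t)-\hat y_j(t)$ satisfy
\begin{equation}
\eta_j(t) = \int_0^t e^{(t-\tau)A_j^j}\, A_{j+M-1}^j\, \eta_{j+M-1}(\tau)\,\d\tau
\end{equation}
for $j\le N-M+1$, while for $j>N-M+1$ the term $\widetilde y_{j+M-1}$ is simply dropped, so $\eta_j(t)=\int_0^t e^{(t-\tau)A_j^j}A_{j+M-1}^j\,\widetilde y_{j+M-1}(\tau)\,\d\tau$; here $\widetilde y_{j+M-1}(\tau)=U^{\otimes(j+M-1)}(\tau)$, whose $\ell_\infty$ norm is at most $\gamma^{j+M-1}$ by \lem{apriori}. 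Taking $\ell_\infty$ norms and using $\norm{A_{j+M-1}^j}_\infty\le j|b|$ from \lem{ineq_Ajk} reduces everything to controlling $\int_0^t \norm{e^{(t-\tau)A_j^j}}_\infty\,\d\tau$.

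The key new ingredient is an $\ell_\infty$ bound on the semigroup generated by $A_j^j=\sum_{\nu}I\otimes\cdots\otimes F_1\otimes\cdots\otimes I$. Since $F_1=D\Delta_h+aI$ and the $d_1$ Dirichlet blocks are what provide strict dissipation, I would factor $e^{tA_j^j}$ into a tensor product of $j$ copies of $e^{tF_1}$ (they commute across tensor slots), so $\norm{e^{tA_j^j}}_\infty=\norm{e^{tF_1}}_\infty^{\,j}$. Then I need $\norm{e^{tF_1}}_\infty$. Writing $F_1=a I + D\Delta_h$, we have $\norm{e^{tF_1}}_\infty = e^{at}\norm{e^{tD\Delta_h}}_\infty$, and $\Delta_h$ is a sum of $d$ commuting one-dimensional Laplacians, of which $d_1$ are Dirichlet (with $\norm{e^{tD D_h^{\mathrm{Dir}}}}_\infty\le$ something decaying) and $d_2$ are periodic ($D_h^{\mathrm{per}}$ is a stochastic-type generator, so $\norm{e^{tDD_h^{\mathrm{per}}}}_\infty\le 1$). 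For the Dirichlet one-dimensional heat semigroup one does not get a clean $e^{\mu_1 t}$ in $\ell_\infty$ — instead one interpolates: $e^{tDD_h^{\mathrm{Dir}}}$ is a sub-Markovian (entrywise nonnegative, row sums $\le 1$) semigroup, so $\norm{\cdot}_\infty\le 1$ for all $t$, but for \emph{large} $t$ one can upgrade this using $\norm{\cdot}_\infty\le \sqrt{n}\,\norm{\cdot}_2\le\sqrt{n}\,e^{\mu_1 D t}$ and the spectral gap. Splitting time at a threshold $t_\ast = \frac{\ln 3\,\cdot(\text{stuff})}{2(\lambda-\lambda_1)}$ — which is exactly where the $\ln(3)$ and the $\lambda-\lambda_1$ in $C(\lambda)$ come from — one obtains, after combining all $d_1$ Dirichlet factors and the $a$-shift, a bound of the form $\int_0^\infty \norm{e^{\tau F_1}}_\infty\,\d\tau \le C(\lambda)/|\lambda_1|$, with $C(\lambda)$ as in \eqref{eq:E} (the two cases $a\neq 0$, $a=0$ arising from $\int_0^{t_\ast} e^{a\tau}\d\tau$ versus $\int_0^{t_\ast}\d\tau$). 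I would prove this semigroup integral bound as a standalone sub-lemma.

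With $\int_0^t\norm{e^{(t-\tau)A_j^j}}_\infty\,\d\tau \le \int_0^\infty\norm{e^{\tau F_1}}_\infty^{\,j}\,\d\tau \le \frac{C(\lambda)}{j|\lambda_1|}$ (using that raising a decaying integrand to the $j$-th power only helps, and a careful bookkeeping that the $j$-fold product still integrates to at most $C(\lambda)/(j|\lambda_1|)$), the recursion becomes
\begin{equation}
\norm{\eta_j(t)}_\infty \le \frac{C(\lambda)}{j|\lambda_1|}\cdot j|b| \cdot \sup_\tau\norm{\eta_{j+M-1}(\tau)}_\infty = \frac{|b|C(\lambda)}{|\lambda_1|}\sup_\tau\norm{\eta_{j+M-1}(\tau)}_\infty,
\end{equation}
and for the top block the same computation gives $\norm{\eta_j(t)}_\infty\le \frac{|b|C(\lambda)}{|\lambda_1|}\gamma^{j+M-1}$. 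Substituting $\gamma^{M-1}=|a|/|b|$ is what turns $\frac{|b|C(\lambda)}{|\lambda_1|}\gamma^{M-1}$ into exactly $R_D$. Iterating the recursion from level $j$ up to the top in steps of $M-1$ — there are $\lceil (N-j+1)/(M-1)\rceil$ such steps — yields $\norm{\eta_j(t)}_\infty\le \gamma^j R_D^{\lceil (N-j+1)/(M-1)\rceil}$, and setting $j=1$ with $N$ a multiple of $M-1$ gives \eqref{eq:carleman_eta1}. I expect the main obstacle to be the $\ell_\infty$ semigroup estimate and getting the constant $C(\lambda)$ exactly right: unlike the $\ell_2$ case there is no single clean exponential decay rate for the Dirichlet heat semigroup in $\ell_\infty$, so the time-splitting argument (short time: sub-Markovian contraction; long time: $\ell_2$-to-$\ell_\infty$ Sobolev-type bound with the spectral gap) and the bookkeeping of the free parameter $\lambda\in(\lambda_1,0)$ is where all the delicacy lives; the rest is a routine Gronwall/recursion argument parallel to \cite{LKK20}.
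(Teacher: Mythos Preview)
Your overall architecture is exactly the paper's: Duhamel for $\eta_j$, the bound $\|A_{j+M-1}^j\|_\infty \le j|b|$, the maximum-principle input $\|\widetilde y_{j+M-1}\|_\infty \le \gamma^{j+M-1}$, and the backward recursion in steps of $M-1$. The only nontrivial ingredient is the $\ell_\infty$ semigroup integral, and there your proposal has two concrete gaps.

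First, the long-time estimate via $\|\cdot\|_\infty \le \sqrt{n}\,\|\cdot\|_2$ introduces an $n$-dependent prefactor: matching $\sqrt{n}\,e^{\mu_1 D t}$ against the sub-Markov bound $1$ forces the crossover time to carry $\ln n$, not $\ln 3$, so your $C(\lambda)$ would grow under grid refinement --- exactly what the theorem is designed to rule out. The paper instead expands $e^{tD_h^{\mathrm{Dir}}}$ explicitly in the sine eigenbasis (\lem{Delta_h_estimate}) and bounds the resulting series by the $n$-independent constant $\sqrt{3}$, which is where the $\ln 3$ in $C(\lambda)$ originates.

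Second, and more seriously, the claim $\int_0^\infty \|e^{\tau F_1}\|_\infty^{\,j}\,d\tau \le C(\lambda)/(j|\lambda_1|)$ because ``raising a decaying integrand to the $j$-th power only helps'' fails whenever $a \ge 0$. Since the sub-Markov bound gives only $\|e^{\tau D\Delta_h}\|_\infty \le 1$ on $[0,t_\ast]$, one has $\|e^{\tau F_1}\|_\infty \le e^{a\tau}$ there, which \emph{grows} when $a>0$; the $j$-th power then contributes $\int_0^{t_\ast} e^{ja\tau}d\tau = (e^{jat_\ast}-1)/(ja)$, and even for $a=0$ the contribution is $t_\ast$, not $t_\ast/j$. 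Hence the factor $j|b|\cdot\int \|e^{\tau A_j^j}\|_\infty d\tau$ does not collapse to a $j$-independent constant, and the recursion fails to produce the clean $R_D^{\lceil\cdot\rceil}$. The paper instead integrates $\|e^{j\tau(D\Delta_h+a)}\|_\infty$ (\lem{exp_decay_RD}), whose crossover time is $t_\ast/j$ so that the integral does scale as $1/j$; bridging from $\|e^{\tau A_j^j}\|_\infty = \|e^{\tau F_1}\|_\infty^{\,j}$ to that quantity is the step you are missing, and note that it is \emph{not} a consequence of submultiplicativity (which gives the reverse inequality).
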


This theorem implies an alternative exponential convergence in the order of truncation $N$ in terms of $\ell_{\infty}$ norm as long as $R_{D} <1$, that is, 
\begin{equation}
    \frac{|b|}{|\lambda_1|}\gamma^{M-1}C(\lambda)< 1,
\end{equation}
as shown in \eq{A1}.

\begin{proof}
The truncation error $\eta_j$ satisfies the equation
\begin{equation}
\label{eqn:eta_A}
\eta'_j(t) = A^j_j \eta_j(t) + A^j_{j+M-1}\left( y_{j+M-1}(t)-\hat{y}_{j+M-1}(t)\delta_{j+M-1 \le N} \right), \qquad 1 \le j \le N.
\end{equation}
Applying the variation of constants formula \cite{BrauerNohel2012} to \eqref{eqn:eta_A}, one has
\begin{align}
\eta_j(t) = \int_0^t e^{A^j_j (t-s)} A^j_{j+M-1} y_{j+M-1}(s) \,d s, \quad  N-M+2 \le j \le N.
\end{align}
Note that it follows from \lem{apriori} that
\[
\norm{y_{j+M-1}(s)}_\infty = \norm{U^{\otimes(j+M-1)}(s)}_{\infty}  
\le \norm{U(s)}^{j+M-1}_\infty
\le \gamma^{j+M-1}.
\]
Therefore, we have for $N-M+2 \le j \le N$,
\begin{equation}
\begin{aligned}
    \norm{\eta_j(t)}_\infty 
    & \le  \int_0^t \norm{e^{A^j_j (t-s)}}_\infty \norm{A^j_{j+M-1}}_\infty \norm{y_{j+M-1}(s)}_\infty \,d s  \\
    & \le  j|b| \gamma^{j+M-1} \int_0^t \norm{e^{A^j_j (t-s)}}_{\infty}  \,d s  \\
    & \le  j|b| \gamma^{j+M-1} \int_0^t \|e^{j(t-s) (D \Delta_h + a)}\|_{\infty}  \,d s.
\end{aligned}
\end{equation}
For simplicity, we denote
\begin{equation}
    C_j(t) \coloneqq j|\lambda_1|\int_0^t \|e^{j(t-s) (D \Delta_h + a)}\|_{\infty}  \,d s
\end{equation}
such that for $N-M+2 \le j \le N$,
\begin{equation}
\begin{aligned}
    \norm{\eta_j(t)}_\infty  \le  \frac{|b|}{|\lambda_1|}\gamma^{j+M-1}C_j(t).
\end{aligned}
\end{equation}

Next, for $N-2M+3 \le j \le N-M+1$,
\begin{equation}
\begin{aligned}
    \norm{\eta_j(t)}_\infty 
    & \le  \int_0^t \norm{e^{A^j_j (t-s)}}_\infty \norm{A^j_{j+M-1}}_\infty \norm{\eta_{j+M-1}(s)}_\infty \,d s  \\
    & \le  \frac{|b|}{|\lambda_1|}\gamma^{j+2M-2}C_{j+M-1}(t)\int_0^t \norm{e^{A^j_j (t-s)}}_\infty \norm{A^j_{j+M-1}}_\infty  \,d s  \\
    & \le \left( \frac{|b|}{|\lambda_1|}\right)^2 \gamma^{j+2M-2}C_{j+M-1}(t)C_j(t).
\end{aligned}
\end{equation}
One can continue by mathematical induction for every group of $M-1$ terms and obtain
\begin{equation}
\begin{aligned}
    \norm{\eta_j(t)}_\infty 
  & \le  \int_0^t \norm{e^{A^j_j (t-s)}}_\infty \norm{A^j_{j+M-1}}_\infty \norm{\eta_{j+M-1}(s)}_\infty \,d s  \\
  & \le \left(\frac{|b|}{|\lambda_1|}\right)^{\lceil \frac{N-j+1}{M-1} \rceil}\gamma^{j+(M-1)\lceil \frac{N-j+1}{M-1} \rceil} \prod_{k=1}^{\lceil \frac{N-j+1}{M-1} \rceil}C_{j+(M-1)(k-1)}(t). 
  \label{eq:periodic-error}
\end{aligned}
\end{equation}

We now consider an upper bound on $C_j(t)$. By computing the integration of $\|e^{j(t-s) (D \Delta_h + a)}\|_{\infty}$ in \lem{exp_decay_RD}, we have
\begin{equation}
\begin{aligned}
    \int_0^t \|e^{j(t-s) (D \Delta_h + a)}\|_{\infty}  \,d s \le \begin{cases}
        \frac{1}{ja} \left(e^{\frac{\ln(3)d_1}{2(\lambda-\lambda_1)}a}-1\right) + \frac{1}{j|\lambda|}, & a \neq 0, \\
        \frac{\ln(3)d_1}{2j(\lambda-\lambda_1)} + \frac{1}{j|\lambda|}, &  a = 0.
        \end{cases}
\end{aligned}
\end{equation}
Therefore, 
\begin{equation}
\begin{aligned}
    C_j(t) 
    & = j|\lambda_1|\int_0^t \|e^{j(t-s) (D \Delta_h + a)}\|_{\infty}  \,d s \le
    \begin{cases}
        \frac{|\lambda_1|}{a} \left(e^{\frac{\ln(3)d_1}{2(\lambda-\lambda_1)}a}-1\right) + \frac{|\lambda_1|}{|\lambda|}, & a \neq 0, \\
        \frac{\ln(3)d_1}{2(\lambda-\lambda_1)}|\lambda_1| + \frac{|\lambda_1|}{|\lambda|}, &  a = 0,
    \end{cases}
\end{aligned}
\end{equation}
where the right-hand side is exactly $C(\lambda)$ in \eq{E} and is independent of both $j$ and $t$.

Finally, substituting $C_j(t)$ with its upper bound $C(\lambda)$ in \eq{periodic-error} gives
\begin{equation}
\begin{aligned}
    \norm{\eta_j(t)}_\infty \le \left(\frac{|b|}{|\lambda_1|}\right)^{\lceil \frac{N-j+1}{M-1} \rceil}\gamma^{j+(M-1)\lceil \frac{N-j+1}{M-1} \rceil} C(\lambda)^{\lceil \frac{N-j+1}{M-1} \rceil } \le \gamma^{j}R_{D}^{\lceil \frac{N-j+1}{M-1} \rceil} ,
\end{aligned}
\end{equation}
where we use $R_D = \frac{|b|}{|\lambda_1|}\gamma^{M-1}C(\lambda)$ as defined in \eq{A1}. Therefore, a sufficient condition for the convergence of $\norm{\eta_j(t)}_\infty$ in $N$ is $R_D<1$. 

In practice, we can set $N$ as some integer multiple of $M-1$, and thus
\begin{equation}
\norm{\eta_1}_{\infty} \le \gamma R_{D}^{ \frac{N}{M-1} }.
\end{equation}
This completes the proof of the desired result.
\end{proof}

\textbf{Remark.}  
According to \thm{Carleman_0} and \thm{Carleman}, the Carleman linearized solution $\hat y_1(t)$ approximates the exact solution $U(kh)$ of the original reaction-diffusion equations with exponential convergence rate in terms of the convergence radius $R$ or $R_D$. 

The quantities $R$ and $R_{D}$ are used to characterize the ratios of \textit{reaction} and \textit{diffusion} strengths in terms of $\ell_{\infty}$ and $\ell_2$ norms. Here we briefly discuss the relationship between these. In particular, we are interested in the case where the convergence radius $R_D$ has an advantage over $R$. Note that $R_D\leq R $ is equivalent to $ C(\lambda) \leq \left(\frac{\|U_{\mathrm{in}}\|}{\gamma}\right)^{M-1}$ and that $\lambda$ can be an arbitrary number between $\lambda_1$ and 0. Hence the regime of our interest turns out to be
\begin{equation}\label{eq:optRD}
    \min_{\lambda_1<\lambda<0}C(\lambda) \leq \left(\frac{\|U_{\mathrm{in}}\|}{\gamma}\right)^{M-1}, 
\end{equation}
which holds true for a large regime of parameters in the high-dimensional or finely discretized scenarios because the right-hand side is likely to grow rapidly in $n$ and $d$ while the left-hand side only has a weak dependence. 

Specifically, according to \sec{preconstant}, solving the optimization problem on the left-hand side of \eq{optRD} helps to obtain a sharper estimate of $R_D$. When $a=0$, the minimum of $C(\lambda)$ has an explicit expression. As for $a\ne 0$,  we advise tuning $\lambda$ for a sharper estimate of $R_D$ in real applications, since the optimization problem is hard to solve explicitly. 
Nevertheless, in both cases, we can show that there exists an upper bound  of  $\min_{\lambda_1<\lambda<0} C(\lambda)$ which is also independent of $n$. For sufficiently large $n$, the quantity $\min_{\lambda_1<\lambda<0} C(\lambda)$ is $\Or(d)$, where $d$ is the spatial dimension.

For $n_d$-dimensional vectors, $\|U_{\mathrm{in}}\|$ can be significantly larger than $\gamma$ due to the inequality $\|U_{\mathrm{in}}\|_{\infty} \le \|U_{\mathrm{in}}\| \le \sqrt{n^d}\|U_{\mathrm{in}}\|_{\infty}$. First, when $d$ is fixed, $\|U_{\mathrm{in}}\|$ has a polynomial growth with $n$ in the worst case and then $R_D<R$ for large $n$. Second, when $n$ is large enough and fixed, $\|U_{\mathrm{in}}\|$ increases exponentially with $d$ in the worst case while $\min_{\lambda_1<\lambda<0}C(\lambda)$ grows at most linearly with $d$. Therefore, $R_D$ is smaller than $R$ for large $d$ as well.

For the case of grid refinement, the above results also show that $R_D$ stays bounded in the continuum limit $n_d\to\infty$, while $R$ diverges. 

\subsection{Numerical results}
In this subsection, we present some numerical results to examine the effectiveness of Carleman linearization. 
\begin{figure}[htbp]
    \centering
    \includegraphics[scale=0.4]{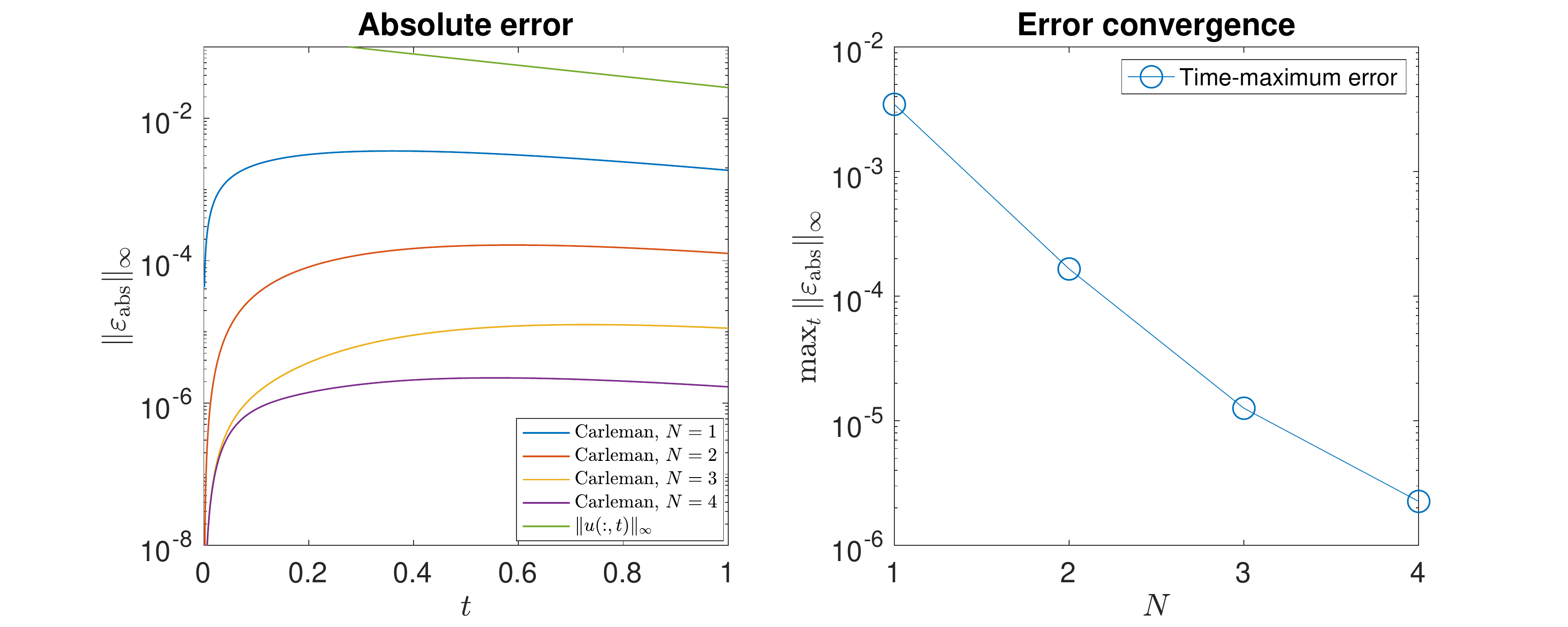}
    \caption{Applying Carleman linearization to \eq{NPDE} on a classical computer,  where $D=0.2$, $f(u)=0.2u-u^2$, the initial condition is $u(x,0)=0.1\left(1-\cos(2\pi x)\right)$ and homogeneous Dirichlet boundary conditions are imposed. Left: $l_{\infty}$ norm of the absolute error between the Carleman solutions at various truncation levels $N$. Right: the convergence of the corresponding time-maximum error.}
    \label{fgr:converge_FKPP}
\end{figure}

\begin{figure}[htbp]
    \centering
    \includegraphics[scale=0.4]{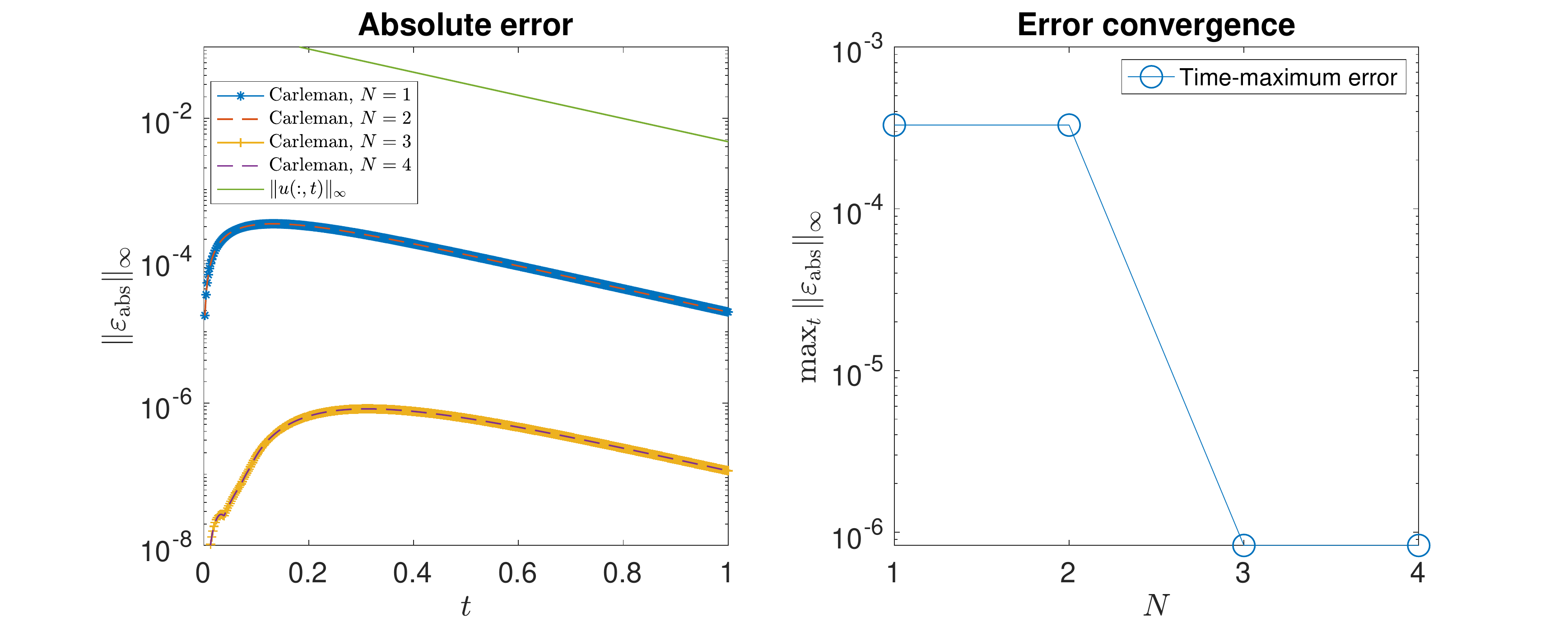}
    \caption{Applying Carleman linearization to \eq{NPDE} on a classical computer,  where $D=0.1$, $f(u)=0.16u-u^3$, the initial condition is $u(x,0)=0.2\sin(2\pi x)$ and homogeneous Dirichlet boundary conditions are imposed. Left: $l_{\infty}$ norm of the absolute error between the Carleman solutions at various truncation levels $N$. Right: the convergence of the corresponding time-maximum error.}
    \label{fgr:converge_AC}
\end{figure}

In order to demonstrate the convergence of Carleman linearization (Theorem \ref{thm:Carleman}), we apply our algorithm to \eq{NPDE} with different types of nonlinearity $f(u)$. In the first example, the nonlinear term is $f(u)=0.2u-u^2$, the Fisher-KPP type. We assume $D=0.2$, choose $u(x,0)=0.1\left(1-\cos(2\pi x)\right)$ as the initial condition, and impose homogeneous Dirichlet boundary conditions. In our second example, $f(u)=0.16u-u^3$, $D=0.1$, $u(x,0)=0.2\sin(2\pi x)$, and homogeneous Dirichlet boundary conditions are again used. The numerical results for these two examples are depicted in \fgr{converge_FKPP} and \fgr{converge_AC}, respectively. We see from the error convergence plots that the absolute error, maximized over $t \in [0,1]$, decreases exponentially as the truncation level $N$ is increased. As a function of the time $t$, the absolute error first increases and then decreases exponentially due to the decay of the exact solution. In particular, in \fgr{converge_AC}, the absolute error curve depicting the absolute error for $N=1$ agrees with that for $N=2$. That is because, according to \eq{UODE}, only $\hat y_1$ takes part in the time evolution of $\hat y_1$, no matter whether $N=1$ or $2$. A similar argument holds for the agreement of two curves for $N=3$ and $N=4$.

\begin{figure}[htbp]
    \centering
    \begin{subfigure}{.4\textwidth}
    \centering
    \captionsetup{justification=centering}
    \includegraphics[scale=0.4]{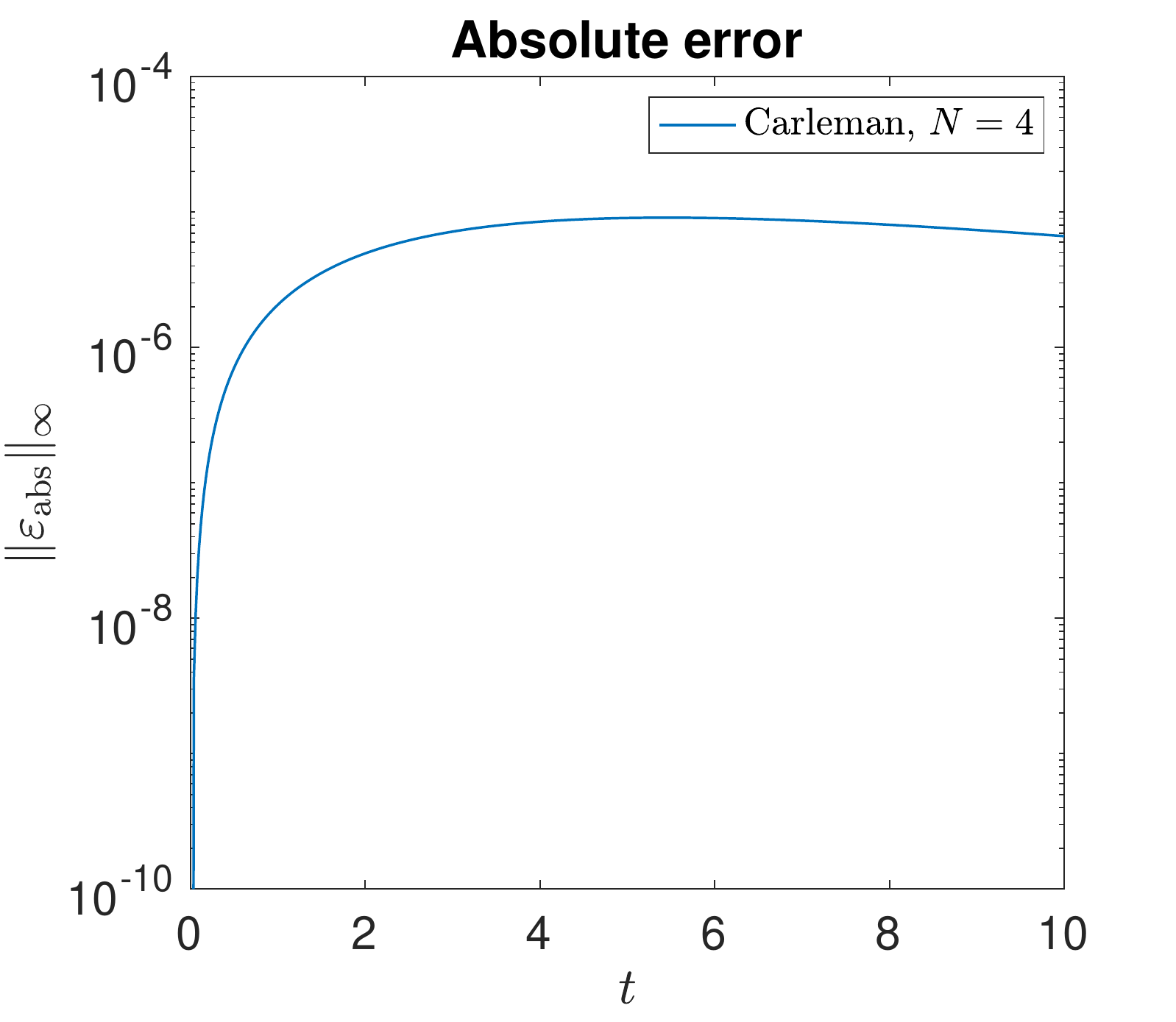}
    \caption{$D=0.01$, $f(u)=0.25u-u^3$ and $u(x,0)=0.08\sin(2\pi x)$.}
    \label{fgr:eig_pst}
    \end{subfigure}
    \begin{subfigure}{.4\textwidth}
    \centering
    \captionsetup{justification=centering}
    \includegraphics[scale=0.4]{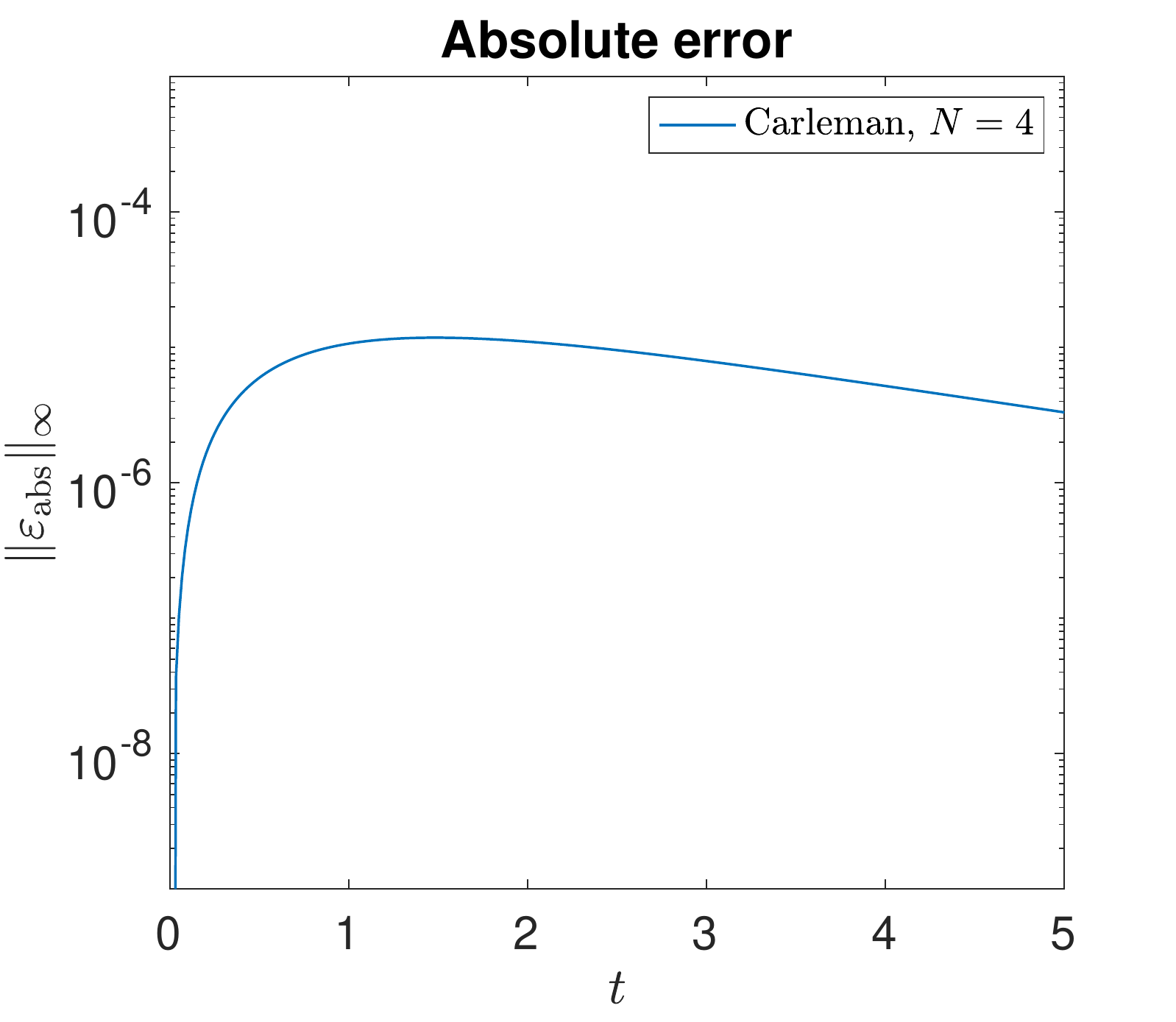}
    \caption{$D=0.012$, $f(u)=0.0196u-u^3$ and $ u(x,0)=0.14\sin(2\pi x)$ .}
    \label{fgr:rd_large}
    \end{subfigure}
    \caption{Examples of the application of Carleman linearization to \eq{NPDE} with homogeneous Dirichlet boundary conditions for different parameters and initial conditions. Left: The eigenvalues of $F_1$ are no longer all positive. Right: $R>1$ while $R_D< 1$.
    }
\end{figure}

Based on our numerical tests, we also find that Carleman linearization works for more general cases. In \fgr{eig_pst}, we relax the requirement that the eigenvalues of $F_1$ are all negative. We test \eq{NPDE} with homogeneous boundary conditions and choose $D=0.01$, $f(u)=0.25u-u^3$, and $u(x,0)=0.08\sin(2\pi x)$ as initial condition. Carleman linearization still has good numerical performance, as implied by the absolute error plot. \fgr{rd_large} illustrates the advantage of $R_D$ over $R$. In that example, we consider \eq{NPDE} with homogeneous boundary conditions and assume $D=0.012$, $f(u)=0.0196u-u^3$ and $ u(x,0)=0.14\sin(2\pi x)$.  We discretize the spatial domain into $15$ sub-intervals, i.e., the value of $n$ is 16. By computation, $R=1.4924$ and $R_D= 0.9299$ where we choose $\lambda =\lambda_1/2.3$. This illustrates that error remains well-controlled under $R_D<1$, which is a milder condition than $R<1$. 

\section{Quantum algorithm}
\label{sec:algorithm}

We now describe an efficient quantum algorithm for computing the numerical solution of the linearized ODEs \eq{LODE}. Our main algorithmic result is stated as follows. 

\begin{theorem}[Quantum Carleman linearization]\label{thm:main}
Consider an instance of the quantum 
ODE problem as defined in \prb{node}, with its $N$-th Carleman linearization as defined in \eq{LODE}. We denote a parameter
\begin{equation}
G \coloneqq \frac{1}{\sqrt{m+1}}\|\hat Y_{\mathrm{evo}}\| = \sqrt{\frac{\sum_{k=0}^{m}\|\hat y_1(kh)\|^2}{m+1}},
\label{eq:g}
\end{equation}
which parameterized the average $\ell_2$ norm of the evolution of the $N$-th Carleman linearized solution $\hat Y_{\mathrm{evo}}$. Assume $R_D<1$. Then there exists a quantum algorithm that produces a state that approximates $\hat Y_{\mathrm{evo}}$ in terms of the $\ell_2$ normalized error $\epsilon \le 1/2$ succeeding with probability $\Omega(1)$, with a flag indicating success. 
The query complexity (to the oracles $O_{F_1},O_{F_M}$, and $O_x$) is
\begin{equation}
\begin{aligned}
\frac{1}{G^2\epsilon}sT^2D^2d^2n^4N^3\|U_{\mathrm{in}}\|^{2N}\cdot\poly\biggl(\log\frac{aDdMnNsT}{G\epsilon}\biggr).
\end{aligned}
\end{equation}
The gate complexity is larger than its query complexity by logarithmic factors.
\end{theorem}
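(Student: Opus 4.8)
The plan is to reduce the problem of producing the history state $\ket{\hat Y_{\mathrm{evo}}}$ to an instance of a quantum linear ODE solver applied to the truncated Carleman system \eq{LODE}, and then carefully bound all the relevant parameters. The starting point is that \thm{Carleman} guarantees that the truncated solution $\hat y_1(t)$ approximates $U(t)$ in $\ell_\infty$ (hence, after accounting for the dimension $n_d$, in a normalized $\ell_2$ sense) with error $\gamma R_D^{N/(M-1)}$, so choosing the Carleman truncation level $N = \Theta\bigl(\frac{M-1}{\log(1/R_D)}\log(\gamma \|U_{\mathrm{in}}\|/(G\epsilon))\bigr) = \widetilde\Or(1)$ suffices to make the Carleman truncation error a small fraction of $G\epsilon$. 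First, then, I would fix this $N$ (taking it to be a multiple of $M-1$ for cleanliness) and record the resulting bound on $\mathcal N_{d,N} = \Or(n^{Nd})$ and the sparsity $Ns$ of the matrix $A$ in \eq{UODE}.

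Second, I would discretize the linear ODE $\hat y' = A\hat y$ in time, using the time-marching / linear-system encoding of the history state, as in the approach of \cite{LKK20} (Euler or a higher-order scheme). This produces a large sparse linear system $L\hat Y = \hat y_{\mathrm{in}}^{(0)}$ whose solution encodes $[\hat y(0),\hat y(h),\dots,\hat y(mh)]$; extracting the first block of each timestep gives $\hat Y_{\mathrm{evo}}$. The key quantities controlling the cost of a quantum linear system solver (QLSS, e.g. \cite{CKS15,GSLW18}) on this system are (a) the condition number $\kappa(L)$, (b) the sparsity, (c) the success/amplitude-amplification overhead determined by the ratio $\|\hat y_{\mathrm{in}}\|/\|\hat Y\|$, i.e. essentially $\|U_{\mathrm{in}}\|/(\sqrt{m+1}\,G)$ up to the contributions of the higher Carleman blocks $\hat y_j$, and (d) the state-preparation cost for $\hat y_{\mathrm{in}} = [U_{\mathrm{in}};U_{\mathrm{in}}^{\otimes2};\dots;U_{\mathrm{in}}^{\otimes N}]$, which is $\Or(N)$ uses of $O_x$.

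Third — and this is the step I expect to be the main obstacle — I would bound the condition number $\kappa(L)$ in terms of the stated parameters. This requires showing that the forward evolution operator $e^{tA}$ is well-controlled (norm bounded, using the block-triangular structure of $A$, \lem{ineq_Ajk}, and the dissipativity $\lambda_1<0$), and, for the lower bound on $\sigma_{\min}(L)$, that solutions do not collapse too fast — which is exactly where the time-averaged norm $G$ enters, replacing the final-time norm $g$ used in \cite{LKK20}. I would track the dependence of $\lambda_1 = Dd_1\mu_1 + a$ on $D$, $d$, $n$ (note $|\mu_1| = \Theta(n^2)$, so $|\lambda_1| = \Theta(Dd n^2)$), which is the source of the $D^2 d^2 n^4$ factor in the final complexity, and combine with $\|A\| = \Or(N|\lambda_1| + N|b|)$ to get the number of time steps $m = \Or(T\|A\|\cdot\poly\log)$ needed for the time-discretization error to be $\le G\epsilon$; a higher-order time-discretization keeps $m$ nearly linear in $T\|A\|$ up to logs. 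Assembling: the QLSS cost is $\widetilde\Or(\kappa \cdot s_L \cdot \|b_{\mathrm{vec}}\|/\|x_{\mathrm{vec}}\| \cdot (\text{state prep}))$, which after substituting $\kappa = \widetilde\Or(T\|A\|)$, $s_L = \Or(Ns)$, the amplification factor $\Or(\|U_{\mathrm{in}}\|/G)$ from renormalizing, and an additional $\|U_{\mathrm{in}}\|^{2N-?}$-type factor coming from the norms of the Carleman blocks $\|U_{\mathrm{in}}^{\otimes j}\| = \|U_{\mathrm{in}}\|^j$ in $\hat y_{\mathrm{in}}$ and in $\hat Y$, yields the claimed
\[
\frac{1}{G^2\epsilon}\,sT^2D^2d^2n^4N^3\|U_{\mathrm{in}}\|^{2N}\cdot\poly\Bigl(\log\tfrac{aDdMnNsT}{G\epsilon}\Bigr).
\]

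Finally, I would handle bookkeeping: verifying that the measurement to project onto the first-block register succeeds with probability $\Omega(1/N)$-ish but that amplitude amplification restores $\Omega(1)$ success within the stated budget; checking that the total error (Carleman truncation $+$ time discretization $+$ QLSS error $+$ amplitude-estimation/rounding) is $\le \epsilon$ by allocating $\epsilon/4$ to each; and noting the gate complexity exceeds the query complexity only by the $\poly\log$ cost of implementing the sparse-oracle-based block-encoding of $L$ and the QFT-free arithmetic. The cleanest route is to invoke the quantum linear ODE machinery essentially as a black box from \cite{LKK20,CKS15} and devote the real work to (i) the $N$-dependence from the polynomial (rather than quadratic) nonlinearity and (ii) replacing $1/g$ by $1/G$ via the history-state construction, which only needs a lower bound on $\sum_k\|\hat y_1(kh)\|^2$ rather than on $\|\hat y_1(mh)\|^2$.
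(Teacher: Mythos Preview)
Your skeleton is right — Carleman system $\to$ time-discretized linear system $\to$ QLSA $\to$ post-selection with amplitude amplification — and matches the paper. But two things in your write-up do not line up with the statement you are proving, and a third is where the real work hides.

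First, the theorem asks for an $\epsilon$-approximation of $\hat Y_{\mathrm{evo}}$, i.e.\ of the \emph{Carleman-linearized} history state itself, not of the true $U$-history. So the Carleman truncation error $\gamma R_D^{N/(M-1)}$ is \emph{not} part of this error budget, and $N$ is not to be chosen here: it is a free parameter that appears explicitly in the complexity as $N^3\|U_{\mathrm{in}}\|^{2N}$. Your first step (pick $N$ so that Carleman error $\lesssim G\epsilon$, hence $N=\widetilde\Or(1)$) is a misreading of the target.

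Second, your complexity assembly is internally inconsistent. You propose a higher-order time discretization with $m=\widetilde\Or(T\|A\|)$ and $\kappa=\widetilde\Or(T\|A\|)$, which would give $T$, $Ddn^2$, and only $\poly\log(1/\epsilon)$ — not the stated $T^2$, $D^2d^2n^4$, and $1/\epsilon$. The paper uses \emph{forward Euler}: it first proves the stability bound $\|I+Ah\|\le 1$ (via a block decomposition $I+Ah=\sum_j K_j$ that crucially uses the rescaling assumption $\|F_M\|\le|\lambda_1|$ from \prb{node}), which gives $\kappa(L)\le 2(m+1)$ directly. The Euler global error bound (\lem{global}) is $\tfrac12 N^2Th\,(\|F_1\|+\|F_M\|)^2\max_t\|\hat y(t)\|$, so forcing this below $G\epsilon$ makes $m=T/h=\Or\bigl(N^2T^2(Ddn^2)^2\max_t\|\hat y(t)\|/(G\epsilon)\bigr)$; that is the source of $T^2D^2d^2n^4/\epsilon$.

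Third, the $\|U_{\mathrm{in}}\|^{2N}/G^2$ factor is not just ``norms of Carleman blocks in $\hat y_{\mathrm{in}}$'': it is $\max_t\|\hat y(t)\|^2$, which enters once through $m$ (hence $\kappa$) and once through the measurement probability $P_{\mathrm{measure}}\gtrsim G^2/\max_t\|\hat y(t)\|^2$ (\lem{measure}). Bounding $\max_t\|\hat y(t)\|$ is where $R_D<1$ is actually used in this proof: one shows $\|\hat y_j(t)\|\le\|U^{\otimes j}(t)\|+\|\eta_j(t)\|\le 2\max_t\|U(t)\|^j$ via \thm{Carleman}, and then $\max_t\|U(t)\|\le\|U_{\mathrm{in}}\|$ via the $\ell_2$ decay estimate (\lem{l2_estimate2}), giving $\max_t\|\hat y(t)\|=\Or(\|U_{\mathrm{in}}\|^N)$. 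Your proposal does not identify this step, and without it neither the Euler step count nor the success probability can be bounded.
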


We next describe the quantum algorithm in detail, including ingredients such as state preparation, quantum linear system algorithm, and measurement. We conclude the proof of \thm{main} at the end of this section. 

\textbf{Remark.}  
We notice that a prefactor $\|U_{\mathrm{in}}\|^{2N}$ in the complexity of the $N$-th Carleman linearization. This cost is similar to prefactor $5^{2k}$ in the complexity of the $k$-th product formula \cite{BAC07}. In practice, we usually choose a suitably small value of $N$, such as $1,2,3$, to reduce the cost.

\subsection{State preparation}

We first recall a lemma used in \cite{LKK20} for preparing a quantum state corresponding to the initial vector $\hat y_{\mathrm{in}}=[U_{\mathrm{in}}; U_{\mathrm{in}}^{\otimes 2}; \ldots; U_{\mathrm{in}}^{\otimes N}]$, given the value $\|U_{\mathrm{in}}\|$ and the ability to prepare a quantum state proportional to $U_{\mathrm{in}}$.

\begin{lemma}[Lemma 5 of \cite{LKK20}]\label{lem:preparation}
Assume we are given the value $\|U_{\mathrm{in}}\|$, and let $O_x$ be an oracle that maps $|00\ldots0\rangle \in\C^n$ to a normalized quantum state $|U_{\mathrm{in}}\rangle$ proportional to $U_{\mathrm{in}}$. Then the quantum state $|\hat y_{\mathrm{in}}\rangle$ proportional to $\hat y_{\mathrm{in}}$ can be prepared using $\Or(N)$ queries to $O_x$, and the gate complexity is larger by an $\Or(\text{poly}(\log N, \log n))$ factor.
\end{lemma}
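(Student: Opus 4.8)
\emph{Proof plan.} Following \cite{LKK20}, the plan is to assemble $\ket{\hat y_{\mathrm{in}}}$ block by block, using that the $j$-th block of $\hat y_{\mathrm{in}}$ is $U_{\mathrm{in}}^{\otimes j}$, whose $\ell_2$ norm is exactly $\|U_{\mathrm{in}}\|^{j}$ — a number we can compute classically from the given value $\|U_{\mathrm{in}}\|$. First I would fix a register layout: a \emph{block} register of $\Or(\log N)$ qubits together with $N$ \emph{slot} registers, each of $\lceil\log_2 n_d\rceil$ qubits, where $n_d=n^d$. The target is the state
\[
\frac{1}{\sqrt{\sum_{l=1}^{N}\|U_{\mathrm{in}}\|^{2l}}}\;\sum_{j=1}^{N}\|U_{\mathrm{in}}\|^{\,j}\;\ket{j}\otimes\ket{U_{\mathrm{in}}}^{\otimes j}\otimes\ket{0}^{\otimes(N-j)},
\]
which is the image of $\hat y_{\mathrm{in}}/\|\hat y_{\mathrm{in}}\|$ under the isometry sending the $j$-th block $\R^{n_d^j}$ into the orthogonal subspace $\ket{j}\otimes(\C^{n_d})^{\otimes j}\otimes\ket{0}^{\otimes(N-j)}$: on that subspace the amplitudes are $\|U_{\mathrm{in}}\|^{j}$ times the entries of $\ket{U_{\mathrm{in}}}^{\otimes j}$, i.e.\ precisely the entries of $U_{\mathrm{in}}^{\otimes j}$. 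Hence preparing this state realizes $\ket{\hat y_{\mathrm{in}}}$ inside a space that contains $\R^{\mathcal{N}_{d,N}}$, which is the space on which the downstream quantum linear-system solver already operates.

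\textbf{Step 1 (load the block register).} Since $\|U_{\mathrm{in}}\|$ is given, the amplitudes $c_j=\|U_{\mathrm{in}}\|^{j}/\sqrt{\sum_{l=1}^N\|U_{\mathrm{in}}\|^{2l}}$ — and the partial-sum rotation angles derived from them — are classically precomputable. I would then prepare $\sum_{j=1}^N c_j\ket{j}$ with a standard amplitude-encoding circuit (a depth-$\Or(\log N)$ binary tree of controlled single-qubit rotations), which uses $\poly(\log N)$ gates and \emph{no} queries to $O_x$. (Accumulating the powers $\|U_{\mathrm{in}}\|^{2l}$ on a logarithmic scale avoids classical under/overflow; this does not affect the counts.)

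\textbf{Step 2 (fill the slots).} For $k=1,\ldots,N$ in succession, I would apply $O_x$ to the $k$-th slot register controlled on the single-bit predicate $[\,k\le j\,]$, computed from the block register by an $\Or(\log N)$-size comparator with the constant $k$ hard-wired. After all $N$ passes, the branch $\ket{j}$ holds $\ket{U_{\mathrm{in}}}^{\otimes j}\otimes\ket{0}^{\otimes(N-j)}$ in its slots, so the output is exactly the target state above. This costs exactly $N$ invocations of $O_x$ and $\Or(N\,\poly(\log N))$ extra gates for the comparators, giving $\Or(N)$ queries and a gate count larger by only an $\Or(\poly(\log N,\log n))$ factor, as claimed.

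\emph{Where the care goes.} The construction is elementary, so the only delicate points are bookkeeping: (i) choosing the isometric padding so the prepared state is \emph{exactly} proportional to $\hat y_{\mathrm{in}}$ with no stray amplitude outside the intended blocks, and (ii) arranging Step 2 so that $O_x$ is called $\Or(N)$ rather than $\Or(N^2)$ times and without uncomputing intermediate copies — the per-slot, comparator-controlled application achieves exactly this. No analytic estimate enters the argument.
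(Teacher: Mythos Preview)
Your construction is correct and is precisely the approach the paper has in mind: the paper does not reprove this lemma but merely cites \cite{LKK20} and remarks that one embeds $\hat y_{\mathrm{in}}$ into a slightly larger space with a convenient tensor-product structure, which is exactly your block-register-plus-$N$-slots layout. The two-step procedure (classically load the amplitudes $\|U_{\mathrm{in}}\|^{j}$ on the block register, then apply a comparator-controlled $O_x$ once per slot) reproduces the $\Or(N)$ queries and $\poly(\log N,\log n)$ gate overhead stated in the lemma.
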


We remark that, in fact, we embed $\hat y_{\mathrm{in}}$ into a slightly larger space with a more convenient tensor product structure. Further details refer to Section 4.3 of \cite{LKK20}.

\subsection{Quantum linear system algorithm}
\label{sec:qlsa}

After the state preparation of the initial condition, we perform the forward Euler method to discretize the time interval $[0,T]$ into $m = T/h$ sub-intervals, and construct a system of linear equations as
\begin{equation}
\frac{y^{k+1}-y^k}{h} = A y^k
\label{eq:forward}
\end{equation}
where $y^k \in \R^{\mathcal{N}_{d,N}}$ approximates $\hat y(kh)$ for each $k \in \rangez{m+1} \coloneqq \{0,1,\ldots,m\}$, with $y^0 = y_{\mathrm{in}} \coloneqq \hat y(0) = \hat y_{\mathrm{in}}$.
This gives an $(m+1)\mathcal{N}_{d,N}\times(m+1)\mathcal{N}_{d,N}$ linear system
\begin{equation}
L|Y\rangle=|B\rangle, 
\label{eq:linear_system}
\end{equation}
where
\begin{equation}
L = \sum_{k=0}^{m+1}|k\rangle\langle k|\otimes I-\sum_{k=1}^{m+1}|k\rangle\langle k-1|\otimes (I+Ah).
\label{eq:matrixL}
\end{equation}
\eq{linear_system} encodes \eq{forward} and uses it to produce a numerical solution at time $T$. 
Observe that the system \eq{linear_system} has the lower triangular structure
\renewcommand{\arraystretch}{1.4}
\begin{equation}
  \begin{pmatrix}
    I &  &  &  &  \\
    -(I\!+\!Ah) & I &  &  &  \\
     & \ddots & \ddots &  &  \\
     &  & -(I\!+\!Ah) & I &  \\
     &  &  & -(I\!+\!Ah) & I \\
  \end{pmatrix}
  \begin{pmatrix}
    y^0 \\
    y^1 \\
    \vdots \\
   y^{m-1} \\
    y^m \\
  \end{pmatrix}
=
  \begin{pmatrix}
    y_{\mathrm{in}} \\
    0 \\
    \vdots \\
    0 \\
    0 \\
  \end{pmatrix}.
\end{equation}
\renewcommand{\arraystretch}{1}

For each $\mathcal{N}_{d,N}$-dim vector $y^k$ with $k\in\rangez{m+1}$, its first $n$ components (i.e., $y_1^k$) approximate the exact solution $U(kh)$, up to normalization. We apply the high-precision quantum linear system algorithm (QLSA) \cite{CKS15} to \eq{linear_system} and postselect to produce $\sum_k|y_1^k\rangle|k\rangle$ for representing the gradient flow evolution. 
We would like to note that a more advanced QLSA with block-encoding input models was recently proposed in~\cite{CostaAnSandersEtAl2022}. 
However, for technical simplicity, in this work we still employ the algorithm described in~\cite{CKS15}. 
This is because the improvements introduced in~\cite{CostaAnSandersEtAl2022} over~\cite{CKS15} are relatively minor, only affecting logarithmic factors. 
Additionally, the input model used in this work, which involves sparse input oracles, is more consistent with that employed in~\cite{CKS15}.

In \thm{main}, the solution error has two contributions: the error in the time discretization of \eq{LODE} by the forward Euler method, and the error from the QLSA. Since the QLSA produces a solution with error at most $\epsilon$ with complexity $\poly(\log(1/\epsilon))$ \cite{CKS15}, we focus on bounding the first contribution.

We provide an upper bound for the error incurred by approximating \eq{LODE} with the forward Euler method. The following proof basically follows \cite[Lemma 3]{LKK20}. 

\begin{lemma}\label{lem:global}
Consider an instance of the quantum ODE problem as defined in \prb{node}, with $R_D<1$ as defined in \eq{A1}. Choose a time step
\begin{equation}
h \le \frac{1}{N^2[4Dd(n+1)^2 + a]}.
\label{eq:h}
\end{equation}
Then the global error from the forward Euler method \eq{forward} on the interval $[0,T]$ for \eq{LODE} satisfies
\begin{equation}
\|\hat y_1(T)-y^m_1\|
\le \frac{N^2Th}{2}[4Dd(n+1)^2 + a + b]^2\max_{t\in[0,T]}\|\hat y(t)\|.
\end{equation}
\end{lemma}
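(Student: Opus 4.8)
The plan is to follow the standard local-error–to–global-error argument for the forward Euler method, which is essentially the strategy of \cite[Lemma 3]{LKK20}, carefully tracking the constants that are specific to the reaction-diffusion setting. Write $e^k \coloneqq \hat y(kh) - y^k$ for the global error at step $k$, with $e^0 = 0$. Subtracting the forward Euler update \eq{forward} from the exact relation $\hat y((k+1)h) = \hat y(kh) + \int_{kh}^{(k+1)h} A\hat y(s)\,ds$ gives the recursion $e^{k+1} = (I + Ah)e^k + \tau^k$, where $\tau^k = \int_{kh}^{(k+1)h}\bigl(A\hat y(s) - A\hat y(kh)\bigr)\,ds$ is the one-step truncation error. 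Iterating, $e^m = \sum_{k=0}^{m-1}(I+Ah)^{m-1-k}\tau^k$, so $\|e^m\| \le \bigl(\max_k \|(I+Ah)^k\|\bigr)\sum_{k=0}^{m-1}\|\tau^k\|$.

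The two quantities to control are $\|(I+Ah)^k\|$ and $\|\tau^k\|$. For the truncation error, by Taylor's theorem $\tau^k = \int_{kh}^{(k+1)h}\int_{kh}^{s} A\hat y'(r)\,dr\,ds = \int_{kh}^{(k+1)h}\int_{kh}^{s} A^2\hat y(r)\,dr\,ds$, hence $\|\tau^k\| \le \frac{h^2}{2}\|A\|^2 \max_{t\in[0,T]}\|\hat y(t)\|$. Here I use $\|A\| \le 4Dd(n+1)^2 + a + b$: the block matrix $A$ acts on $\hat y_j = U^{\otimes j}$, the diagonal blocks $A_j^j$ have norm at most $j\|F_1\| \le j(4Dd(n+1)^2+a)$ by the structure \eq{A_block} and the bound on eigenvalues of $D_h$, while the off-diagonal blocks $A_{j+M-1}^j$ contribute at most $j|b|$ by \lem{ineq_Ajk}; summing the relevant row/column contributions over the $\le N$ active blocks gives the factor $N^2$ after also accounting for the $j \le N$ weight, producing the stated $\frac{N^2 Th}{2}[\cdots]^2$ once we multiply by $m = T/h$ steps. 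For the amplification factor, the key is that the time step condition \eq{h} is chosen precisely so that $\|(I+Ah)^k\| = \Or(1)$ uniformly in $k$: since $A$ is "almost dissipative" — its symmetric part is dominated by the negative-definite Laplacian blocks, with the nilpotent upper-triangular off-diagonal perturbation of norm $\Or(N|b|)$ — one shows $\|I+Ah\| \le 1 + \Or(h)$ on the relevant subspace, or more carefully that $\|(I+Ah)^k\| \le e^{\Or(kh)}$ is bounded on $[0,T]$; with $h$ as small as \eq{h}, each block $I + A_j^j h$ is a contraction and the triangular coupling only inflates the norm by a bounded amount over the truncation depth $N$.

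The main obstacle is the second point: bounding $\|(I+Ah)^k\|$ uniformly. The matrix $A$ is not normal — it is block upper triangular with the nonzero coupling blocks $A_{j+M-1}^j$ — so one cannot simply read off $\|(I+Ah)^k\|$ from eigenvalues. The cleanest route is to exploit the nilpotent structure: write $A = A_D + A_U$ where $A_D = \diag(A_j^j)$ is block-diagonal dissipative and $A_U$ is the strictly-upper nilpotent part with $A_U^{\lceil N/(M-1)\rceil} = 0$, expand $(I+Ah)^k$ and group terms by powers of $A_U$, using $\|(I+A_D h)^\ell\| \le 1$ (guaranteed by \eq{h} since each $\|I + A_j^j h\| \le 1$ when $h \le 1/(j|\lambda_1^{\max}|)$ and $A_j^j$ is negative semidefinite) together with $\|A_U\| \le N|b|$ and $kh \le T$; the finitely many ($\le N$) nonzero terms each contribute a polynomially bounded factor, and since $b = o(d)$ is a fixed constant independent of $T$ the whole product stays $\Or(1)$. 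Once this amplification bound is in hand, combining it with the truncation bound and summing over the $m$ steps yields the claimed inequality; the $\|\hat y_1(T) - y_1^m\| \le \|e^m\|$ reduction is immediate since $\hat y_1$ is a subvector of $\hat y$.
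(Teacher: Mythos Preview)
Your overall structure matches the paper: set up the Euler error recursion $e^{k+1}=(I+Ah)e^k+\tau^k$, bound the local truncation error by $\tfrac{h^2}{2}\|A\|^2\max_t\|\hat y(t)\|$ with $\|A\|\le N(\|F_1\|+\|F_M\|)$, and then control the amplification $\|(I+Ah)^k\|$. The genuine gap is in this last step. Your nilpotent expansion of $((I+A_Dh)+A_Uh)^k$ into words with at most $\lceil N/(M-1)\rceil$ occurrences of $A_Uh$ yields
\[
\|(I+Ah)^k\|\ \le\ \sum_{\ell=0}^{\lceil N/(M-1)\rceil}\binom{k}{\ell}(h\|A_U\|)^\ell\ \le\ \sum_{\ell=0}^{\lceil N/(M-1)\rceil}\frac{(TN|b|)^\ell}{\ell!},
\]
which is uniform in $k$ but is a polynomial in $T$ of degree roughly $N/(M-1)$, not an $\Or(1)$ absolute constant as you assert. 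That extra $T$-polynomial prefactor would propagate into the global bound and does not reproduce the lemma as stated. Your alternative remark that $\|I+Ah\|\le 1+\Or(h)$ is worse still, since it gives amplification $e^{\Or(T)}$.

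What you are missing is that the paper proves the sharp bound $\|I+Ah\|\le 1$ (hence $\|(I+Ah)^k\|\le 1$ for every $k$), and this relies crucially on the rescaling hypothesis $\|F_M\|\le|\lambda_1|$ built into \prb{node}, which you never invoke. The mechanism is a different decomposition: write $I+Ah=\sum_{j=1}^N K_j$ with
\[
K_j=\tfrac{1}{N}I+|j\rangle\langle j|\otimes A_j^jh+|j\rangle\langle j{+}M{-}1|\otimes A_{j+M-1}^jh,
\]
so that each $K_j$ pairs a $\tfrac{1}{N}$-share of the identity with \emph{both} the $j$-th diagonal block and its single off-diagonal companion. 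Under \eq{h} the symmetric diagonal part has operator norm at most $\tfrac{1}{N}-j|\lambda_1|h$, and the off-diagonal perturbation contributes at most $j\|F_M\|h\le j|\lambda_1|h$; the dissipation exactly absorbs the coupling, giving $\|K_j\|\le\tfrac{1}{N}$ and hence $\|I+Ah\|\le 1$. Your diagonal-plus-nilpotent split separates these two pieces and loses precisely this cancellation, which is why your bound picks up the spurious $T$-dependence.
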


\begin{proof}

First of all, we establish the following bound
\begin{equation}
\|I+Ah\| \le 1.
\end{equation}
We decompose $I+Ah$ as
\begin{equation}
I+Ah=\sum_{j=1}^N K_j
\end{equation}
where
\begin{align}
K_j &= \frac{1}{N}I+|j\rangle\langle j|\otimes A_j^jh+|j\rangle\langle j+M-1|\otimes A_{j+M-1}^jh, \quad j\in\range{N-M+1}, \\
K_{j} &= \frac{1}{N}I+|j\rangle\langle j|\otimes A_j^jh, \quad j\in\range{N-1} \backslash \range{N-M+1}.
\end{align}
All eigenvalues of $\frac{1}{N}I+|j\rangle\langle j|\otimes A_j^jh$ range from $\frac{1}{N}+j\lambda_nh$ to $\frac{1}{N}+j\lambda_1h$. Here we require that these eigenvalues lie in $[0,1]$, given by $N^2h\|F_1\| = N^2h[4Dd(n+1)^2 + a] \le 1$ in \eq{h}.  The norm of $A_{j+M-1}^j$ is bounded by $j\|F_M\|$. So we have the bound
\begin{equation}
\|K_j\| \le \frac{1}{N} - j|\lambda_1|h + j\|F_M\|h, \quad j\in\range{N-M+1},
\label{eq:norm_K}
\end{equation}
Then $\|F_M\|\le|\lambda_1|$ in \prb{node} gives
\begin{equation}
\|K_j\| \le \frac{1}{N}, \quad j\in\range{N-M+1}.
\end{equation}
It also holds for the case $j\in\range{N-1} \backslash \range{N-M+1}$. Henceforth,
\begin{equation}
\|I+Ah\| \le \sum_{j=1}^N \|K_j\| \le 1.
\label{eq:norm_equivalent}
\end{equation}

We then define the global error by
\begin{equation}
g^k \coloneqq \|\hat y(kh)-y^k\|,
\end{equation}
where $\hat y(kh)$ is the exact solution of \eq{LODE}, and $y^k$ is the numerical solution of \eq{forward}. Note that $g^m = \|\hat y(T)-y^m\|$.

The stable condition \eq{norm_equivalent} implies the local truncation error from the forward Euler method is non-increasing, and the global error increase at most linear in time. Following the standard procedure of the global error estimate (we refer it to the proof of \cite[Lemma 3]{LKK20}), the global error is bounded by
\begin{equation}
g^k \le \frac{kh^2}{2}\max_{t\in[0,T]}\|\hat y''(t)\|, \quad k\in\rangez{m+1}, 
\label{eq:global_recurrence}
\end{equation}
where we have the following estimate
\begin{equation}
\max_{t\in[0,T]}\|\hat y''(t)\| = \max_{t\in[0,T]}\|A^2\hat y(t)\| \le \|A\|^2 \max_{t\in[0,T]}\|\hat y(t)\|,
\end{equation}
\begin{equation}
\|A\| = \biggl\|\sum_{j=1}^{N}|j\rangle\langle j|\otimes A_{j}^j+\sum_{j=1}^{N-M+1}|j\rangle\langle j+M-1|\otimes A_{j+M-1}^j\biggr\| \le N(\|F_1\|+\|F_M\|).
\end{equation}
Sequentially, we conclude that
\begin{equation}
\begin{aligned}
\|\hat y^1(T)-y^m_1\| &\le g^m \le \frac{N^2Th}{2}(\|F_1\|+\|F_M\|)^2\max_{t\in[0,T]}\|\hat y(t)\| \\ 
&\le \frac{N^2Th}{2}[4Dd(n+1)^2 + a + b]^2\max_{t\in[0,T]}\|\hat y(t)\|.
\end{aligned}
\label{eq:global}
\end{equation}
\end{proof}

Given the above linear system, we can upper bound the condition number that affects the complexity of the quantum linear system algorithm. Under the same construction of the matrix $L$, we can follow the same estimate proposed by \cite[Lemma 4]{LKK20} to claim the following result. 

\begin{lemma}\label{lem:condition}
Consider an instance of the quantum ODE problem as defined in \prb{node}. Apply the forward Euler method \eq{forward} with time step size \eq{h} to the Carleman linearization \eq{LODE}. Then the condition number of the matrix $L$ defined in \eq{matrixL} satisfies
\begin{equation}
\kappa \le 2(m+1).
\end{equation}
\end{lemma}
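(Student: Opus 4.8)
The plan is to follow the standard structure used for bounding condition numbers of Carleman-type linear systems in \cite{LKK20}, adapted to our lower-triangular matrix $L$ in \eq{matrixL}. Recall $\kappa(L) = \|L\|\,\|L^{-1}\|$, so I would bound $\|L\|$ and $\|L^{-1}\|$ separately. For $\|L\|$: since $L = \sum_k |k\rangle\langle k|\otimes I - \sum_k |k\rangle\langle k-1|\otimes(I+Ah)$, the triangle inequality gives $\|L\| \le 1 + \|I+Ah\|$, and by \lem{global} (specifically \eq{norm_equivalent}) we have $\|I+Ah\| \le 1$, so $\|L\| \le 2$. The real work is bounding $\|L^{-1}\|$.

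For $\|L^{-1}\|$, the key observation is that $L$ is block-lower-bidiagonal with identity diagonal blocks, so it is explicitly invertible: if $M \coloneqq I+Ah$, then the $(j,k)$ block of $L^{-1}$ is $M^{j-k}$ for $j \ge k$ and $0$ otherwise. Hence for any input vector $|B\rangle$ with blocks $b^0,\dots,b^m$, the solution blocks are $y^j = \sum_{k=0}^{j} M^{j-k} b^k$. I would then estimate $\|L^{-1}|B\rangle\|^2 = \sum_j \|y^j\|^2 \le \sum_j \bigl(\sum_{k\le j} \|M\|^{j-k}\|b^k\|\bigr)^2$. Using $\|M\| \le 1$ this becomes $\sum_j \bigl(\sum_{k \le j}\|b^k\|\bigr)^2$, and a Cauchy–Schwarz step gives $\sum_{k\le j}\|b^k\| \le \sqrt{j+1}\,\bigl(\sum_{k\le j}\|b^k\|^2\bigr)^{1/2} \le \sqrt{m+1}\,\|B\|$, whence $\|L^{-1}|B\rangle\|^2 \le (m+1)^2\|B\|^2$, i.e. $\|L^{-1}\| \le m+1$. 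Combining, $\kappa \le \|L\|\,\|L^{-1}\| \le 2(m+1)$.

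The main subtlety — and the reason this is a lemma rather than a one-line remark — is getting the power of $m$ right. A naive bound that simply writes $\|y^j\| \le \sum_{k\le j}\|b^k\|$ and then bounds each $\|b^k\|$ by $\|B\|$ would lose a factor, giving $\|L^{-1}\| = \Or((m+1)^{3/2})$; the improvement to $\Or(m+1)$ comes precisely from applying Cauchy–Schwarz with the $\ell_2$ norm of the block sequence rather than the crude $\ell_1$–$\ell_\infty$ bound. This is where I expect to have to be careful to reproduce exactly the argument of \cite[Lemma 4]{LKK20}. Everything else is routine: the hypothesis $\|I+Ah\|\le 1$ that drives both bounds has already been established under the stepsize condition \eq{h} in the proof of \lem{global}, so I would simply invoke it. No structural obstacle remains; the only point requiring attention is the bookkeeping in the Cauchy–Schwarz step to secure the linear — rather than $3/2$-power — dependence on $m$.
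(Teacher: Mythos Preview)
Your proposal is correct and shares the overall structure with the paper: bound $\|L\|\le 2$ via the triangle inequality and $\|I+Ah\|\le 1$, then bound $\|L^{-1}\|\le m+1$ using the explicit block lower-triangular form of $L^{-1}$. The only difference is in how the $\|L^{-1}\|$ bound is executed. You work columnwise on an arbitrary input $|B\rangle$ and invoke Cauchy--Schwarz to pass from $\sum_{k\le j}\|b^k\|$ to $\sqrt{m+1}\,\|B\|$, calling this ``the main subtlety''. The paper instead decomposes $L^{-1}$ along its block diagonals, writing $L^{-1}=\sum_{p=0}^{m}S_p\otimes(I+Ah)^p$ with $S_p=\sum_k|k+p\rangle\langle k|$ a partial isometry of norm $1$, so that the triangle inequality gives $\|L^{-1}\|\le\sum_{p=0}^{m}\|(I+Ah)^p\|\le\sum_{p=0}^m\|I+Ah\|^p\le m+1$ immediately. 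This sidesteps the Cauchy--Schwarz step entirely and shows that the ``getting the power of $m$ right'' concern you raise is an artifact of your route rather than an intrinsic difficulty; the diagonal decomposition yields the linear bound in one line. Both arguments are valid and yield the same constant.
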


\begin{proof}
We begin by upper bounding $\|L\|$. We write
\begin{equation}
L = L_1 + L_2,
\end{equation}
where
\begin{align}
L_1 &= \sum_{k=0}^{m}|k\rangle\langle k|\otimes I, \\
L_2 &= -\sum_{k=1}^{m}|k\rangle\langle k-1|\otimes (I+Ah).
\end{align}
Clearly $\|L_1\| = 1$. Furthermore, $\|L_2\| \le \|I+Ah\| \le 1$ by \eq{norm_equivalent}. Therefore,
\begin{equation}
\|L\| \le \|L_1\| + \|L_2\| \le 2.
\label{eq:upper}
\end{equation}
Next we upper bound $\|L^{-1}\|$. We notice that $L^{-1}$ can be directly written as
\renewcommand{\arraystretch}{1.4}
\begin{equation}
L^{-1} =
    \begin{pmatrix}
    I &  &  &  & \\
    (I\!+\!Ah) & I &  &  & \\
    (I\!+\!Ah)^2 & (I\!+\!Ah) & I &  & \\
    \ldots & \ddots & \ddots & \ddots & \\
    (I\!+\!Ah)^m & \cdots & (I\!+\!Ah)^2 & (I\!+\!Ah) & I \\
    \end{pmatrix}.
\end{equation}
\renewcommand{\arraystretch}{1}
So that
\begin{equation}
    \|L^{-1}\| \le \|I\| + \|I+Ah\| + \ldots + \|(I+Ah)^m\|.
\end{equation}
Since $\|I+Ah\| \le 1$ by \eq{norm_equivalent}, we have
\begin{equation}
    \|L^{-1}\| \le \|I\| + \|I+Ah\| + \ldots + \|(I+Ah)\|^m = m+1.
\label{eq:lower}
\end{equation}
Finally, combining \eq{upper} with \eq{lower}, we conclude
\begin{equation}
\kappa = \| L\|\| L^{-1}\| \le 2(m+1)
\end{equation}
as claimed.
\end{proof}

\subsection{Measurement probability}

After applying the QLSA to \eq{linear_system}, we perform a measurement to extract a final state of the desired form. We now consider the probability of this measurement succeeding. Differing from \cite[Lemma 6]{LKK20}, we are interested in providing a history state $\sum_{k}|y_1^k\rangle|k\rangle$ instead of a final state $|y_1^m\rangle$. Thus the measurement probability does not include the $\ell_2$ norm of the final state as well as the $\ell_2$ scaling of the initial and final states (i.e., the parameters $g$ and $q$ as in \cite[Lemma 6]{LKK20}). 

\begin{lemma}\label{lem:measure}
Consider an instance of the quantum ODE problem defined in \prb{node}, with the QLSA applied to the linear system \eq{linear_system} using the forward Euler method \eq{forward} with time step \eq{h}.
Suppose the global error from the forward Euler method as defined in \lem{global} is bounded by
\begin{equation}
\|\hat y(kh)-y^k\| \le \frac{1}{2}G.
\label{eq:g_condition_2}
\end{equation}
Then the probability of measuring a quantum state $|y^k_1\rangle$ for $k\in\rangez{m+1}$ satisfies
\begin{equation}
P_{\mathrm{measure}} \ge \frac{2G^2}{16\max_{t\in[0,T]}\|\hat y(t)\|^2 + G^2}.
\end{equation}
\end{lemma}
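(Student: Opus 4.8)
The plan is to write $P_{\mathrm{measure}}$ as an explicit ratio of squared norms and then bound its numerator and denominator separately, using only the forward-Euler error bound \eq{g_condition_2} together with the definition \eq{g} of $G$. The QLSA returns a state proportional to $Y=\sum_{k=0}^{m}|k\rangle\otimes y^k$, and each Carleman vector $y^k\in\R^{\mathcal N_{d,N}}$ decomposes over its blocks $y_1^k,\dots,y_N^k$. Measuring the block register, postselecting on the first block ($j=1$), and keeping the time register leaves precisely the desired history state $\propto\sum_k|k\rangle\otimes y_1^k$, and the outcome occurs with probability
\begin{equation}
P_{\mathrm{measure}}=\frac{\sum_{k=0}^{m}\|y_1^k\|^2}{\sum_{k=0}^{m}\|y^k\|^2}.
\end{equation}
I would stress at this point that, unlike \cite[Lemma 6]{LKK20}, no ``idle'' padding timesteps are appended (there they served only to amplify the single final state), so the sum runs exactly over the timesteps $k=0,\dots,m$ that appear in $\hat Y_{\mathrm{evo}}$; this is why the factors $g$ and $q$ of \cite{LKK20} drop out.

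For the numerator I would apply the triangle inequality for the $\ell_2$ norm over the joint index $(k,\text{coordinate})$: $\bigl(\sum_k\|y_1^k\|^2\bigr)^{1/2}\ge\bigl(\sum_k\|\hat y_1(kh)\|^2\bigr)^{1/2}-\bigl(\sum_k\|y_1^k-\hat y_1(kh)\|^2\bigr)^{1/2}$. The first term is exactly $\|\hat Y_{\mathrm{evo}}\|=\sqrt{m+1}\,G$ by \eq{g}. For the second, since the first block is a subvector of the full Carleman vector, $\|y_1^k-\hat y_1(kh)\|\le\|y^k-\hat y(kh)\|\le\frac12 G$ by \eq{g_condition_2}, so it is at most $\frac12\sqrt{m+1}\,G$. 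This gives $\sum_k\|y_1^k\|^2\ge\frac14(m+1)G^2$. The essential point — and the reason a term-by-term estimate $\|y_1^k\|\ge\|\hat y_1(kh)\|-\frac12 G$ would be useless — is that individual $\|\hat y_1(kh)\|$ can be tiny while the $\ell_2$-aggregated quantity $\sqrt{m+1}\,G$ is not, so the bound must be carried out in aggregate.

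For the denominator, the same $\ell_2$ triangle inequality gives $\bigl(\sum_k\|y^k\|^2\bigr)^{1/2}\le\bigl(\sum_k\|\hat y(kh)\|^2\bigr)^{1/2}+\frac12\sqrt{m+1}\,G\le\sqrt{m+1}\,\max_{t\in[0,T]}\|\hat y(t)\|+\frac12\sqrt{m+1}\,G$. Squaring and combining the cross term using $G\le\max_t\|\hat y(t)\|$ (which holds since $\|\hat y_1(t)\|\le\|\hat y(t)\|$), one obtains $\sum_k\|y^k\|^2\le(m+1)\bigl(2\max_t\|\hat y(t)\|^2+\tfrac12 G^2\bigr)$, up to the precise constant. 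Dividing the two bounds cancels the $(m+1)$ factors and leaves $P_{\mathrm{measure}}\ge\frac14 G^2\big/\bigl(2\max_t\|\hat y(t)\|^2+O(G^2)\bigr)$, which rearranges to the stated form $2G^2\big/\bigl(16\max_t\|\hat y(t)\|^2+G^2\bigr)$.

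I expect the only genuine difficulty to be bookkeeping rather than analysis: first, pinning down exactly what is measured so that the displayed ratio is the correct success probability; and second, tracking the constants through the two triangle-inequality steps carefully enough that they assemble into precisely the denominator $16\max_t\|\hat y(t)\|^2+G^2$ instead of a slightly weaker one. All the analytic content is contained in the two applications of the $\ell_2$ triangle inequality over the combined time--coordinate index, fed by \eq{g_condition_2} and \eq{g}.
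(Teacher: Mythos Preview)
Your approach is correct and essentially the same as the paper's: both write $P_{\mathrm{measure}}=\sum_k\|y_1^k\|^2\big/\sum_k\|y^k\|^2$, bound the numerator below and the denominator above using \eq{g_condition_2} and \eq{g}, and cancel the common factor $(m+1)$. The one cosmetic difference is that the paper works term-by-term via the inequality $\|a+b\|^2\le 2\|a\|^2+2\|b\|^2$ (applied with $a=\hat y_1(kh)-y_1^k$, $b=y_1^k$) rather than your aggregated triangle inequality; contrary to your aside, this term-by-term route on the \emph{squared} norms works perfectly well and lands on the same numerator bound $\tfrac14(m+1)G^2$.
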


\begin{proof}
The quantum state produced by the QLSA applied to \eq{linear_system} has the form
\begin{equation}
|Y\rangle = \frac{1}{\||Y\rangle\|}\sum_{k=0}^{m}y^k|k\rangle = \frac{1}{\|Y\|}\sum_{k=0}^{m}\sum_{j=1}^Ny_j^k|j\rangle|k\rangle
\end{equation}
where the normalization factor satisfies $\||Y\rangle\|^2 = \sum_{k=0}^{m}\|y^k\|^2 = \sum_{k=0}^{m}\sum_{j=1}^N\|y_j^k\|^2$. 

We aim to obtain the target quantum state as the form
\begin{equation}
|Y_{\mathrm{evo}}\rangle = \frac{1}{\|Y_{\mathrm{evo}}\|}\sum_{k=0}^{m}y^k_1|k\rangle.
\label{eq:history_state}
\end{equation}
which corresponds to the gradient flow evolution state \eq{history}
We measure the register $|j\rangle$, $j\in\range{N}$ and extract $|Y_{\mathrm{target}}\rangle$ from $|Y\rangle$ when $k=1$. The success probability is lower bounded as below.

According to the Cauchy–Schwarz inequality, 
\begin{equation}
\|\hat y_1(kh)\|^2 = \|\hat y_1(kh)-y^k_1+y^k_1\|^2 \le 2\|\hat y_1(kh)-y^k_1\|^2 + 2\|y^k_1\|^2,
\end{equation}
so that
\begin{equation}
\||y^k_1\rangle\|^2
\ge \frac{1}{2}\|\hat y_1(kh)\|^2 - \|\hat y_1(kh)-y^k_1\|^2
\ge \frac{1}{2}\|\hat y_1(kh)\|^2 - \frac{1}{4}G^2.
\end{equation}
Summing $k$ from $0$ to $m$, and using the definition of $G$ in \eq{g}, we have
\begin{equation}
\begin{aligned}
\||Y_{\mathrm{evo}}\rangle\|^2 
&= \sum_{k=0}^{m}\|y_1^k\|^2 \ge \frac{1}{2}\sum_{k=0}^{m}\|\hat y_1(kh)\|^2 - \frac{m+1}{4}G^2 \\
&\ge \frac{m+1}{2}G^2 - \frac{m+1}{4}G^2 = \frac{m+1}{4}G^2.
\end{aligned}
\label{eq:measure_1}
\end{equation}

Second, we use \eq{g_condition_2} and the parallel inequality again to upper bound $\|y^k\|^2$ by
\begin{equation}
\begin{aligned}
\|y^k\|^2 
\le 2\|\hat y(kh)\|^2 + 2\|\hat y(kh)-y^k\|^2 \le 2\|\hat y(kh)\|^2 + \frac{1}{2}G^2.
\end{aligned}
\end{equation}
Therefore
\begin{equation}
\begin{aligned}
\||Y\rangle\|^2 = \sum_{k=0}^{m}\|y^k\|^2 \le 2\sum_{k=0}^{m}\|\hat y(kh)\|^2 + \frac{m+1}{2}G^2 \le 2(m+1)\max_{t\in[0,T]}\|\hat y(t)\|^2 + \frac{m+1}{2}G^2.
\label{eq:measure_2}
\end{aligned}
\end{equation}

Finally, using \eq{measure_1} and \eq{measure_2}, we have
\begin{equation}
P_{\mathrm{measure}} = \frac{\||Y_{\mathrm{evo}}\rangle\|^2}{\||Y\rangle\|^2} \ge 
\frac{G^2}{8\max_{t\in[0,T]}\|\hat y(t)\|^2 + 2G^2}
\end{equation}
as claimed.

\end{proof}

Using amplitude amplification, $\Or(\sqrt{1/P_{\mathrm{measure}}})$ iterations suffice to succeed with constant probability.

\subsection{Proof of \texorpdfstring{\thm{main}}{Theorem \ref{thm:main}}
}

\begin{proof}
We first present the quantum Carleman linearization (QCL) algorithm and then analyze its complexity.

\paragraph{The QCL algorithm.} 

We introduce the choice of parameters as follows. 
Given an error bound $\epsilon \le 1$ and $G$, we define
\begin{equation}
\delta \coloneqq 
\frac{G\epsilon}{1+\epsilon},
\label{eq:delta_1}
\end{equation}
which satisfies $\delta \le G/2$ for any $t\in[0,T]$. 

Now we discuss the choice of $h$. On the one hand, $h$ must follow \eq{h} to satisfy the conditions of \lem{global} and \lem{condition}.
On the other hand, we choose
\begin{equation}
\begin{aligned}
h \le \min \biggl\{ \frac{1}{N^2[4Dd(n+1)^2 + a]},\frac{G\epsilon}{N^2T[4Dd(n+1)^2 + a + b]^2\max_{t\in[0,T]}\|\hat y(t)\|}\biggr\}
\end{aligned}
\end{equation}
Then according to the requirement \eq{h} in \lem{global}, and for $k\in\rangez{m+1}$, 
\begin{equation}
\|\hat y_1(kh)-y^k_1\| \le \|\hat y(kh)-y^k\| \le \frac{N^2Th}{2}[4Dd(n+1)^2 + a + b]^2\max_{t\in[0,T]}\|\hat y(t)\| \le \delta.
\label{eq:error_2}
\end{equation}
It also leads to $\|\hat y(kh)-y^k\| \le G/2$ used as a condition in \lem{measure}.

We now consider the error between the exact and numerical gradient flow evolutions
\begin{equation}
|\hat Y_{\mathrm{evo}}\rangle = \frac{1}{\|\hat Y_{\mathrm{evo}}\|}\hat Y_{\mathrm{evo}} = \frac{1}{\|\hat Y_{\mathrm{evo}}\|}\sum_{k=0}^{m}\hat y_1(kh)|k\rangle
\end{equation}
and (as denoted in \eq{history_state})
\begin{equation}
|Y_{\mathrm{evo}}\rangle = \frac{1}{\|Y_{\mathrm{evo}}\|}Y_{\mathrm{evo}} = \frac{1}{\|Y_{\mathrm{evo}}\|}\sum_{k=0}^{m}y^k_1|k\rangle,
\end{equation}
where $\|\hat Y_{\mathrm{evo}}\|$ and $\|Y_{\mathrm{evo}}\|$ are normalization factors. Recall the definition of $G$ in \eq{g}
\begin{equation}
G = \frac{1}{\sqrt{m+1}}\|\hat Y_{\mathrm{evo}}\| = \sqrt{\frac{\sum_{k=0}^{m}\|\hat y_1(kh)\|^2}{m+1}},
\end{equation}
The $\ell_2$ normalized error can be controlled by
\begin{equation}
\biggl\| |\hat Y_{\mathrm{evo}}\rangle-|Y_{\mathrm{evo}}\rangle \biggr\| \le \frac{\|\hat Y_{\mathrm{evo}}-Y_{\mathrm{evo}}\|}{\min\{\|\hat Y_{\mathrm{evo}}\|,\|Y_{\mathrm{evo}}\|\}} \le \frac{\|\hat Y_{\mathrm{evo}}-Y_{\mathrm{evo}}\|}{\|\hat Y_{\mathrm{evo}}\|-\|\hat Y_{\mathrm{evo}}-Y_{\mathrm{evo}}\|}.
\label{eq:normalized_inequality}
\end{equation}
Then using \eq{error_2}, since
\begin{equation}
\|\hat Y_{\mathrm{evo}}-Y_{\mathrm{evo}}\|^2 \le \sum_{k=0}^{m} \|\hat y_1(kh)-y^k_1\|^2 \le (m+1)\delta^2,
\end{equation}
we have
\begin{equation}
\|\hat Y_{\mathrm{evo}}-Y_{\mathrm{evo}}\| \le \sqrt{m+1}\delta,
\end{equation}
which gives
\begin{equation}
\biggl\| |\hat Y_{\mathrm{evo}}\rangle-|Y_{\mathrm{evo}}\rangle \biggr\| \le \frac{\sqrt{m+1}\delta}{\|\hat Y_{\mathrm{evo}}\|-\sqrt{m+1}\delta} = \frac{\delta}{G-\delta} = \epsilon, 
\label{eq:error_4}
\end{equation}
i.e., $\epsilon$ upper bounds the $\ell_2$ normalized error between $|\hat Y_{\mathrm{evo}}\rangle$ and $|Y_{\mathrm{evo}}\rangle$.

We follow the procedure in \lem{preparation} to prepare the initial state $|\hat y_{\mathrm{in}}\rangle$.
We apply the QLSA \cite{CKS15} to the linear system \eq{linear_system} with $m=\lceil T/h \rceil$, giving a solution $|Y\rangle$. 
By \lem{measure}, the probability of obtaining a state is $\sum_{k=0}^{m}|y_1^k\rangle|k\rangle$ 
\begin{equation}
P_{\mathrm{measure}} \ge \frac{2G^2}{16\max_{t\in[0,T]}\|\hat y(t)\|^2 + G^2}.
\end{equation}
By amplitude amplification, we can achieve success probability $\Omega(1)$ with $\Or(\max_{t\in[0,T]}\|\hat y(t)\|/G)$ repetitions of the above procedure.

\paragraph{Analysis of the complexity.} By \lem{preparation}, the initial state $|\hat y_{\mathrm{in}}\rangle$ can be prepared with $\Or(N)$ queries to $O_x$, with gate complexity larger by a $\Or(\text{poly}(\log N, \log n))$ factor. The matrix $L$ is an $(m+1)\mathcal{N}_{d,N}\times(m+1)\mathcal{N}_{d,N}$ matrix with $\Or(Ns)$ nonzero entries in any row or column. By \lem{condition} and our choice of parameters, the condition number of $L$ is at most
\begin{equation}
\begin{aligned}
&2(m+1) \\
&\quad= O\biggl(\frac{1}{G\epsilon}N^2T^2[4Dd(n+1)^2 + a + b]^2\max_{t\in[0,T]}\|\hat y(t)\|+N^2T[4Dd(n+1)^2 + a] \biggr) \\
&\quad= O\Bigl(\frac{1}{G\epsilon}N^2T^2D^2d^2n^4\max_{t\in[0,T]}\|\hat y(t)\|\Bigr).
\label{eq:condition_estimate}
\end{aligned}
\end{equation}
Here we use $(\|F_1\|+\|F_M\|)^2 = [4Dd(n+1)^2 + a + b]^2$, and $|a|, |b| = o(d)$.
Consequently, by Theorem 5 of \cite{CKS15}, the QLSA produces the state $|Y\rangle$ with
\begin{equation}
\begin{aligned}
\frac{1}{G\epsilon}sT^2D^2d^2n^4N^3\max_{t\in[0,T]}\|\hat y(t)\|\cdot\poly\biggl(\log\frac{aDdnNsT}{G\epsilon}\biggr)
\end{aligned}
\end{equation}
queries to the oracles $O_{F_1}$ and $O_{F_2}$. Using $O(\max_{t\in[0,T]}\|\hat y(t)\|/G)$ steps of amplitude amplification to achieve success probability $\Omega(1)$, the overall query complexity of our algorithm is
\begin{equation}
\begin{aligned}
\frac{1}{G^2\epsilon}sT^2D^2d^2n^4N^3\max_{t\in[0,T]}\|\hat y(t)\|^2\cdot\poly\biggl(\log\frac{aDdnNsT}{G\epsilon}\biggr) 
\label{eq:query_complexity}
\end{aligned}
\end{equation}
and its gate complexity is larger than its query complexity only by logarithmic factors,
based on the gate-efficient algorithm in Theorem 5 of \cite{CKS15}.

We now estimate the quantity $\max_{t\in[0,T]}\|\hat y(t)\|$. By the definition of $\eta_j(t)$, \thm{Carleman}, and $R_D<1$, we have
\begin{equation}
\|\hat y_j(t)\| = \|U^{\otimes j}(t) - \eta_j(t) \| \le \|U^{\otimes j}(t)\| + \|\eta_j(t)\| \le \max_{t\in[0,T]}\|U(t)\|^j + \gamma^{j} \le 2\max_{t\in[0,T]}\|U(t)\|^j,
\end{equation}
so that
\begin{equation}
\max_{t\in[0,T]}\|\hat y(t)\| \le 2\sum_{j=1}^N(\max_{t\in[0,T]}\|U(t)\|)^j.
\end{equation}
Based on the $\ell_2$ estimate of the solution in \lem{l2_estimate2}, when $R_D < 1$, we have 
\begin{equation}
\max_{t\in[0,T]}\|U(t)\| \le \|U_{\mathrm{in}}\|.
\end{equation}
Therefore, the overall query complexity of our algorithm is
\begin{equation}
\begin{aligned}
\frac{1}{G^2\epsilon}sT^2D^2d^2n^4N^3\|U_{\mathrm{in}}\|^{2N}\cdot\poly\biggl(\log\frac{aDdMnNsT}{G\epsilon}\biggr)
\end{aligned}
\end{equation}
and the gate complexity is larger than its query complexity by logarithmic factors
as claimed.
\end{proof}

\section{Efficiency and lower bound results}
\label{sec:lower}

The reference \cite{LKK20} established a limitation on the ability of quantum computers to solve the quadratic ODE problem when the nonlinearity is sufficiently strong. In other words, general nonlinear differential equations are intractable on quantum computers when $R \ge \sqrt{2}$. However, we can rule out such a worst-case by assuming that the initial condition of reaction-diffusion equations fulfills the maximum principle, and thus show the problem is still tractable on quantum computers. 

In the following, we state and prove our hardness and efficiency results. 
Part (i) focuses on the hardness result when $R\geq \sqrt{2}$, which has been studies~\cite[Theorem 2]{LKK20} by leveraging the hardness result of quantum state discrimination. 
However, there is a technical flaw in the original proof in~\cite{LKK20}. 
The hardness result for quantum state discrimination used in~\cite{LKK20} only assumes multiple copies of the input states at the beginning and does not allow access to the state during the algorithm. 
But, in most quantum ODE algorithms, including the Carleman linearization method, we indeed have a stronger assumption that we assume the state preparation oracle for the input state and its \emph{inverse}, and we frequently apply those during the implementation of the algorithm. 
Therefore the existing lower bound in~\cite{LKK20} has not yet fully ruled out the possibility of efficient algorithms with strong oracle assumptions. 
We fix this gap in part (i) by applying a recent lower bound for amplifiers~\cite[Theorem 13]{AnLiuWangZhao2023}, where the state preparation oracles are assumed. 
Part (ii) shows that the worst-case scenario can be precluded by assuming the maximum principle, implying that our maximum principle analysis captures the underlying reason for the efficiency of Carleman linearization method.

\begin{theorem}\label{thm:lower}
We consider the same assumptions in \prb{node}.
\\
(i) Assume $R \ge \sqrt{2}$, and the initial condition satisfies $\|U_{\mathrm{in}}\|_{\infty}>\gamma$. Then there is an instance of the quantum quadratic ODE problem defined in \prb{node} such that any quantum algorithm for producing a quantum state approximating $u(T)/\| u(T)\|$ with bounded error must have worst-case query complexity exponential in $T$ to the input state preparation oracle.
\\
(ii) If the initial condition satisfies the maximum principle $\|U_{\mathrm{in}}\|_{\infty}\le\gamma$ as \eq{A0}, then such a worst-case example can be precluded even $R \ge \sqrt{2}$. 
\end{theorem}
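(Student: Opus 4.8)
The plan is to treat the two parts separately, as they have fundamentally different flavors: part (i) is a hardness (lower bound) argument that constructs a bad instance, while part (ii) is a soft observation that the maximum-principle hypothesis forbids that construction. For part (i), I would follow the template of \cite[Theorem 2]{LKK20} but route the key step through the amplifier lower bound of \cite[Theorem 13]{AnLiuWangZhao2023} to close the oracle-access gap described above. Concretely, I would start with a one-dimensional (or low-dimensional) scalar nonlinear ODE of the form $u' = \lambda_1 u + b u^M$ whose solution, for $R = \frac{|b|}{|\lambda_1|}\|U_{\mathrm{in}}\|^{M-1} \ge \sqrt 2$, exhibits the characteristic branching behavior: for initial data on one side of the unstable equilibrium $\gamma = (|a|/|b|)^{1/(M-1)}$ the solution decays toward a stable root, and on the other side it blows up (or runs off to a different root) in finite or $O(1)$ time. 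The point is that the normalized final state $u(T)/\|u(T)\|$ depends discontinuously — in the large-$T$ limit — on which side of $\gamma$ the initial amplitude $\|U_{\mathrm{in}}\|_\infty$ lies. Embedding a near-threshold initial condition whose ``side'' encodes one bit, any algorithm that outputs (an approximation to) the normalized solution state thereby distinguishes the two branches, i.e.\ acts as an amplifier that turns a tiny perturbation of the input state into an $\Omega(1)$-distinguishable output state.

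The main obstacle — and the reason the original proof needs patching — is precisely the oracle model: the reduction must be compatible with the algorithm having black-box access to the state-preparation unitary $O_x$ \emph{and its inverse}, used adaptively, rather than merely receiving a fixed number of copies of $U_{\mathrm{in}}$. This is where I would invoke \cite[Theorem 13]{AnLiuWangZhao2023}: I would phrase the bad instance so that solving the ODE to precision $\epsilon$ in time $T$ implements, up to the allowed error, a map that amplifies an $O(e^{-cT})$-small gap between two possible input states into an $\Omega(1)$ gap between the corresponding outputs; the amplifier lower bound then forces $\Omega(e^{cT})$ queries to $O_x$ and $O_x^\dagger$. The delicate bookkeeping will be matching the quantitative scales: I must check that the contraction/expansion rates of the nonlinear flow make the input gap shrink exponentially in $T$ while the output gap stays bounded below, and that the QCL normalization (the $G$, $q$, $\|U(T)\|$ factors) does not accidentally wash this out. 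I would also need to confirm the instance still satisfies all structural hypotheses of \prb{node} ($F_1$ symmetric diagonalizable with negative spectrum, $s$-sparsity, $\|F_M\|\le|\lambda_1|$ after rescaling), which is straightforward in the scalar case.

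Part (ii) is then immediate and I would present it as a one-paragraph consequence of \lem{apriori}: if $\|U_{\mathrm{in}}\|_\infty \le \gamma$, the Maximum Principle \eq{U_infty_bdd} pins the solution into the invariant region $\|U(t)\|_\infty \le \gamma$ for all $t \ge 0$, so the near-threshold branching mechanism underlying part (i) — which requires initial data straddling the unstable equilibrium, i.e.\ $\|U_{\mathrm{in}}\|_\infty > \gamma$ — can never arise. More quantitatively, under the maximum principle the construction of part (i) cannot be instantiated because any admissible initial condition already lies in the basin where the dynamics are uniformly dissipative toward the bounded invariant set, so the relevant condition number / amplification factor stays polynomially bounded, consistent with the positive algorithmic result \thm{main}. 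I would close by remarking that this dichotomy shows the maximum-principle hypothesis is not merely a technical convenience but captures the actual mechanism separating tractable from intractable instances of reaction-diffusion dynamics.
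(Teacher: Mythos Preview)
Your proposal is correct and follows essentially the same approach as the paper: for part (i) you reconstruct the decoupled scalar nonlinear system of \cite[Theorem 2]{LKK20} and route the distinguishing step through the amplifier lower bound of \cite[Theorem 13]{AnLiuWangZhao2023}, exactly as the paper does (the paper uses the explicit $2$-dimensional quadratic instance with two nearby pure states whose overlap drops from $1-\epsilon$ to a constant after $T=\Or(\log(1/\epsilon))$, then invokes the $\Omega(1/\sqrt{\epsilon})$ amplifier bound). For part (ii) the paper's argument is slightly more constructive than your obstruction-style sketch: rather than only observing that the hard instance necessarily has $\|U_{\mathrm{in}}\|_\infty>\gamma$, it plugs the bound $\|U_{\mathrm{in}}\|\le\sqrt{n_d}\,\gamma$ into the complexity estimate of \thm{main} and then uses \thm{Carleman} to note that the Carleman truncation level $N$ can be chosen independently of $T$, so the $\|U_{\mathrm{in}}\|^{2N}$ prefactor does not reintroduce exponential $T$-dependence---you gesture at this with ``consistent with the positive algorithmic result \thm{main}'' but should make that step explicit.
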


\begin{proof}
We consider the lower bound result when $R \ge \sqrt{2}$ and $\|U_{\mathrm{in}}\|_{\infty}>\gamma$. The same as Theorem 2 of \cite{LKK20}, we consider a $2$-dimensional system of the form
\begin{equation}
\begin{aligned}
\frac{\d{u_1}}{\d{t}} &= -u_1 + R u_1^2, \\
\frac{\d{u_2}}{\d{t}} &= -u_2 + R u_2^2,
\end{aligned}
\label{eq:NODE2}
\end{equation}
with two single-qubit states as initial conditions
\begin{equation}
|\phi(0)\rangle = \frac{1}{\sqrt{2}} (|0\rangle + |1\rangle)
\label{eq:phi_0}
\end{equation}
and
\begin{equation}
|\psi(0)\rangle = v_0|0\rangle + w_0|1\rangle := \cos\Bigl(\theta+\frac{\pi}{4}\Bigr) |0\rangle + \sin\Bigl(\theta+\frac{\pi}{4}\Bigr) |1\rangle,
\label{eq:psi_0}
\end{equation}
where $\theta \in (0,\pi/4)$, $2\sin^2\frac{\theta}{2} = \epsilon$, with an arbitrary small $\epsilon>0$. Then the overlap between the two initial states is
\begin{equation}
\langle \phi(0) | \psi(0) \rangle := \cos\theta
= 1-\epsilon.
\label{eq:overlap_0}
\end{equation}
We let $U(t)=[v(t);w(t)]$ denote the solution evolved from $U(0)=[v_0;w_0]$. According to Lemma 8 of \cite{LKK20}, $|\phi(t)\rangle=|\phi(0)\rangle$ is the fixed state; but if $1/R\ge1/\sqrt{2}$, $w(t)$ increases with $t$ and goes to infinity after
\begin{equation}
t> t^{\ast} \coloneqq \log\biggl(\frac{R}{R-1/w_0}\biggr).
\label{eq:domain}
\end{equation}
The overlap of $|\phi(T)\rangle$ and $|\psi(T)\rangle$ is no larger than a constant (e.g., $\frac{3}{\sqrt{10}}$ used in \cite{LKK20}) after a short evolution time
\begin{equation}
T < t^{\ast} = \log\biggl(\frac{R}{R-\frac{1}{w_0}}\biggr) < \log\biggl(\frac{\sqrt{2\epsilon-\epsilon^2}+1-\epsilon}{\sqrt{2\epsilon-\epsilon^2}-\epsilon}\biggr) = \Or(\log({1}/{\epsilon})).
\label{eq:T_estimate}
\end{equation}
It was shown in~\cite[Theorem 13]{AnLiuWangZhao2023} that, if a quantum algorithm with oracle input model can amplify the infidelity of two quantum states from $\epsilon$ to a constant level, then it must use $\Omega(1/\sqrt{\epsilon})$ queries in the worst case. 
By applying this result and noticing that $1/\sqrt{\epsilon} = e^{\Omega(T)}$, we directly obtain that when $R \ge \sqrt{2}$, there is an instance of the quantum quadratic ODE problem that any quantum algorithm must have worst-case time complexity exponential in $T$.

In our paper, the ODE system \eq{NODE2} is a reduced example of reaction-diffusion equations \eq{RDPDE} with $D=0$, $a=-1$, $b=R$, $M=2$, and $R_D = 1$. Besides,  $\|U_{\mathrm{in}}\|_{\infty}
= \sin\Bigl(\theta+\frac{\pi}{4}\Bigr)$ satisfies 
\begin{equation}
1/\sqrt{2}
< \sin\Bigl(\theta+\frac{\pi}{4}\Bigr) 
= (\sqrt{2\epsilon-\epsilon^2}+1-\epsilon)/\sqrt{2}
< (1+\sqrt{2\epsilon})/\sqrt{2},
\end{equation}
where $\sin\Bigl(\theta+\frac{\pi}{4}\Bigr)$ is close to $1/\sqrt{2}$ when $\epsilon$ is close to $0$. 
Notice that this example disobeys \eq{A0}, the condition of the Maximum Principle \lem{apriori}, because
\begin{equation}
\|U_{\mathrm{in}}\|_{\infty} > \gamma = 1/\sqrt{2}.
\end{equation}

Secondly, we consider an upper bound on $\|U_{\mathrm{in}}\|$ given the maximum principle $\|U_{\mathrm{in}}\|_{\infty}\le\gamma$. Then we have
\begin{equation}
\|U_{\mathrm{in}}\| \le \sqrt{n_d}\gamma.
\end{equation}
Substituting this estimate into the complexity in \thm{main}, we can upper bound the query complexity by
\begin{equation}
\begin{aligned}
\frac{1}{G^2\epsilon}sT^2D^2d^2n^4N^3n_d^N\gamma^{2N}\cdot\poly\biggl(\log\frac{aDdMnNsT}{G\epsilon}\biggr).
\end{aligned}
\end{equation}
Notice that the upper bound of the complexity still depends exponentially on $N$. 
However, according to~\thm{Carleman}, the Carleman error converges exponentially in $N$ and can be bounded independently of $T$. 
So $N$ can be chosen independently of $T$ as well, therefore our algorithm does not have the exponential overhead in $T$ stated in part (i). 
\end{proof}

The upper bounds on the query and gate complexity demonstrate that the quantum algorithm we develop has a roughly quadratic dependence on $T$ when $R_D<1$ and $\|U_{\mathrm{in}}\|_{\infty}\le\gamma$, regardless of the value of $R$. Such a loose upper bound includes a polynomial dependence on $n$, revealing that the quantum algorithm does not have a potential exponential speedup in the dimension. 

But if we are given an additional assumption
\begin{equation}
\|U_{\mathrm{in}}\| = o(n_d),
\end{equation}
then we can upper bound the query complexity by
\begin{equation}
\begin{aligned}
\frac{1}{G^2\epsilon}sT^2D^2d^2n^4N^3\cdot\poly\biggl(\log\frac{aDdMnNsT}{G\epsilon}\biggr),
\end{aligned}
\end{equation}
and the gate complexity has an upper bound that is larger by logarithmic factors as claimed. 
In this case, our algorithm still maintains the potential exponential speedup in the dimension over classical algorithms.

\section{Applications}
\label{sec:application}

In this section, we show how the quantum state obtained by solving \prb{node} can be used to compute quantities of practical interest. 
For generality, in this section, we consider the applications of a quantum state in the form specified in \prb{node}, but not limited to the output by particular algorithms. 
More specifically, let $l=(l_1,\ldots,l_d)$ with $l_j\in\range{n}$, and let $f(t,x)$ be a function defined on $[0,T]\otimes [0,1]^d$. We assume that there exists a quantum algorithm $\mathcal{A}(\epsilon)$ which can prepare the quantum state
\begin{equation}\label{eq:app_state}
|\hat{f}\rangle \sim \sum_{k=0}^{m-1}\sum_{l_1=0}^{n-1}\cdots \sum_{l_d=0}^{n-1} \hat{f}_{k,l} \ket{k}|l_1\rangle\ldots|l_d\rangle 
\end{equation}
proportional to the vector $(f(kT/m,l_1/n,\cdots,l_d/n))$ within some prescribed error tolerance $\epsilon>0$ in $\ell^2$ norm. 
Here $\hat f_{k,l}$ represents an approximation of the function $f$ evaluated at $(kT/m, l_1/n,\cdots, l_d/n)$. 
In the context of this paper, the algorithm $\mathcal{A}$ is the quantum Carleman linearization method, but the discussion in this section works for any quantum algorithm which can encode a function evaluated at discrete grid points. 

\subsection{Mean square amplitude}

One quantity of potential practical interest is the fraction of the squared amplitude contained in a sub-domain $\D_t \times \D_x$ defined by $\D_t \subset [0,T]$ and $\D_x \subset [0,1]^d$. 
This can be described as the ratio 
\begin{equation}\label{eq:app_ratio_cont}
    \frac{\int_{\D_t}\int_{\D_x}|f(t,x)|^2 dx dt}{\int_{0}^{T}\int_{[0,1]^d}|f(t,x)|^2 dx dt}. 
\end{equation}
In the context of quantum mechanics such quantities are motivated by Born's rule, whereas in the context of classical wave mechanics such quantities are motivated by notions of energy.

In the spatial discretization context, we can approximate the integrals via numerical quadrature with equidistant nodes, and thus we are interested in computing the ratio 
\begin{equation}\label{eq:app_ratio}
    \frac{\sum_{kT/m \in \D_t} \sum_{(l_1/n,\cdots,l_d/n)\in \D_x} |f(kT/m,l_1/n,\cdots,l_d/n)|^2 }{\sum_{k=0}^{m-1} \sum_{l_1=0}^{n-1} \cdots \sum_{l_d=0}^{n-1} |f(kT/m,l_1/n,\cdots,l_d/n)|^2 }. 
\end{equation}
Note that the difference between \eq{app_ratio_cont} and \eq{app_ratio} scales as $\Or(T(T/m+d/n))$~\cite{BurdenNA2000}, so it can be reduced to the level of $\Or(\epsilon)$ with arbitrarily small $\epsilon$ by refining time and spatial discretization as $m = \Or(T^2/\epsilon), n = \Or(dT/\epsilon)$. 
For simplicity, here we fix the grids for discretization and focus on computing the discretized ratio as shown in \eq{app_ratio}. 
This quantity can be easily estimated given the quantum state in the form of \eq{app_state}. 
Let $P$ be the projector onto the space spanned by the computational basis corresponding the $\Omega$, i.e., $P = \sum_{kT/m\in\D_t}\sum_{(l_1/n,\cdots,l_d/n)\in\D_x} (\ket{k}\bra{k})\otimes (\ket{l_1}\bra{l_1})\otimes \cdots \otimes (\ket{l_d}\bra{l_d})$. 
Then 
\begin{equation}
\begin{split}
    &\quad  \frac{\sum_{kT/m\in\D_t} \sum_{(l_1/n,\cdots,l_d/n)\in \D_x} |f(kT/m,l_1/n,\cdots,l_d/n)|^2 }{\sum_{k=0}^{m-1} \sum_{l_1=0}^{n-1} \cdots \sum_{l_d=0}^{n-1} |f(kT/m,l_1/n,\cdots,l_d/n)|^2 } \\
    & \approx \frac{\sum_{kT/m\in\D_t} \sum_{(l_1/n,\cdots,l_d/n)\in \D_x} |\hat{f}_{k,l}|^2 }{\sum_{k=0}^{m-1} \sum_{l_1=0}^{n-1} \cdots \sum_{l_d=0}^{n-1} |\hat{f}_{k,l}|^2 } \\
    & = \langle \hat{f}| P | \hat{f} \rangle
\end{split}
\end{equation}
and thus can be estimated by the amplitude estimate technique~\cite{BHM02}. 

The complexity of such an algorithm is given in the following theorem, and the proof can be found in the appendix \sec{append_proof_app}. 
\begin{theorem}\label{thm:app_ratio}
    Assume that we are given an algorithm $\mathcal{A}(\epsilon')$ preparing \eq{app_state} and a unitary transform $(I-2P)$ for the projector $P$ associated with the domain $\D$. 
    Then, for any $0 < \epsilon < 1$, $0 < \delta < 1$, there exists a quantum algorithm which can output an approximation of \eq{app_ratio} within tolerated error $\epsilon$ and probability at least $(1-\delta)$, using $\mathcal{A}(\epsilon/4)$ and $(I-2P)$ for  $\Or((1/\epsilon)\log(1/\delta))$ times.  
\end{theorem}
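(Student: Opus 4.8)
The plan is to identify the discretized ratio \eq{app_ratio} with a measurement probability and estimate it by amplitude estimation \cite{BHM02}. Writing $\ket{g}$ for the \emph{exact} normalized vector of discrete samples $(f(kT/m,l_1/n,\ldots,l_d/n))$, the quantity \eq{app_ratio} equals exactly $\langle g|P|g\rangle$, the probability that a computational-basis measurement of $\ket{g}$ lands in the range of $P$. By hypothesis $\mathcal{A}(\epsilon')$ prepares a state $\ket{\hat f}$ with $\norm{\ket{\hat f}-\ket{g}}\le\epsilon'$, and $I-2P$ is the reflection about $\mathrm{range}(P)$. First I would form the Grover-type iterate $Q=-\,\mathcal{A}(\epsilon')\,(I-2\ket{0}\bra{0})\,\mathcal{A}(\epsilon')^{-1}\,(I-2P)$, which is assembled from exactly the two supplied subroutines together with the inverse of $\mathcal{A}$, available since $\mathcal{A}$ is a unitary circuit. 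Running phase estimation on $Q$ with $\Or(1/\tilde\epsilon)$ rounds then returns an estimate $\tilde p$ with $\abs{\tilde p-\langle\hat f|P|\hat f\rangle}\le\tilde\epsilon$, using $\Or(1/\tilde\epsilon)$ invocations each of $\mathcal{A}(\epsilon')$ and $I-2P$ and succeeding with probability at least $8/\pi^2$.

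Next I would allocate the error budget, taking $\epsilon'=\epsilon/4$. Writing $\ket{\hat f}=\ket{g}+\ket{e}$ with $\norm{\ket{e}}\le\epsilon'$ and expanding the quadratic form, $\langle\hat f|P|\hat f\rangle-\langle g|P|g\rangle=2\Re\langle g|P|e\rangle+\langle e|P|e\rangle$; since $\norm{P}\le 1$ and $\norm{\ket{g}}=1$, Cauchy-Schwarz gives $\abs{\langle\hat f|P|\hat f\rangle-\langle g|P|g\rangle}\le 2\epsilon'+\epsilon'^2\le\tfrac{9}{16}\epsilon$ for $\epsilon\le1$. Choosing the amplitude-estimation precision $\tilde\epsilon=\epsilon/4$, the triangle inequality then shows that $\tilde p$ approximates \eq{app_ratio} to within additive error $\epsilon$, at a cost of $\Or(1/\epsilon)$ uses of $\mathcal{A}(\epsilon/4)$ and of $I-2P$.

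Finally I would boost the per-run success probability from the constant $8/\pi^2>1/2$ to $1-\delta$ by the standard median trick: run $\Or(\log(1/\delta))$ independent amplitude-estimation estimates and output their median; a Hoeffding bound shows the median deviates from $\langle g|P|g\rangle$ by more than $\epsilon$ with probability at most $\delta$. Multiplying the per-run cost by the number of repetitions gives the claimed total of $\Or\big((1/\epsilon)\log(1/\delta)\big)$ invocations each of $\mathcal{A}(\epsilon/4)$ and $I-2P$.

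This argument is largely routine, assembled from off-the-shelf primitives (amplitude estimation and median amplification). The only steps needing care are the propagation of the $\ell^2$ state-preparation error through the quadratic form $\langle\cdot|P|\cdot\rangle$ and the bookkeeping that keeps the state-preparation error, the amplitude-estimation error, and the discretization already absorbed into \eq{app_ratio} all adding up to at most $\epsilon$ with the stated constant $\epsilon/4$; I do not anticipate a deeper obstacle.
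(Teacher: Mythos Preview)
Your proposal is correct and follows essentially the same approach as the paper: identify \eq{app_ratio} with $\langle g|P|g\rangle$, bound the perturbation $|\langle\hat f|P|\hat f\rangle-\langle g|P|g\rangle|$ by $\Or(\epsilon')$, invoke amplitude estimation \cite{BHM02} to get additive error $\Or(1/q)$ with constant success probability, and boost via the median/powering lemma. The paper's proof differs only cosmetically---it bounds the quadratic-form perturbation directly by $2\epsilon'$ rather than your $2\epsilon'+\epsilon'^2$, and it quotes the explicit BHM02 error bound $2\pi\sqrt{p(1-p)}/q+\pi^2/q^2\le 2\pi/q$ before choosing $q=4\pi/\epsilon$.
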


\thm{app_ratio} proves an almost linear sampling scaling in terms of the precision for the quantum approach to estimate the ratio.
Compared with the most widely used classical approach for estimating high dimensional integral, standard Monte Carlo methods, which typically scale quadratically in the precision, we can obtain a quadratic speedup for the sampling step.

\subsection{Derivatives and kinetic energy}

As a particular example of practical interest, we would like to study the dynamical kinetic energy ratio of the system on the domain $\D_t\times\D_x \subset [0,T]\times[0,1]^d$, which is defined to be 
\begin{equation}\label{eq:app_ratio_free_energy}
    \frac{\int_{\D_t} \int_{\D_x} |\nabla_x f(t,x)|^2 dx dt}{\int_0^T \int_{[0,1]^d} |\nabla_x f(t,x)|^2 dx dt} \approx \frac{\sum_{kT/m\in\D_t} \sum_{(l_1/n,\cdots,l_d/n)\in \D_x} |\nabla_x f(kT/m,l_1/n,\cdots,l_d/n)|^2 }{\sum_{k=0}^{m-1} \sum_{l_1=0}^{n-1} \cdots \sum_{l_d=0}^{n-1} |\nabla_x f(kT/m,l_1/n,\cdots,l_d/n)|^2 }.  
\end{equation}
To apply our result in \thm{app_ratio}, we first study how to prepare a quantum state which is proportional to the partial derivative of the function $f$. 
That is, our goal is to prepare a quantum state which is a good approximation of 
\begin{equation}\label{eq:app_deriv_state}
    \ket{\nabla_x f} \sim \sum_{j=1}^{d}\sum_{k=0}^{m-1} \sum_{l_1=0}^{n-1} \cdots \sum_{l_d=0}^{n-1} \nabla_{x_j} f(kT/m,l_1/n,\cdots,l_d/n) \ket{j}\ket{k}\ket{l_1}\cdots\ket{l_d}. 
\end{equation}
Compared to \eq{app_state}, here we introduce one more register to simultaneously encode all the partial derivatives in a single quantum state. 
In this subsection, we further assume that the function $f(t,x)$ satisfies periodic boundary conditions for spatial variable $x$. 
Notice that the periodic boundary condition here is only for technical simplicity and not essential because a general function can be smoothly extended to a larger space domain with periodic boundary conditions. Our algorithm will also work by requiring access to another projector connecting the original and the extended space domains. 

We first study how to prepare a quantum state encoding the derivative information within amplitudes. 
The idea is using the discrete Fourier transform to transform the state to the frequency domain, multiplying the frequency in this domain and then transforming back to the space domain by inverse discrete Fourier transform. 
More specifically, let 
$\mathcal{F}$ denote the one-dimensional quantum Fourier transform\footnote{Note that, following the standard convention of choosing signs in~\cite{NielsenChuang2000}, the quantum Fourier transform exactly corresponds to the inverse discrete Fourier transform in the classical setting.} with $n$ nodes.
Furthermore, for any positive integer $\theta \leq n/2$, let $l = (l_1,\cdots,l_d)$ and $D_{j,\theta}$ be a diagonal matrix 
    \begin{equation}
        D_{j,\theta} = 2\pi i \theta \sum_{l_1,\cdots, l_d = 0}^{n-1} D_{j,\theta}(l) \ket{l}\bra{l} 
    \end{equation}
    where 
    \begin{equation}
        D_{j,\theta}(l) = \begin{cases}
           l_j/\theta , & \text{ if } 1 \leq l_j \leq \theta, \\
          (l_j - n)/\theta , & \text{ if } n-\theta \leq l_j \leq n-1, \\
          0 , & \text{ else}.
        \end{cases}
    \end{equation}
Notice that we define $D_{j,\theta}(l)$ such that $|D_{j,\theta}| \leq 1$. 
Then, for any smooth function $g(x)$ defined on $[0,1]^d$, we have 
\begin{equation}
\begin{split}
    & \quad (\otimes_{j-1}I \otimes \mathcal{F} \otimes_{d-j} I)D_{j,\theta}(\otimes_{j-1}I \otimes \mathcal{F}^{-1} \otimes_{d-j} I)\sum_{l_1=0}^{n-1}\cdots \sum_{l_d=0}^{n-1} g(l_1/n_1,\cdots,l_d/n_d) \ket{l_1}\cdots \ket{l_d} \\
    &\approx \sum_{l_1=0}^{n-1}\cdots \sum_{l_d=0}^{n-1} \partial_{x_j} g(l_1/n_1,\cdots,l_d/n_d) \ket{l_1}\cdots \ket{l_d}. 
\end{split}
\end{equation}
Such an approach has been widely used and is regarded as the standard way to compute derivatives in classical scientific computing, and we briefly illustrate the reasoning in the appendix \sec{append_DFT}.

Now we discuss how to implement this approach on a quantum device. 
In the general high-dimensional case, we introduce another ancilla register with $\Or(\log d)$ qubits, which we will refer to as the dimension register later, and start with the state 
\begin{equation}
    \ket{0}\ket{f} =  \frac{1}{\|\vec{f}\|} \sum_{k=0}^{m-1}\sum_{l_1,\cdots,l_d = 0 }^{n-1} f(kT/m, l_1/n,\cdots,k_d/n) \ket{0}\ket{k}\ket{l_1}\cdots\ket{l_d}. 
\end{equation}
Applying the Hadamard gates to the dimension register, we obtain 
\begin{equation}
    \frac{1}{\sqrt{d}\|\vec{f}\|} \sum_{k=0}^{m-1}\sum_{l_1,\cdots,l_d = 0 }^{n-1} \sum_{j=0}^{d-1} f(k_1/n,\cdots,k_d/n) \ket{j}\ket{k}\ket{l_1}\cdots\ket{l_d}. 
\end{equation}
The discrete Fourier transform can be efficiently implemented via quantum Fourier transform. 
Specifically, we apply the operation $(\sum_{j=0}^{d-1}\ket{j}\bra{j} \otimes I_{n_s}\otimes \mathcal{F}_j^{-1})$ and denote the resulting state as 
\begin{equation}
    \frac{1}{\sqrt{d}\|\vec{f}\|} \sum_{k=0}^{m-1}\sum_{l_1,\cdots,l_d = 0 }^{n-1} \sum_{j=0}^{d-1} \tilde{f}(j,k,l) \ket{j}\ket{k}\ket{l_1}\cdots\ket{l_d}. 
\end{equation}
For the multiplication of the matrix $D_{j,\theta}$ (we will show later that $\theta$ can be chosen as an $\Or(1)$ parameter for smooth functions), we assume that we are given an oracle of the mapping 
\begin{equation}\label{eq:app_O_D}
    O_D: \ket{0}\ket{j}\ket{l} \rightarrow \ket{D_{j,\theta}(l)}\ket{j}\ket{l}. 
\end{equation}
Then the multiplication of the matrix $D_{j,\theta}$ can be implemented as follows. 
We first add two ancilla registers, one as the rotation register on which we will perform conditional rotation later and the other as the $D$-register for encoding $D_{j,\theta}$. 
Applying $O_D$ to encode $D_{j,
\theta}(l)$ in the $D$-register gives 
\begin{equation}
    \frac{1}{\sqrt{d}\|\vec{f}\|} \sum_{k=0}^{m-1}\sum_{l_1,\cdots,l_d = 0 }^{n-1} \sum_{j=0}^{d-1} \tilde{f}(j,k,l) \ket{0}\ket{D_{j,\theta}(l)}\ket{j}\ket{k}\ket{l_1}\cdots\ket{l_d}. 
\end{equation}
Performing a rotation on the rotation register conditioned on $\ket{D_{j,\theta}(l)}$ yields 
\begin{equation}
    \frac{1}{\sqrt{d}\|\vec{f}\|} \sum_{k=0}^{m-1}\sum_{l_1,\cdots,l_d = 0 }^{n-1} \sum_{j=0}^{d-1} \tilde{f}(j,k,l)\left(D_{j,\theta}(l)\ket{0} + \sqrt{1-D_{j,\theta}(l)^2} \ket{1}\right) \ket{D_{j,\theta}(l)}\ket{j}\ket{k}\ket{l_1}\cdots\ket{l_d}. 
\end{equation}
Uncomputing the $D$-register gives the state
\begin{equation}
    \frac{1}{\sqrt{d}\|\vec{f}\|} \sum_{k=0}^{m-1}\sum_{l_1,\cdots,l_d = 0 }^{n-1} \sum_{j=0}^{d-1} D_{j,\theta}(l)\tilde{f}(j,k,l)\ket{0}\ket{0}\ket{j}\ket{k}\ket{l_1}\cdots\ket{l_d} + \ket{\perp}, 
\end{equation}
where the first part is the desired outcome after the diagonal transformation, and $\ket{\perp}$ represents a quantum state with the rotation register being $\ket{1}$. 
Finally, applying $(\sum_{j=0}^{d-1} \ket{j}\bra{j} \otimes \mathcal{F}_j)$ on the registers $(\ket{j}\otimes \ket{l})$ completes the operation for computing the partial derivatives as discussed before, which yields an approximation of 
\begin{align}
     \frac{1}{2\pi i \theta } \frac{1}{\sqrt{d}\|\vec{f}\|} \sum_{k=0}^{m-1} \sum_{l_1,\cdots,l_d = 0 }^{n-1} \sum_{j=0}^{d-1} \partial_{x_j} f(kT/m, l_1/n,\cdots,l_d/n) \ket{0}\ket{0} \ket{j}\ket{k}\ket{l_1}\cdots\ket{l_d} + \ket{\perp}. 
\end{align}
By measuring the ancilla rotation register to get $0$ and discarding the $D$-register, we get a quantum state approximately proportional to 
\begin{equation}\label{eq:application_dev_state}
    \sum_{k=0}^{m-1}\sum_{l_1,\cdots,l_d = 0 }^{n-1} \sum_{j=0}^{d-1} \partial_{x_j} f(k_1/n,\cdots,k_d/n) \ket{j}\ket{k}\ket{l_1}\cdots\ket{l_d}, 
\end{equation}
which encodes the partial derivatives in the amplitude controlled by a dimension register. 
The entire quantum circuit is summarized in Fig. \ref{fig:circuit_derivative}, and the overall complexity estimate is given in the following theorem, of which the proof can be found in the appendix \sec{append_proof_app}. 
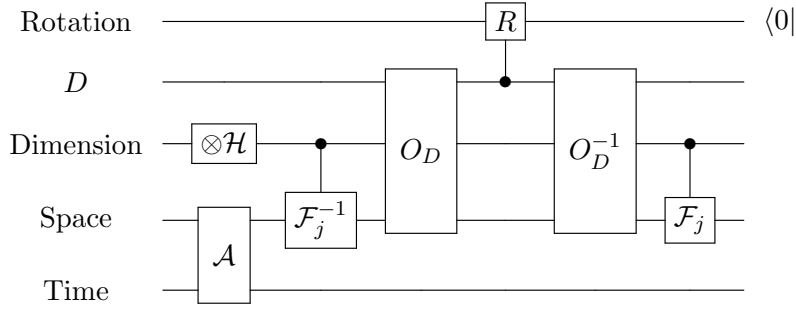
\begin{figure}
    \centerline{
    \Qcircuit @R=1em @C=1em {
    \text{Rotation}\quad\quad\quad\quad\quad\quad &  \qw & \qw & \qw & \gate{R} & \qw & \qw & \qw & \rstick{\!\!\!\!\!\bra{0}} \\
    \text{$D$}\quad\quad\quad\quad\quad\quad  &  \qw & \qw & \multigate{2}{O_D} & \ctrl{-1} & \multigate{2}{O_D^{-1}} & \qw & \qw \\
    \text{Dimension}\quad\quad\quad\quad\quad\quad  & \gate{\otimes \mathcal{H}} & \ctrl{1} & \ghost{O_D} & \qw & \ghost{O_D^{-1}} & \ctrl{1} & \qw \\
    \text{Space}\quad\quad\quad\quad\quad\quad &  \multigate{1}{\mathcal{A}} & \gate{\mathcal{F}_j^{-1}} & \ghost{O_D} & \qw & \ghost{O_D^{-1}} & \gate{\mathcal{F}_j} & \qw \\
    \text{Time}\quad\quad\quad\quad\quad\quad  &  \ghost{\mathcal{A}} & \qw & \qw & \qw & \qw & \qw & \qw \\
    }
    }
    \caption{ Quantum circuit for preparing a quantum state encoding partial derivatives of a known function in amplitudes. Here $\mathcal{A}$ is the algorithm for the state encoding the function,  $\mathcal{H}$ represents the Hadamard gate, $\mathcal{F}_j$ represents the one-dimensional quantum Fourier transform acting on the $j$-th direction, $O_D$ is the oracle specified in \eq{app_O_D}, and $R$ is the rotation operation. }
    \label{fig:circuit_derivative}
\end{figure}

\begin{theorem}\label{thm:app_deriv_state_smooth}
    Let $f(t,x)$ be a smooth function and $\vec{f}$ be a possibly unnormalized vector
    \begin{equation}
        \vec{f} = \sum_{k=0}^{m-1}\sum_{l_1=0}^{n-1}\cdots \sum_{l_d=0}^{n-1} f(kT/m,l_1/n_1,\cdots,l_d/n_d) \ket{k}\ket{l_1}\cdots \ket{l_d}. 
    \end{equation}
    Assume that $f$ satisfies periodic boundary condition for $x$ and $\sup_{j,p} (\|\partial_{x_j}^p f\|_{\infty})^{1/p} < \pi n$. 
    Then for any $0 < \epsilon < 1$, $0 < \delta < 1$, there exists a quantum algorithm which, with probability at least $(1-\delta)$, outputs an $\epsilon$-approximation of the state proportional to 
    \begin{equation}
        \vec{\nabla f} = \sum_{j=0}^{d-1}\sum_{k=0}^{m-1}\sum_{l_1,\cdots,l_d=0}^{n-1} \partial_{x_j} f(kT/m,l_1/n,\cdots,l_d/n) \ket{j}\ket{k}\ket{l_1}\cdots\ket{l_d}, 
    \end{equation}
    using queries to $\mathcal{A}(\epsilon/Q)$ and $O_D$ for $\Or(Q\log(1/\delta))$ times and additional $\Or(d (\log n)^2)$ gates, where 
    $$Q = \frac{4\sqrt{d}\|\vec{f}\| (\sup_{j,p} (\|\partial_{x_j}^p f\|_{\infty})^{1/p}+1)}{\|\vec{\nabla f}\|}. $$
\end{theorem}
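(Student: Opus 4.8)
The plan is to make the circuit already sketched above (Fig.~\ref{fig:circuit_derivative}) rigorous in four moves: (i) show that, under the hypothesis, the discrete‑Fourier spectral differentiation is \emph{exact}; (ii) compute the success probability of the measurement on the rotation register; (iii) fix $\theta$ and propagate the state‑preparation error through amplitude amplification; and (iv) count the extra gates.

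\emph{Spectral exactness.} Fix a time slice $t$ and a direction $j$ and write $f(t,\cdot)=\sum_k\hat f_k(t)\,e^{2\pi i k\cdot x}$. Since $\partial_{x_j}^p f(t,\cdot)=\sum_k(2\pi i k_j)^p\hat f_k(t)\,e^{2\pi i k\cdot x}$, Parseval gives $|\hat f_k(t)|\le\|\partial_{x_j}^p f(t,\cdot)\|_{L^2([0,1]^d)}/(2\pi|k_j|)^p\le(B/(2\pi|k_j|))^p$ for every $p$, where $B\coloneqq\sup_{j,p}(\|\partial_{x_j}^p f\|_\infty)^{1/p}$ and we used $\|\cdot\|_{L^2}\le\|\cdot\|_{L^\infty}$ on $[0,1]^d$. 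Letting $p\to\infty$ shows $\hat f_k(t)=0$ whenever $|k_j|>B/(2\pi)$, so by the hypothesis every $f(t,\cdot)$ is band‑limited strictly below the Nyquist frequency $n/2$; hence the $n$‑point DFT of the grid samples reproduces $\hat f_k(t)$ exactly for $|k_j|\le n/2$ (no aliasing, by Poisson summation), and, choosing $\theta$ to be the least integer $\ge B/(2\pi)$ (so $1\le\theta\le n/2$), the mask $D_{j,\theta}$ only touches the nonzero frequencies. Multiplying the $k$‑th frequency amplitude by $D_{j,\theta}(l)=k_j/\theta$ in the rotation step and transforming back therefore produces, on the branch where the rotation register is $\ket0$, exactly $\tfrac{1}{2\pi i\theta}$ times the grid samples of $\partial_{x_j}f$, with no discretization error. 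The classical facts about DFT differentiation used here are recorded in \sec{append_DFT}.

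\emph{Success probability, error, gate count.} By unitarity of the QFTs (Parseval), the unnormalized $\ket0$‑branch of the rotation register equals $\tfrac{1}{2\pi i\theta}\tfrac{1}{\sqrt d\,\|\vec f\|}\sum_{j,k,l}\partial_{x_j}f(kT/m,l/n)\ket0\ket0\ket j\ket k\ket l$, so the measurement succeeds with probability $P=\|\vec{\nabla f}\|^2/((2\pi\theta)^2 d\,\|\vec f\|^2)$, i.e.\ $1/\sqrt P=2\pi\theta\sqrt d\,\|\vec f\|/\|\vec{\nabla f}\|$; since $2\pi\theta=\Or(B+1)$, the constant $4$ in the statement is chosen so that $1/\sqrt P\le Q$. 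Amplitude amplification, using the rotation register as a success flag (with exponential search over the number of rounds, since $\|\vec{\nabla f}\|$ is a priori unknown), prepares the normalized target with $\Or(1/\sqrt P)=\Or(Q)$ invocations of the base circuit and its inverse, and $\Or(\log(1/\delta))$ independent runs raise the overall success probability to $1-\delta$. Because amplitude amplification is built from $\Or(1/\sqrt P)$ copies of the base circuit (hence of $\mathcal A$, $O_D$, $O_D^{-1}$) together with fixed reflections, and the final measurement of the rotation register projects exactly onto the good subspace, the standard error‑propagation bound shows that an input routine $\mathcal A(\epsilon')$ contributes $\Or(\epsilon'/\sqrt P)$ to the final normalized $\ell^2$ error; taking $\epsilon'=\Theta(\epsilon/Q)$, i.e.\ calling $\mathcal A(\epsilon/Q)$, keeps it below $\epsilon$. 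The only non‑oracle gates are the Hadamards on the $\Or(\log d)$‑qubit dimension register, the single conditional rotation $R$, and the controlled transforms $\sum_j\ket j\bra j\otimes\mathcal F_j^{\pm1}$, which together cost $\Or(d(\log n)^2)$; collecting everything gives the claimed bounds.

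I expect the main obstacle to be the error bookkeeping of step (iii) rather than any single inequality: one must confirm that amplitude amplification with the rotation flag (and with the number of rounds located by search, since $\|\vec{\nabla f}\|$ is unknown) amplifies the preparation error by only $\Or(1/\sqrt P)$, and that the constant in $Q$ simultaneously absorbs the rounding of $\theta$ and the $2\pi\theta=\Or(B+1)$ slack. The spectral‑exactness step, though conceptually the heart of the argument, reduces to the short Bernstein‑type/aliasing computation above once one notices that the hypothesis $\sup_{j,p}(\|\partial_{x_j}^p f\|_\infty)^{1/p}<\pi n$ is exactly a sub‑Nyquist band‑limiting condition.
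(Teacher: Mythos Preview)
Your proposal is correct and follows essentially the same approach as the paper: the same circuit, the same success-probability computation $1/\sqrt{P}=2\pi\theta\sqrt d\,\|\vec f\|/\|\vec{\nabla f}\|$, the same error-propagation scaling $\epsilon'=\Theta(\epsilon/Q)$, and the same amplitude-amplification-plus-repetition structure giving $\Or(Q\log(1/\delta))$ queries.

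The one point worth flagging is that your spectral-exactness step is organized differently from the paper's. The paper first proves a quantitative Fourier-truncation error bound (its Lemma on DFT differentiation) valid for $C^p$ functions with any $p\ge3$, packages this into a general theorem, and only at the end specializes to the smooth case by choosing $\theta=B/\pi+1$ and sending $p\to\infty$ in the bound to conclude $\vec g=\vec{\nabla f}$. You instead argue directly that the hypothesis $B<\pi n$ forces band-limitedness below Nyquist (via $|\hat f_k|\le(B/(2\pi|k_j|))^p\to0$), so aliasing is absent and the DFT differentiation is exact for any $\theta\ge\lceil B/(2\pi)\rceil$. Both routes land on the same conclusion; yours is shorter for the smooth case, while the paper's detour through finite-$p$ estimates buys a separate theorem for functions with limited regularity. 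Your explicit remark about exponential search over the number of amplification rounds (since $\|\vec{\nabla f}\|$ is unknown) is a detail the paper omits but which is indeed needed for a complete argument.
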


We briefly compare our result with the standard classical approach of computing gradient in terms of the dimension parameters, including $n$ and $d$. 
On the one hand, the cost for computing gradient evaluated at all discrete grid points typically scale $\Or(d n^d)$, since the sizes of both the vector storing the information of the gradients and the matrices related to finite difference and discrete Fourier transform scale $\Or(d n^d)$. 
On the other hand, the corresponding scaling for our quantum approach is subtler since it depends on the scaling of the quantity $Q$. 
Notice that this quantity can be very large if the function $f$ is close to a constant function. 
However, it can also be of $\Or(1)$ if at least one of the non-trivial Fourier components in $f$ is on the order of $\Omega(1)$, since in this case $\|\vec{f}\| \sim \Or(\sqrt{mn^d})$ and $\|\vec{\nabla f}\| \sim 
\Omega(\sqrt{dmn^d})$. 
In this scenario, the overall complexity scales only polynomially in terms of $d$ and poly-logarithmically in $n$, which achieves exponential speedup in terms of the dimension parameters. 

The cost of computing the ratio of the kinetic energy can be directly estimated by combining \thm{app_ratio} and \thm{app_deriv_state_smooth}. 
As discussed before, we can get a quadratic speedup in terms of precision compared to the classical Monte Carlo method. 
The result is summarized in the following Corollary. 

\begin{corollary}\label{cor:app_deriv}
    Let $f(t,x)$ be a smooth function such that $f$ satisfies periodic boundary condition for $x$, and $\sup_{j,p} (\|\partial_{x_j}^p f\|_{\infty})^{1/p} < \pi n$. 
    Assume that we are given a unitary transform $(I-2P)$ for the projector $P$ associated with the domain $\D$. 
    Then for any $0 < \epsilon < 1$, $0 < \delta < 1$, there exists a quantum algorithm which can output an approximation of \eq{app_ratio_free_energy} within tolerated error $\epsilon$ and probability at least $(1-\delta)$, using queries to $(I-2P)$ for $\Or((1/\epsilon)\log(1/\delta))$ times, queries to $\mathcal{A}(\epsilon/(4Q))$ and $O_D$ for $\Or( Q(\log(1/\delta))^2/\epsilon)$ times and additional $\Or(d (\log n )^2 \log(1/\delta) /\epsilon)$ gates, where 
    $$Q = \frac{4\sqrt{d}\|\vec{f}\| (\sup_{j,p} (\|\partial_{x_j}^p f\|_{\infty})^{1/p}+1)}{\|\vec{\nabla f}\|}. $$
\end{corollary}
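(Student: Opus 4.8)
The plan is to obtain Corollary~\ref{cor:app_deriv} by composing Theorem~\ref{thm:app_deriv_state_smooth} with Theorem~\ref{thm:app_ratio}. First I would observe that the discretized kinetic energy ratio on the right of \eqref{eq:app_ratio_free_energy} is precisely a mean-square-amplitude ratio of the form \eqref{eq:app_ratio} applied to the discretized gradient vector $\vec{\nabla f}$ of \eqref{eq:app_deriv_state}: since $|\nabla_x f|^2 = \sum_{j} |\partial_{x_j} f|^2$, summing the squared amplitudes of the state $\ket{\vec{\nabla f}}$ over the time and space registers while leaving the dimension register $\ket{j}$ untouched reproduces both the numerator and the denominator. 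Concretely, if $P$ is the projector on the time/space registers associated with $\D_t\times\D_x$, then the target quantity is $\bra{\vec{\nabla f}} (P\otimes I_{\mathrm{dim}}) \ket{\vec{\nabla f}}$, and the required reflection $(I-2(P\otimes I_{\mathrm{dim}})) = (I-2P)\otimes I_{\mathrm{dim}}$ is implemented by one use of the given oracle $(I-2P)$ together with identity on the dimension register.

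Next I would package Theorem~\ref{thm:app_deriv_state_smooth} as a new state-preparation subroutine $\mathcal{A}'$: for any target accuracy $\epsilon'$ and failure probability $\delta'$, it prepares an $\epsilon'$-approximation (in $\ell_2$ norm) of the normalized state proportional to $\vec{\nabla f}$ using $\Or\!\big(Q\log(1/\delta')\big)$ queries to $\mathcal{A}(\epsilon'/Q)$ and to $O_D$, plus $\Or(d(\log n)^2)$ additional gates, where $Q$ is the stated ratio. I would then invoke Theorem~\ref{thm:app_ratio} with $\mathcal{A}'$ in the role of the function-encoding algorithm and $P\otimes I_{\mathrm{dim}}$ in the role of the projector: this yields an $\epsilon$-accurate estimate of $\bra{\vec{\nabla f}}(P\otimes I_{\mathrm{dim}})\ket{\vec{\nabla f}}$ with probability $\ge 1-\delta/2$ using $\Or\!\big((1/\epsilon)\log(1/\delta)\big)$ calls to the reflection and to $\mathcal{A}'(\epsilon/4)$. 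Choosing $\delta' = \Theta\!\big(\delta\epsilon/\log(1/\delta)\big)$ so that a union bound over all $\Or((1/\epsilon)\log(1/\delta))$ invocations of $\mathcal{A}'$ keeps the total failure probability below $\delta/2$, and then expanding each call to $\mathcal{A}'(\epsilon/4)$ into its internal cost, multiplies the two budgets and gives $\Or((1/\epsilon)\log(1/\delta))$ reflections, $\Or(Q(\log(1/\delta))^2/\epsilon)$ queries to $O_D$ and to $\mathcal{A}(\epsilon/(4Q))$, and $\Or(d(\log n)^2\log(1/\delta)/\epsilon)$ extra gates, matching the claimed complexity.

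The one point needing care is error propagation: I must check that replacing the ideal normalized gradient state by the $\epsilon/4$-approximation output by $\mathcal{A}'$ perturbs the estimated ratio by only $\Or(\epsilon)$. This is the standard bound $\big|\bra{\psi}\Pi\ket{\psi}-\bra{\tilde\psi}\Pi\ket{\tilde\psi}\big| \le 2\norm{\ket{\psi}-\ket{\tilde\psi}}$ for any projector $\Pi$ and normalized $\ket{\psi},\ket{\tilde\psi}$, which Theorem~\ref{thm:app_ratio} already absorbs through its input tolerance $\mathcal{A}(\epsilon/4)$; here it is bookkeeping rather than new analysis, so I expect this to be the only obstacle and a mild one. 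The genuinely separate source of error --- the gap between the continuous ratio \eqref{eq:app_ratio_free_energy} and its discretized version --- is, as in the surrounding discussion, controlled by grid refinement ($m=\Or(T^2/\epsilon)$, $n=\Or(dT/\epsilon)$) and is not re-derived, since the corollary is stated for the fixed-grid quantity. Finally I would note that the hypothesis $\sup_{j,p}(\|\partial_{x_j}^p f\|_\infty)^{1/p}<\pi n$ is inherited verbatim from Theorem~\ref{thm:app_deriv_state_smooth}, where it ensures a constant cutoff $\theta=\Or(1)$ in $D_{j,\theta}$ suffices, so no new condition enters.
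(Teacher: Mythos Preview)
Your proposal is correct and follows exactly the approach the paper takes: the paper states only that ``the cost of computing the ratio of the kinetic energy can be directly estimated by combining \thm{app_ratio} and \thm{app_deriv_state_smooth},'' and you have spelled out that composition in detail, including the identification of the kinetic-energy ratio with a mean-square-amplitude ratio on the gradient state, the use of \thm{app_deriv_state_smooth} as the inner state-preparation subroutine, and the bookkeeping for failure probability and error propagation. If anything, your treatment is more explicit than the paper's; the only mild technicality (coherent versus probabilistic access to $\mathcal{A}'$ inside amplitude estimation) is handled informally in both your writeup and the paper, and is resolved in the standard way by folding the success flag into the projector.
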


\subsection{History state and decay of kinetic energy}

Unlike the existing work on quantum differential equation solvers that typically output a final state encoding the solution at the final time, our Carleman linearization algorithm and the input model assumed in this section take a more general history state encoding the solutions at all time steps. 
In this subsection, we briefly discuss how the general history state structure may broaden the application of our algorithm. 

One potential application is to study the kinetic energy curve. 
As discussed in \cite{Perthame2015,FJMP19}, applied mathematicians are interested in the curve that describes the decay of the kinetic energy of the gradient flow \eq{energy-functional}, and particularly the time of reaching the equilibrium, \emph{i.e.}, the time when the kinetic energy almost stops changing. 
We define this time to be the equilibrium time $t^*$, and it can be easily estimated by combining the history state and our algorithm for computing gradients in Fig.~\ref{fig:circuit_derivative}. 
Specifically, we first run the algorithm in Fig.~\ref{fig:circuit_derivative} to get an approximation of the history state of gradients of the solution as in \eq{app_deriv_state}, and then measure the time register $\ket{k}$ to obtain an integer $k$. 
Notice that after the equilibrium time $t^*$ when the kinetic energy almost stops changing, the corresponding partial derivatives are very close to 0. 
This implies that we have almost no measurement outcomes after $t^*$. 
By repeating such a procedure and taking the maximum of the measure outcomes $k_{\max}$, we can use $k_{\max}h$ to estimate $t^*$ with high probability. 

The history state structure can also allow us to overcome potential exponential cost in the differential equation solvers caused by the decay of the solution. 
To the extent of our knowledge, most of the existing quantum differential equation solvers which output a final state~\cite{Ber14,BCOW17,CL19,LKK20} scale at least linearly in terms of the parameter $\|u_{\text{in}}\|/\|u_{\text{out}}\|$, where $u_{\text{in}}$ and $u_{\text{out}}$ denote the unnormalized solutions of the differential equation at the initial and final time, respectively. 
Such a linear dependence is typically caused by post-processing a quantum state obtained from solving specific linear systems of equations to get the desired final state and may introduce extra exponential time dependence if the solution of the differential equations experiences rapid decay. 
A simple example is the imaginary time evolution 
\begin{equation}
    \frac{du}{ds}  = - H u
\end{equation}
where $H$ is a positive definite matrix. 
Here $\|u(T)\| = \|\exp(-HT)u(0)\| \leq \exp(-\lambda T)\|u(0)\|$ with $\lambda$ being the smallest eigenvalue of $H$, and thus $\|u(0)\|/\|u(T)\| \geq \exp(\lambda T)$, leading to an extra exponentially large term in $T$. 
Another example is the nonlinear ordinary differential equation with no constant term, which can be explicitly written down as 
\begin{equation}
    \frac{du}{ds} = F_1 u + F_2 u^{\otimes 2}. 
\end{equation}
It is proved in~\cite{LKK20} that, if $F_1$ only has negative eigenvalues and the nonlinearity is relatively weak in certain sense (namely the parameter $R$ defined in this work is smaller than 1), then $\|u(t)\|$ decays exponentially in terms of $t$, which also leads to an exponentially large $\|u_{\text{in}}\|/\|u_{\text{out}}\|$ in $T$. 

With a history state, we can run a ``pre-diagnosis'' to first identify whether the final state is close to 0. 
The key observation is that if the state has sufficiently decayed such that the solution is very close to 0, then the success probability of getting the corresponding time step by measuring the time register is exponentially small. 
In particular, assume that we are interested in obtaining an approximation of the final solution $u(T)$ when $\|u(T)\|$ is very small or the corresponding state $u(T)/\|u(T)\|$ when $\|u(T)\|$ is reasonably away from 0. 
We can repeatedly prepare a history state, measure the time register, and obtain the output $k$. 
If for all $k$'s, we have $kh < T$, then the final solution is expected to be exponentially small with high probability, so we can stop here and directly use 0 to be the approximation of the solution. 
On the other hand, if there exists a $k$ such that $kh = T$, then this is a reasonable indication that the quantity $\|u_{\text{in}}\|/\|u_{\text{out}}\|$ is not quite large, and we can follow the standard post-processing procedure to obtain the final state $u(T)/\|u(T)\|$. 
The history state here helps us determine which scenario the differential equation is in without exponential cost in the evolution time $T$.

\section{Discussion}
\label{sec:discussion}

We have presented an efficient quantum algorithm for the gradient flow evolution of reaction-diffusion equations. We improve the Carleman linearization under a  condition $R_D < 1$. It relaxes the previous condition $R<1$ in \cite{LKK20} for high-dimensional systems of nonlinear differential equations.
Besides, we discussed estimating the mean square amplitude and ratios of the kinetic energy of the gradient flow as potential applications. 

This work raises several natural open problems. The first aspect is regarding further improvement of our algorithm. Though our work focuses on improving the convergence condition for the Carleman linearization, it is also interesting to seek further improvement of the dependence on other parameters in the complexity, such as the evolution time and the error tolerance. Another related topic is to obtain meaningful classical outputs with super-polynomial quantum speedups over the best-known classical algorithms. 
It is also an interesting question to generalize our algorithms or design new quantum algorithms dealing with other types of nonlinear differential equations. Our analysis relies heavily on the Maximum Principle and some good regularity of the solutions, which are essential properties of reaction-diffusion equations. On a high level, the maximum principle controls the norm of the solution that shares the same spirit of spectral norm preserving properties of Hamiltonian simulation. It is thus an interesting direction to consider other norm-controlled problems, such as the gradient flow structured on certain norm preserving manifold, and see whether such confinement helps conquer the nonlinearity at hand.  

However, relaxing the regularity assumptions of the solutions seems to be a fundamentally difficult problem. All linearization-based techniques require the solutions to be well-posed and regular, which is not the case for applications such as the conservation laws, fluid dynamics, and Hamilton-Jacobi equation, where the solutions blow up in finite time. Therefore, other approaches beyond the linearization framework are desired for such nonlinear problems, requiring some new insights. We also point out that the quantum Carleman linearization based approaches may suffer from an overhead sensitive to the spatial grid refinement for partial differential equations. Though our improved $\ell_\infty$ framed convergence criterion can concur the sensitivity to grid refinements, this sensitivity can be reintroduced through the $\ell_2$ norm dependence when implementing the quantum algorithm since the $\ell_2$ norm of the solution discretized by finite difference still grows as more grid points are used. In particular, it appears when getting the solution from the huge Carleman state vector. 
It is our future work to make the quantum algorithm fully insensitive to the spatial grid refinements, probably by trying other spatial discretization which can preserve the $\ell_2$ norm in grid refinements such as Fourier discretization. 

In the application section, we use the amplitude estimate technique to obtain classical information beyond the quantum state output and study the kinetic energy distribution by preparing a state with derivative information. 
The techniques we propose in this section do not rely on specific models or differential equations. Thus it is interesting to find applications of our technique to output classical information for other problems, such as phase separation and transition, chemical reactions, and self-organized biological patterns. 
We want to remark that all the classical outputs we study in this work are in terms of ratios. 
While obtaining the absolute value rather than the ratio seems to require accurate computation of the observable on the entire domain, which might incur exponential overhead, it is still very interesting to further study whether the absolute value of observables can be approximated with only a relatively small overhead. 
Our algorithm for preparing quantum states of derivatives requires the regularity of the function, and we want to understand its performance for non-smooth functions as well. 
It may also be of interest in some quantum optimization problems which require gradient information to optimize the objective function and will be our future work.

\section*{Acknowledgments}

We thank Andrew M.\ Childs and Lin Lin for valuable discussions. DA acknowledge the hospitality of the Simons Institute for the Theory of Computing in Berkeley. DA, DF and JW acknowledge the Challenge Institute for Quantum Computation funded by NSF through grant number OMA-2016245, and the Department of Energy under grant No. DE-SC0017867.
DA acknowledges the support by the Department of Defense through the Hartree Postdoctoral Fellowship at QuICS, and the NSF under Grant No. DMS-1652330. 
DF is supported by the NSF grant number DMS-2208416.
JPL acknowledges support from the Department of Energy, Office of Science, Office of Advanced Scientific Computing Research, Quantum Algorithms Teams and Accelerated Research in Quantum Computing programs, the National Science Foundation award (CCF-1813814, DMS-2008568), and from the National Science Foundation Quantum Information Science and Engineering Network (QISE-NET) triplet award (DMR-1747426).

\bibliographystyle{unsrt}
\bibliography{npde}

\begin{thebibliography}{10}

\bibitem{LKK20}
Jin-Peng Liu, Herman~{\O}ie Kolden, Hari~K Krovi, Nuno~F Loureiro, Konstantina
  Trivisa, and Andrew~M Childs.
\newblock Efficient quantum algorithm for dissipative nonlinear differential
  equations.
\newblock {\em Proceedings of the National Academy of Sciences}, 118(35), 2021.
\newblock \href{https://arxiv.org/abs/2011.03185}{arXiv:2011.03185}.

\bibitem{LuHu2022}
Yuankai Lu and Dan Hu.
\newblock Optimisation of biological transport networks.
\newblock {\em East Asian Journal on Applied Mathematics}, 12(1):72--95, 2022.

\bibitem{CaiHu2013}
Dan Hu and David Cai.
\newblock Adaptation and optimization of biological transport networks.
\newblock {\em Physical Review Letter}, 111:138701, Sep 2013.

\bibitem{caihupde1}
Jan Haskovec, Peter Markowich, and Benoit Perthame.
\newblock Mathematical analysis of a {PDE} system for biological network
  formation.
\newblock {\em Communications in Partial Differential Equations},
  40(5):918--956, 2015.

\bibitem{caihupde2}
Jan Haskovec, Peter Markowich, Beno\^{\i}t Perthame, and Matthias Schlottbom.
\newblock Notes on a {PDE} system for biological network formation.
\newblock {\em Nonlinear Analysis}, 138:127--155, 2016.

\bibitem{caihupde3}
Giacomo Albi, Marco Artina, Massimo Foransier, and Peter~A. Markowich.
\newblock Biological transportation networks: modeling and simulation.
\newblock {\em Analysis and Applications}, 14(1):185--206, 2016.

\bibitem{caihupde4}
Jan {Haskovec}, Lisa~Maria {Kreusser}, and Peter {Markowich}.
\newblock {ODE} and {PDE} based modeling of biological transportation networks.
\newblock {\em arXiv: 1805.08526}, May 2018.

\bibitem{caihupde5}
Martin {Burger}, Jan {Haskovec}, Peter {Markowich}, and Helene {Ranetbauer}.
\newblock {A mesoscopic model of biological transportation networks}.
\newblock {\em arXiv: 1806.00120}, May 2018.

\bibitem{caihupde6}
Jan {Haskovec}, Lisa~Maria {Kreusser}, and Peter {Markowich}.
\newblock Rigorous continuum limit for the discrete network formation problem.
\newblock {\em arXiv: 1808.01526}, August 2018.

\bibitem{caihupdebook}
Giacomo Albi, Martin Burger, Jan Haskovec, Peter Markowich, and Matthias
  Schlottbom.
\newblock {\em Continuum modeling of biological network formation}, pages
  1--48.
\newblock Model. Simul. Sci. Eng. Technol. Birkh\"{a}user/Springer, Cham, 2017.

\bibitem{FJMP19}
Di~Fang, Shi Jin, Peter Markowich, and Beno{\^\i}t Perthame.
\newblock Implicit and semi-implicit numerical schemes for the gradient flow of
  the formation of biological transport networks.
\newblock {\em SMAI Journal of Computational Mathematics}, 5:229--249, 2019.
\newblock
  \href{https://hal.sorbonne-universite.fr/hal-01984371/file/caihu_v7_Small.pdf}{https://hal.sorbonne-universite.fr/hal-01984371/file/caihu{\_}v7{\_}Small.pdf}.

\bibitem{Garvie2007}
Marcus~R. Garvie.
\newblock Finite-difference schemes for reaction--diffusion equations modeling
  predator--prey interactions in m atlab.
\newblock {\em Bulletin of mathematical biology}, 69(3):931--956, 2007.

\bibitem{Malchow2007}
Horst Malchow.
\newblock {\em Spatiotemporal patterns in ecology and epidemiology: theory,
  models, and simulation}.
\newblock Chapman and Hall/CRC, 2007.

\bibitem{Petrovskii1999}
Sergei~V. Petrovskii and Horst Malchow.
\newblock A minimal model of pattern formation in a prey-predator system.
\newblock {\em Mathematical and Computer Modelling}, 29(8):49--63, 1999.

\bibitem{Lefevre2010}
Julien Lef{\`e}vre and Jean-Fran{\c{c}}ois Mangin.
\newblock A reaction-diffusion model of human brain development.
\newblock {\em PLoS computational biology}, 6(4):e1000749, 2010.

\bibitem{Habib2003}
Salman Habib, Carmen Molina-Par\'is, and Thomas~S. Deisboeck.
\newblock Complex dynamics of tumors: modeling an emerging brain tumor system
  with coupled reaction--diffusion equations.
\newblock {\em Physica A: Statistical Mechanics and its Applications},
  327(3-4):501--524, 2003.

\bibitem{Murray2001}
James~D. Murray.
\newblock {\em Mathematical biology {II}: spatial models and biomedical
  applications}, volume~3.
\newblock Springer New York, 2001.

\bibitem{Murray2002}
James~D. Murray.
\newblock Mathematical biology {I} : An introduction. interdisciplinary applied
  mathematics.
\newblock {\em Mathematical Biology, Springer}, 2002.

\bibitem{Genieys2006}
St{\'e}phane Genieys, Vitaly Volpert, and Pierre Auger.
\newblock Pattern and waves for a model in population dynamics with nonlocal
  consumption of resources.
\newblock {\em Mathematical Modelling of Natural Phenomena}, 1(1):63--80, 2006.

\bibitem{Meinhardt1982}
Hans Meinhardt.
\newblock Models of biological pattern formation.
\newblock {\em New York}, 118, 1982.

\bibitem{Golding1998}
Ido Golding, Yonathan Kozlovsky, Inon Cohen, and Eshel Ben-Jacob.
\newblock Studies of bacterial branching growth using reaction--diffusion
  models for colonial development.
\newblock {\em Physica A: Statistical Mechanics and its Applications},
  260(3-4):510--554, 1998.

\bibitem{Mimura2000}
Masayasu Mimura, Hideo Sakaguchi, and Mitsugu Matsushita.
\newblock Reaction--diffusion modelling of bacterial colony patterns.
\newblock {\em Physica A: Statistical Mechanics and its Applications},
  282(1-2):283--303, 2000.

\bibitem{Berestycki1985}
Henri Berestycki, Basil Nicolaenko, and Bruno Scheurer.
\newblock Traveling wave solutions to combustion models and their singular
  limits.
\newblock {\em SIAM Journal on Mathematical Analysis}, 16(6):1207--1242, 1985.

\bibitem{Zeldovich1985}
I.~A. Zeldovich, Grigory~Isaakovich Barenblatt, V.~B. Librovich, and G.~M.
  Makhviladze.
\newblock Mathematical theory of combustion and explosions, 1985.

\bibitem{Poinsot2005}
Thierry Poinsot and Denis Veynante.
\newblock {\em Theoretical and numerical combustion}.
\newblock RT Edwards, Inc., 2005.

\bibitem{Perthame2012}
Beno{\i}t Perthame.
\newblock Growth, reaction, movement and diffusion from biology.
\newblock {\em Lecture Notes, University Paris}, 6, 2012.

\bibitem{Means2006}
Shawn Means, Alexander~J. Smith, Jason Shepherd, John Shadid, John Fowler,
  Richard J.~H. Wojcikiewicz, Tomas Mazel, Gregory~D. Smith, and Bridget~S.
  Wilson.
\newblock Reaction diffusion modeling of calcium dynamics with realistic er
  geometry.
\newblock {\em Biophysical journal}, 91(2):537--557, 2006.

\bibitem{BF12}
Andrea~L. Bertozzi and Arjuna Flenner.
\newblock Diffuse interface models on graphs for classification of high
  dimensional data.
\newblock {\em Multiscale Modeling \& Simulation}, 10(3):1090--1118, 2012.
\newblock
  \href{https://www.math.ucla.edu/~bertozzi/papers/MMS2012-final.pdf}{https://www.math.ucla.edu/~bertozzi/papers/MMS2012-final.pdf}.

\bibitem{BF16}
Andrea~L. Bertozzi and Arjuna Flenner.
\newblock Diffuse interface models on graphs for classification of high
  dimensional data.
\newblock {\em SIAM Review}, 58(2):293--328, 2016.
\newblock
  \href{https://www.math.ucla.edu/~bertozzi/papers/SIGEST2016.pdf}{https://www.math.ucla.edu/~bertozzi/papers/SIGEST2016.pdf}.

\bibitem{MKB13}
Ekaterina Merkurjev, Tijana Kostic, and Andrea~L. Bertozzi.
\newblock An {MBO} scheme on graphs for classification and image processing.
\newblock {\em SIAM Journal on Imaging Sciences}, 6(4):1903--1930, 2013.
\newblock
  \href{https://www.math.ucla.edu/~bertozzi/papers/MBOgraphsFinal.pdf}{https://www.math.ucla.edu/~bertozzi/papers/MBOgraphsFinal.pdf}.

\bibitem{BEG06}
Andrea~L. Bertozzi, Selim Esedoglu, and Alan Gillette.
\newblock Inpainting of binary images using the {C}ahn--{H}illiard equation.
\newblock {\em IEEE Transactions on image processing}, 16(1):285--291, 2006.

\bibitem{DB08}
Julia~A. Dobrosotskaya and Andrea~L. Bertozzi.
\newblock A wavelet-{L}aplace variational technique for image deconvolution and
  inpainting.
\newblock {\em IEEE Transactions on Image Processing}, 17(5):657--663, 2008.
\newblock
  \href{https://www.math.ucla.edu/~bertozzi/papers/julia-short-07.pdf}{https://www.math.ucla.edu/~bertozzi/papers/julia-short-07.pdf}.

\bibitem{EM03}
Selim Esedoglu and Riccardo March.
\newblock Segmentation with depth but without detecting junctions.
\newblock {\em Journal of Mathematical Imaging and Vision}, 18(1):7--15, 2003.
\newblock
  \href{http://dept.math.lsa.umich.edu/~esedoglu/Papers_Preprints/nms_esedoglu_march.pdf}{http://dept.math.lsa.umich.edu/~esedoglu/Papers{\_}Preprints/nms{\_}esedoglu{\_}march.pdf}.

\bibitem{ET06}
Selim Esedog, Yen-Hsi~Richard Tsai, et~al.
\newblock Threshold dynamics for the piecewise constant {M}umford--{S}hah
  functional.
\newblock {\em Journal of Computational Physics}, 211(1):367--384, 2006.
\newblock
  \href{http://dept.math.lsa.umich.edu/~esedoglu/Papers_Preprints/esedoglu_tsai.pdf}{http://dept.math.lsa.umich.edu/~esedoglu/Papers{\_}Preprints/esedoglu{\_}tsai.pdf}.

\bibitem{Bel57}
Richard Bellman.
\newblock Dynamic programming, 1957.
\newblock Princeton University Press, New Jersey.

\bibitem{Amb12}
Andris Ambainis.
\newblock Variable time amplitude amplification and quantum algorithms for
  linear algebra problems.
\newblock In {\em 29th Symposium on Theoretical Aspects of Computer Science},
  volume~14, pages 636--647. LIPIcs, 2012.
\newblock \href{https://arxiv.org/abs/1010.4458}{arXiv:1010.4458}.

\bibitem{AL19}
Dong An and Lin Lin.
\newblock Quantum linear system solver based on time-optimal adiabatic quantum
  computing and quantum approximate optimization algorithm.
\newblock {\em ACM Transactions on Quantum Computing}, 3(2):1–28, Jun 2022.

\bibitem{CKS15}
Andrew~M. Childs, Robin Kothari, and Rolando~D. Somma.
\newblock Quantum algorithm for systems of linear equations with exponentially
  improved dependence on precision.
\newblock {\em SIAM Journal on Computing}, 46(6):1920--1950, 2017.
\newblock \href{https://arxiv.org/abs/1511.02306}{arXiv:1511.02306}.

\bibitem{GSLW18}
Andr{\'a}s Gily{\'e}n, Yuan Su, Guang~Hao Low, and Nathan Wiebe.
\newblock Quantum singular value transformation and beyond: exponential
  improvements for quantum matrix arithmetics.
\newblock In {\em Proceedings of the 51st Annual ACM SIGACT Symposium on Theory
  of Computing}, pages 193--204, 2019.
\newblock \href{https://arxiv.org/abs/1806.01838}{arXiv:1806.01838}.

\bibitem{HHL08}
Aram~W. Harrow, Avinatan Hassidim, and Seth Lloyd.
\newblock Quantum algorithm for linear systems of equations.
\newblock {\em Physical Review Letters}, 103(15):150502, 2009.
\newblock \href{https://arxiv.org/abs/0811.3171}{arXiv:0811.3171}.

\bibitem{LT19}
Lin Lin and Yu~Tong.
\newblock Optimal quantum eigenstate filtering with application to solving
  quantum linear systems.
\newblock {\em Quantum}, 4:361, 2020.
\newblock \href{https://arxiv.org/abs/1910.14596}{arXiv:1910.14596}.

\bibitem{SSO18}
Yi{\u{g}}it Suba{\c{s}}{\i}, Rolando~D. Somma, and Davide Orsucci.
\newblock Quantum algorithms for systems of linear equations inspired by
  adiabatic quantum computing.
\newblock {\em Physical Review Letters}, 122(6):060504, 2019.
\newblock \href{https://arxiv.org/abs/1805.10549}{arXiv:1805.10549}.

\bibitem{TAWL20}
Yu~Tong, Dong An, Nathan Wiebe, and Lin Lin.
\newblock Fast inversion, preconditioned quantum linear system solvers, fast
  {G}reen's-function computation, and fast evaluation of matrix functions.
\newblock {\em Physical Review A}, 104(3):032422, 2021.
\newblock \href{https://arxiv.org/abs/2008.13295}{arXiv:2008.13295}.

\bibitem{CAS21}
Pedro C.~S. Costa, Dong An, Yuval~R. Sanders, Yuan Su, Ryan Babbush, and
  Dominic~W. Berry.
\newblock Optimal scaling quantum linear systems solver via discrete adiabatic
  theorem, 2021.
\newblock \href{https://arxiv.org/abs/2111.08152}{arXiv:2111.08152}.

\bibitem{Ber14}
Dominic~W. Berry.
\newblock High-order quantum algorithm for solving linear differential
  equations.
\newblock {\em Journal of Physics A: Mathematical and Theoretical},
  47(10):105301, 2014.
\newblock \href{https://arxiv.org/abs/1010.2745}{arXiv:1010.2745}.

\bibitem{BCOW17}
Dominic~W. Berry, Andrew~M. Childs, Aaron Ostrander, and Guoming Wang.
\newblock Quantum algorithm for linear differential equations with
  exponentially improved dependence on precision.
\newblock {\em Communications in Mathematical Physics}, 356(3):1057--1081,
  2017.
\newblock \href{https://arxiv.org/abs/1701.03684}{arXiv:1701.03684}.

\bibitem{CL19}
Andrew~M. Childs and Jin-Peng Liu.
\newblock Quantum spectral methods for differential equations.
\newblock {\em Communications in Mathematical Physics}, 375:1427--1457, 2020.
\newblock \href{https://arxiv.org/abs/1901.00961}{arXiv:1901.00961}.

\bibitem{FLT23}
Di~Fang, Lin Lin, and Yu~Tong.
\newblock Time-marching based quantum solvers for time-dependent linear
  differential equations.
\newblock {\em {Quantum}}, 7:955, March 2023.

\bibitem{CJS13}
B.~David Clader, Bryan~C. Jacobs, and Chad~R. Sprouse.
\newblock Preconditioned quantum linear system algorithm.
\newblock {\em Physical Review Letters}, 110(25):250504, 2013.
\newblock \href{https://arxiv.org/abs/1301.2340}{arXiv:1301.2340}.

\bibitem{CPPTK13}
Yudong Cao, Anargyros Papageorgiou, Iasonas Petras, Joseph Traub, and Sabre
  Kais.
\newblock Quantum algorithm and circuit design solving the {P}oisson equation.
\newblock {\em New Journal of Physics}, 15(1):013021, 2013.
\newblock \href{https://arxiv.org/abs/1207.2485}{arXiv:1207.2485}.

\bibitem{MP16}
Ashley Montanaro and Sam Pallister.
\newblock Quantum algorithms and the finite element method.
\newblock {\em Physical Review A}, 93(3):032324, 2016.
\newblock \href{https://arxiv.org/abs/1512.05903}{arXiv:1512.05903}.

\bibitem{CJO19}
Pedro C.~S. Costa, Stephen Jordan, and Aaron Ostrander.
\newblock Quantum algorithm for simulating the wave equation.
\newblock {\em Physical Review A}, 99(1):012323, 2019.
\newblock \href{https://arxiv.org/abs/1711.05394}{arXiv:1711.05394}.

\bibitem{CLO20}
Andrew~M Childs, Jin-Peng Liu, and Aaron Ostrander.
\newblock High-precision quantum algorithms for partial differential equations.
\newblock {\em Quantum}, 5:574, 2021.
\newblock \href{https://arxiv.org/abs/2002.07868}{arXiv:2002.07868}.

\bibitem{ESP19}
Alexander Engel, Graeme Smith, and Scott~E. Parker.
\newblock Quantum algorithm for the {V}lasov equation.
\newblock {\em Physical Review A}, 100(6):062315, 2019.
\newblock \href{https://arxiv.org/abs/1907.09418}{arXiv:1907.09418}.

\bibitem{LMS20}
Noah Linden, Ashley Montanaro, and Changpeng Shao.
\newblock Quantum vs.\ classical algorithms for solving the heat equation.
\newblock \href{https://arxiv.org/abs/2004.06516}{arXiv:2004.06516}.

\bibitem{LO08}
Sarah~K. Leyton and Tobias~J. Osborne.
\newblock A quantum algorithm to solve nonlinear differential equations, 2008.
\newblock \href{https://arxiv.org/abs/0812.4423}{arXiv:0812.4423}.

\bibitem{AL98}
Daniel~S. Abrams and Seth Lloyd.
\newblock Nonlinear quantum mechanics implies polynomial-time solution for
  {NP}-complete and \#{P} problems.
\newblock {\em Physical Review Letters}, 81(18):3992, 1998.
\newblock
  \href{https://arxiv.org/abs/quant-ph/9801041}{arXiv:quant-ph/9801041}.

\bibitem{Aar05}
Scott Aaronson.
\newblock {NP}-complete problems and physical reality.
\newblock {\em ACM SIGACT News}, 36(1):30--52, 2005.
\newblock
  \href{https://arxiv.org/abs/quant-ph/0502072}{arXiv:quant-ph/0502072}.

\bibitem{CY16}
Andrew~M. Childs and Joshua Young.
\newblock Optimal state discrimination and unstructured search in nonlinear
  quantum mechanics.
\newblock {\em Physical Review A}, 93(2):022314, 2016.
\newblock \href{https://arxiv.org/abs/1507.06334}{arXiv:1507.06334}.

\bibitem{Car32}
Torsten Carleman.
\newblock Application de la th{\'e}orie des {\'e}quations int{\'e}grales
  lin{\'e}aires aux syst{\`e}mes d'{\'e}quations diff{\'e}rentielles non
  lin{\'e}aires.
\newblock {\em Acta Mathematica}, 59(1):63--87, 1932.

\bibitem{KS91}
Krzysztof Kowalski and Willi-Hans Steeb.
\newblock {\em Nonlinear Dynamical Systems and {C}arleman Linearization}.
\newblock World Scientific, 1991.

\bibitem{FP17}
Marcelo Forets and Amaury Pouly.
\newblock Explicit error bounds for {C}arleman linearization, 2017.
\newblock \href{https://arxiv.org/abs/1711.02552}{arXiv:1711.02552}.

\bibitem{krovi2022improved}
Hari Krovi.
\newblock Improved quantum algorithms for linear and nonlinear differential
  equations.
\newblock {\em {Quantum}}, 7:913, February 2023.

\bibitem{DS20}
Ilya~Y. Dodin and Edward~A. Startsev.
\newblock On applications of quantum computing to plasma simulations.
\newblock {\em Physics of Plasmas}, 28(9):092101, 2021.
\newblock \href{https://arxiv.org/abs/2005.14369}{arXiv:2005.14369}.

\bibitem{Jos20}
Ilon Joseph.
\newblock Koopman-von {N}eumann approach to quantum simulation of nonlinear
  classical dynamics.
\newblock {\em Physical Review Research}, 2(4):043102, 2020.
\newblock \href{https://arxiv.org/abs/2003.09980}{arXiv:2003.09980}.

\bibitem{LPG20}
Seth Lloyd, Giacomo De~Palma, Can Gokler, Bobak Kiani, Zi-Wen Liu, Milad
  Marvian, Felix Tennie, and Tim Palmer.
\newblock Quantum algorithm for nonlinear differential equations, 2020.
\newblock \href{https://arxiv.org/abs/2011.06571}{arXiv:2011.06571}.

\bibitem{ESP21}
Alexander Engel, Graeme Smith, and Scott~E. Parker.
\newblock Linear embedding of nonlinear dynamical systems and prospects for
  efficient quantum algorithms.
\newblock {\em Physics of Plasmas}, 28(6):062305, 2021.
\newblock \href{https://arxiv.org/abs/2012.06681}{arXiv:2012.06681}.

\bibitem{TJ21}
Cesare Tronci and Ilon Joseph.
\newblock Koopman wavefunctions and {C}lebsch variables in {V}lasov-{M}axwell
  kinetic theory, 2021.
\newblock \href{https://arxiv.org/abs/2105.00294}{arXiv:2105.00294}.

\bibitem{Jin2022quantum}
Shi Jin and Nana Liu.
\newblock Quantum algorithms for computing observables of nonlinear partial
  differential equations, 2022.
\newblock \href{https://arxiv.org/abs/2202.07834}{arXiv:2202.07834}.

\bibitem{DS21}
Ilya~Y. Dodin and Edward~A. Startsev.
\newblock Quantum computation of nonlinear maps, 2021.
\newblock \href{https://arxiv.org/abs/2105.07317}{arXiv:2105.07317}.

\bibitem{XWG21}
Cheng Xue, Yu-Chun Wu, and Guo-Ping Guo.
\newblock Quantum homotopy perturbation method for nonlinear dissipative
  ordinary differential equations.
\newblock {\em New Journal of Physics}, 23(12):123035, 2021.
\newblock \href{https://arxiv.org/abs/2111.07486}{arXiv:2111.07486}.

\bibitem{BHM02}
Gilles Brassard, Peter Hoyer, Michele Mosca, and Alain Tapp.
\newblock Quantum amplitude amplification and estimation.
\newblock {\em Contemporary Mathematics}, 305:53--74, 2002.
\newblock
  \href{https://arxiv.org/abs/quant-ph/0005055}{arXiv:quant-ph/0005055}.

\bibitem{AllenCahn1979}
Samuel~M. Allen and John~W. Cahn.
\newblock A microscopic theory for antiphase boundary motion and its
  application to antiphase domain coarsening.
\newblock {\em Acta Metallurgica}, 27(6):1085--1095, 1979.

\bibitem{Fisher1937}
R.~A. FISHER.
\newblock The wave of advance of advantageous genes.
\newblock {\em Annals of Eugenics}, 7(4):355--369, 1937.

\bibitem{GB12}
Yves Van~Gennip, Andrea~L Bertozzi, et~al.
\newblock {$\Gamma$}-convergence of graph {G}inzburg-{L}andau functionals.
\newblock {\em Advances in Differential Equations}, 17(11/12):1115--1180, 2012.
\newblock \href{https://arxiv.org/abs/1204.5220}{arXiv:1204.5220}.

\bibitem{GGOB14}
Yves Van~Gennip, Nestor Guillen, Braxton Osting, and Andrea~L. Bertozzi.
\newblock Mean curvature, threshold dynamics, and phase field theory on finite
  graphs.
\newblock {\em Milan Journal of Mathematics}, 82(1):3--65, 2014.
\newblock
  \href{https://link.springer.com/article/10.1007/s00032-014-0216-8}{https://link.springer.com/article/10.1007/s00032-014-0216-8}.

\bibitem{LB17}
Xiyang Luo and Andrea~L. Bertozzi.
\newblock Convergence of the graph {A}llen--{C}ahn scheme.
\newblock {\em Journal of Statistical Physics}, 167(3-4):934--958, 2017.
\newblock
  \href{https://www.math.ucla.edu/~bertozzi/papers/JSTATPHYS2017-final.pdf}{https://www.math.ucla.edu/~bertozzi/papers/JSTATPHYS2017-final.pdf}.

\bibitem{KPE20}
Bobak~T. Kiani, Giacomo De~Palma, Dirk Englund, William Kaminsky, Milad
  Marvian, and Seth Lloyd.
\newblock Quantum advantage for differential equation analysis, 2020.
\newblock \href{https://arxiv.org/abs/2010.15776}{arXiv:2010.15776}.

\bibitem{BrauerNohel2012}
F.~Brauer and J.A. Nohel.
\newblock {\em The Qualitative Theory of Ordinary Differential Equations: An
  Introduction}.
\newblock Dover Books on Mathematics. Dover Publications, 2012.

\bibitem{BAC07}
Dominic~W. Berry, Graeme Ahokas, Richard Cleve, and Barry~C. Sanders.
\newblock Efficient quantum algorithms for simulating sparse {H}amiltonians.
\newblock {\em Communications in Mathematical Physics}, 270:359--371, 2007.
\newblock
  \href{https://arxiv.org/abs/quant-ph/0508139}{arXiv:quant-ph/0508139}.

\bibitem{CostaAnSandersEtAl2022}
Pedro~C.S. Costa, Dong An, Yuval~R. Sanders, Yuan Su, Ryan Babbush, and
  Dominic~W. Berry.
\newblock Optimal scaling quantum linear-systems solver via discrete adiabatic
  theorem.
\newblock {\em PRX Quantum}, 3:040303, Oct 2022.

\bibitem{AnLiuWangZhao2023}
Dong An, Jin-Peng Liu, Daochen Wang, and Qi~Zhao.
\newblock A theory of quantum differential equation solvers: limitations and
  fast-forwarding, 2023.

\bibitem{BurdenNA2000}
R.~L. Burden, J.~D. Faires, and A.~C. Reynolds.
\newblock {\em Numerical analysis}.
\newblock Brooks Cole, 2000.

\bibitem{NielsenChuang2000}
Michael~A Nielsen and Isaac Chuang.
\newblock Quantum computation and quantum information, 2000.

\bibitem{Perthame2015}
Beno\^{\i}t Perthame.
\newblock {\em Parabolic equations in biology}.
\newblock Lecture Notes on Mathematical Modelling in the Life Sciences.
  Springer, Cham, 2015.
\newblock Growth, reaction, movement and diffusion.

\bibitem{jerrum1986random}
Mark~R. Jerrum, Leslie~G. Valiant, and Vijay~V. Vazirani.
\newblock Random generation of combinatorial structures from a uniform
  distribution.
\newblock {\em Theoretical computer science}, 43:169--188, 1986.

\end{thebibliography}

\appendix

\section{Proof of \texorpdfstring{\lem{apriori}}{Lemma \ref{lem:apriori}}}
\label{sec:proof_apriori}

We now discuss the proof of the a priori estimate of the solution. Before that, we introduce the comparison principle lemma for the discrete reaction-diffusion equation \eqref{eq:GAC}, which implies the a priori estimate as a direct consequence. 

\begin{lemma}[Comparison principle]\label{lem:comp_principle}
Assume $f(u) \in C^\infty(\mathbb{R})$ and $U(t) = [U_1, \cdots, U_n]^T$, $V(t)= [V_1, \cdots, V_n]^T$ are continuous functions that satisfy
\begin{equation*}\label{eqn:comparison_principle_condition}
    \frac{\d U_j}{\d t} - D \left(\Delta_h U \right)_j - f(U_j) \le  \frac{\d V_j}{\d t} - D \left(\Delta_h V \right)_j - f(V_j),
\end{equation*}
for $t \in (0, T]$ and all the multi-indices $j \in \mathcal{I}$. Furthermore, $U_j(0) \le V_j(0)$ for all $j \in \mathcal{I}$ and $U_j(t) \le V_j(t)$ for $j \in \mathcal{B}$ and $t\in [0, T]$. Then it holds that
\begin{equation*}
    U_j(t) \le V_j(t)
\end{equation*}
for all $j$ and all time $t\in (0, T]$.
\end{lemma}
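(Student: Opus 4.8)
\textbf{Proof proposal for the comparison principle (\lem{comp_principle}).}

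The plan is to argue by a maximum-principle style contradiction applied to the difference $W(t) \coloneqq V(t) - U(t)$. Subtracting the two differential inequalities and using the mean value theorem on $f$, one gets, for each multi-index $j$ and each $t$,
\begin{equation*}
\frac{\d W_j}{\d t} - D(\Delta_h W)_j - c_j(t) W_j \ge 0,
\end{equation*}
where $c_j(t) = f'(\xi_j(t))$ for some $\xi_j(t)$ between $U_j(t)$ and $V_j(t)$; since $f\in C^\infty$ and the solutions are continuous on the compact set $[0,T]$, the coefficients $c_j(t)$ are bounded, say $|c_j(t)| \le K$. The goal is to show $W_j(t)\ge 0$ for all interior $j\in\mathcal{I}\setminus\mathcal{B}$ and all $t\in(0,T]$, given $W_j(0)\ge 0$ for all $j$ and $W_j(t)\ge 0$ on the boundary indices $\mathcal{B}$.

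First I would remove the zeroth-order term and the possibility of a degenerate (non-strict) minimum by the standard change of variables $\widetilde W_j(t) \coloneqq e^{-\lambda t}W_j(t)$ with $\lambda > K$, which turns the inequality into $\frac{\d \widetilde W_j}{\d t} - D(\Delta_h \widetilde W)_j - (c_j(t)-\lambda)\widetilde W_j \ge 0$ with $c_j(t)-\lambda < 0$. Then I would introduce, for arbitrary small $\varepsilon>0$, the perturbed quantity $\widetilde W_j^{\varepsilon}(t) = \widetilde W_j(t) + \varepsilon(1+t)$ and suppose for contradiction that $\widetilde W^{\varepsilon}$ becomes negative somewhere on $\mathcal{I}\setminus\mathcal{B} \times (0,T]$. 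Let $(j_0,t_0)$ be a first time and a spatial index at which $\min_{j\in\mathcal{I}} \widetilde W_j^{\varepsilon}(t)$ attains the value $0$ (it cannot be attained at $t=0$ since $\widetilde W_j^{\varepsilon}(0)\ge\varepsilon>0$, nor on $\mathcal{B}$ since there $\widetilde W_j^{\varepsilon}\ge\varepsilon>0$, by the hypotheses). At such a minimizing interior point: (a) $\frac{\d}{\d t}\widetilde W_{j_0}^{\varepsilon}(t_0)\le 0$ since it is a first down-crossing; (b) $(\Delta_h \widetilde W^{\varepsilon})_{j_0}(t_0)\ge 0$ because $\Delta_h$ is the central-difference Laplacian, whose action at $j_0$ is a nonnegative-coefficient average of neighboring values minus a positive multiple of the value at $j_0$, and $\widetilde W_{j_0}^{\varepsilon}(t_0)$ is the spatial minimum — here one must check that the Dirichlet/periodic stencil in \eqref{eqn:dis_lap_highD_tensor}--\eqref{eqn:Dh_D1} indeed has this off-diagonal-nonnegative, weakly-diagonally-dominant structure (for Dirichlet the "missing" neighbors correspond to boundary values, which are $\ge$ the minimum anyway); (c) $(c_{j_0}(t_0)-\lambda)\widetilde W_{j_0}^{\varepsilon}(t_0) = (c_{j_0}(t_0)-\lambda)\cdot(-\varepsilon(1+t_0)) > 0$ since $c_{j_0}-\lambda<0$. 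Plugging into the differential inequality, and noting $\frac{\d}{\d t}\widetilde W_{j_0}^{\varepsilon} = \frac{\d}{\d t}\widetilde W_{j_0} + \varepsilon$, gives $0 \ge \frac{\d}{\d t}\widetilde W_{j_0}(t_0) = \frac{\d}{\d t}\widetilde W_{j_0}^{\varepsilon}(t_0) - \varepsilon \ge D(\Delta_h\widetilde W^{\varepsilon})_{j_0}(t_0) + (c_{j_0}(t_0)-\lambda)\widetilde W_{j_0}^{\varepsilon}(t_0) - \varepsilon$; the right side is bounded below by $-\varepsilon$ plus strictly positive terms — more care is needed here, so instead I would directly use $\frac{\d}{\d t}\widetilde W_{j_0}^\varepsilon(t_0)\le 0$ together with the evolution inequality for $\widetilde W^\varepsilon$ (which inherits $\ge \varepsilon - (c_{j_0}-\lambda)\varepsilon(1+t_0) > 0$ from the extra $\varepsilon(1+t)$ term) to reach the contradiction $0 \ge (\text{something strictly positive})$. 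Letting $\varepsilon\to 0$ then yields $\widetilde W_j(t)\ge 0$, hence $W_j(t)\ge 0$, on all of $\mathcal{I}\times[0,T]$.

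The main obstacle I anticipate is bookkeeping the boundary stencil correctly: one must verify that at an interior spatial minimizer the discrete Laplacian $\Delta_h$ (in the tensor-product form with possibly mixed Dirichlet and periodic directions) satisfies $(\Delta_h \phi)_{j_0}\ge 0$ whenever $\phi_{j_0}=\min_j\phi_j$ and the boundary values of $\phi$ are also $\ge \phi_{j_0}$ — this is where the hypothesis "$U_j(t)\le V_j(t)$ for $j\in\mathcal{B}$" is used, and it is essential that for the Dirichlet direction the off-grid neighbor entry is simply omitted (equivalently, set to a boundary value that is $\ge$ the minimum), while for the periodic direction the wrap-around neighbor is genuinely in $\mathcal{I}$. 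A secondary technical point is justifying the existence of a well-defined "first crossing time" $t_0$: since $\mathcal{I}$ is finite and each $\widetilde W_j^\varepsilon$ is continuous on $[0,T]$, the function $t\mapsto \min_{j} \widetilde W_j^\varepsilon(t)$ is continuous, so $t_0 = \inf\{t : \min_j \widetilde W_j^\varepsilon(t) \le 0\}$ is attained and $\min_j\widetilde W_j^\varepsilon(t_0)=0$, with the minimizing index $j_0$ necessarily interior by the boundary and initial hypotheses — this is routine once set up carefully. Everything else (the $e^{-\lambda t}$ trick, the $\varepsilon(1+t)$ barrier, sending $\varepsilon\to0$) is standard discrete-maximum-principle technique, and \lem{apriori} then follows by taking $U$ or $V$ to be the constant functions $\gamma_1$ and $\gamma_2$, which are stationary solutions of \eqref{eq:GAC} precisely because $f(\gamma_1)=f(\gamma_2)=0$ and $\Delta_h$ annihilates constants (for the periodic directions) or, for Dirichlet directions, noting the sign works out because the constant lies between the roots.
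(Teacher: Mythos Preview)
Your approach is correct and essentially the same as the paper's: reduce to a linear inequality for $W=U-V$ (equivalently $V-U$) via the mean value theorem, absorb the bounded zeroth-order coefficient by an exponential change of variables, introduce a barrier perturbation to make the inequality strict, and reach a contradiction at the first interior extremum using the sign of $(\Delta_h\phi)_{j_0}$ at a spatial extremum. The paper organizes this into three separate steps (strict linear case, relaxation to non-strict with general bounded coefficient, then the nonlinear MVT reduction), which avoids the bookkeeping tangle you flag at step~(c)---note that there you should have $\widetilde W_{j_0}(t_0)=-\varepsilon(1+t_0)$, not $\widetilde W_{j_0}^{\varepsilon}(t_0)$, and once that typo is fixed your ``instead'' argument goes through cleanly.
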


We remark that Lemma \ref{lem:comp_principle} immediately implies Lemma \ref{lem:apriori}, because both $U_j(t) = \gamma_1$ and $U_j(t) = \gamma_2$ all $j$ and $t$ are solutions to \eqref{eq:GAC} and hence comparison principle can be applied to the solution of interests and these constant-valued equilibrium solutions, which yields the desired result.

\begin{proof}
Our proof can be split into the following three steps: in the first two steps, we consider a \textit{linear} operator of $U$ given by $\frac{\d U_j}{\d t} - D \left(\Delta_h U \right)_j + \tilde{C}_j U_j$ for some vector $\tilde{C} = \left( \tilde{C_{j}}\right)$, and show that a maximum principle result for this linear operator; and in the last step, we prove the comparison principle for the nonlinear problem as considered in \eqref{eqn:comparison_principle_condition}. It is worth pointing out that although a linear problem is considered at first, there is no linearization procedure introducing any extra error here.

First, we claim that if 
\begin{equation}\label{eqn:comparison_principle_pf_step1_condition}
    \frac{\d V_j}{\d t} - D \left(\Delta_h V \right)_j + \tilde{C}_j V_j < 0,
\end{equation}
for some $\tilde{C}$ with all entries positive, then it holds that
\[
\displaystyle\max_{t\in [0, T], j\in \mathcal{I}} V_j (t) 
= \displaystyle\max_{(j,t)\in (\mathcal{B}\times (0,T]) \cup (\mathcal{I} \times \{0\}) } V^+_{j}(t)  
\]
where $\mathcal{B}$ denotes the boundary indices and $V^+_{j} : = \max(V_j, 0)$.

Suppose the claim does not hold, then there exists some time $t_0 \in (0, T]$ and $j_0 \not\in \mathcal{B}$ such that $V_j(t)$ attains the positive maximum value at $(t_0, j)$. Note that the definition \eqref{eqn:dis_lap_highD_tensor} of $\Delta_h$ can be written in the following equivalent form
\begin{align} 
\Delta_h v_{j_1, \cdots, j_n} : = & \frac{v_{j_1+1, \cdots, j_n}  - 2 v_{j_1, \cdots, j_n} + v_{j_1-1, \cdots, j_n}}{h^2} + \frac{v_{j_1,  j_2+1, \cdots, j_n}  - 2 v_{j_1, j_2, \cdots, j_n} + v_{j_1, j_2-1 \cdots, j_n}}{h^2} \nonumber \\
&+ \cdots + \frac{v_{j_1, \cdots, j_n+1}  - 2 v_{j_1, \cdots, j_n} + v_{j_1, \cdots, j_n-1}}{h^2}, \label{eqn:def_lap_h_eltwise}
\end{align}
and hence one has $-\left(\Delta_h V \right)_{j_0} \geq 0$. Meanwhile, if $t_0\in (0,T)$, then 
\[
\frac{\d V_{j_0}}{\d t} (t_0) =0;
\]
otherwise, $t_0 = T$, and one has 
\[
\frac{\d V_{j_0}}{\d t} (t_0) \geq 0.
\]
Therefore, one has 
\[
\frac{\d V_j}{\d t} - D \left(\Delta_h V \right)_j + \tilde{C}_j V_j \geq 0,
\]
which is a contradiction. This completes the proof of the claim.

In the second step, we relax the condition of the claim from the following two angles: the condition on $C$ is changed from positive to bounded from below; and the equality is allowed. To be precise, we shall show that for $C$ satisfying that $C_j \geq c_{\rm min}$ for all $j \in \mathcal{I}$, if 
\begin{equation}\label{eqn:comparison_principle_pf_step2_condition}
    \frac{\d V_j}{\d t} - D \left(\Delta_h V \right)_j + \tilde{C}_j V_j \le 0,
\end{equation}
then 
\begin{equation}\label{eqn:comparison_principle_pf_step2_conclusion}
\displaystyle\max_{t\in [0, T], j \in \mathcal{I}} V_j (t) = \displaystyle\max_{(j,t)\in (\mathcal{B}\times (0,T]) \cup (\mathcal{I} \times \{0\}) } V^+_{j}(t). 
\end{equation}
To prove this, we define $\tilde{V}(t)$ by the change of variable $\tilde{V}(t) = e^{c_{\rm min} t} V(t) - \delta t$. A straightforward calculation yields
\[
 \frac{\d \tilde{V}_j}{\d t} = c_{\rm min} \tilde{V}_j + e^{c_{\rm min} t}  \frac{\d V_j}{\d t} -\delta 
 \le  D \left(\Delta_h \tilde{V} \right)_j - \left(\tilde{C}_j - c_{\rm min} \right) \tilde{V}_j  - \delta.
\]
Thus $\tilde{V}_j$ satisfies \eqref{eqn:comparison_principle_pf_step1_condition}. Letting $\delta \to 0_+$ yields the desired result.

The last part of the proof is to show the comparison principle. Let $W: = U-V$. One has $W_j(0) \le 0$ for $j \in \mathcal{I}$ and $W_j(t) \le 0$ for all $t\in (0,T]$ and $j \in \mathcal{B}$ so that the right-hand-side of \eqref{eqn:comparison_principle_pf_step2_conclusion} is non-positive. Moreover, $W$ satisfies
\[
\frac{\d W_j}{\d t} - D \left(\Delta_h W\right)_j - f(U_j) + f(V_j) \le 0,
\]
and by mean value theorem $f(U_j) - f(V_j) = f'(\xi_j)$ with $\xi_j$ in between $U_j$ and $V_j$, we arrive at 
\[
    \frac{\d W_j}{\d t} - D \left(\Delta_h W\right)_j - f'(\xi_j) W_j \le 0.
\]
Applying the result of the second step yields
\[
W_j (t) \le 0,
\]
for all $j \in \mathcal{I}$ and $t \in [0, T]$, which completes the proof of this lemma.
\end{proof}

\section{Matrix inequality}
\label{sec:matrix_inequality}

\begin{lemma}[Discrete maximal principle]\label{lem:discrete_max_principle} Let $D_h$ be the discrete $1$-dimensional Laplacian operator with homogeneous Dirichlet boundary conditions, then 
\begin{equation}
    \norm{e^{tD_h}}_{\infty}\leq 1, \quad \forall~t>0.
\end{equation}
\end{lemma}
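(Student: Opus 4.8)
The plan is to split $D_h$ into its (scalar) diagonal part and its nonnegative off-diagonal part, which commute, and then exploit nonnegativity together with submultiplicativity of the $\ell_\infty$ operator norm. Recall that for any matrix $A\in\R^{n\times n}$, $\|A\|_\infty$ equals the maximum absolute row sum, and that $\|AB\|_\infty\le\|A\|_\infty\|B\|_\infty$.

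\textbf{Step 1: commuting splitting.} Write $D_h=D_h^{\text{Dir}}=(n+1)^2\bigl(N-2I\bigr)$, where $N\in\R^{n\times n}$ is the matrix with ones on the first super- and sub-diagonals and zeros elsewhere. Since the scalar matrix $-2(n+1)^2 I$ commutes with $(n+1)^2 N$, we have
\begin{equation}
e^{tD_h}=e^{-2(n+1)^2 t}\,e^{(n+1)^2 t N},\qquad t>0.
\end{equation}

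\textbf{Step 2: bounding $\|e^{(n+1)^2 tN}\|_\infty$.} The entries of $N$ are nonnegative, so every power $N^k$ is entrywise nonnegative, and hence so is $e^{(n+1)^2 tN}=\sum_{k\ge0}\frac{((n+1)^2 t)^k}{k!}N^k$. The interior rows of $N$ have row sum $2$, while the first and last rows (which are truncated because of the homogeneous Dirichlet boundary condition) have row sum $1$; thus $\|N\|_\infty=2$. Submultiplicativity gives $\|N^k\|_\infty\le 2^k$, and the triangle inequality then yields $\|e^{(n+1)^2 tN}\|_\infty\le\sum_{k\ge0}\frac{((n+1)^2 t)^k}{k!}2^k=e^{2(n+1)^2 t}$. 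Combining with Step~1,
\begin{equation}
\|e^{tD_h}\|_\infty\le e^{-2(n+1)^2 t}\cdot e^{2(n+1)^2 t}=1,
\end{equation}
which is the claim.

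\textbf{Main obstacle.} There is no serious difficulty here; the only point requiring a moment's care is that the boundary rows of $N$ (row sum $1$) do not spoil the estimate — in fact they make the semigroup strictly sub-Markovian near the boundary, whereas the interior rows (row sum $2$) are exactly balanced against the $-2$ on the diagonal. Equivalently, one can phrase the whole argument in one line via the logarithmic $\ell_\infty$-norm: $\mu_\infty(D_h)=\max_i\bigl((n+1)^2(-2+\sum_{j\ne i}|N_{ij}|)\bigr)=0$, so $\|e^{tD_h}\|_\infty\le e^{t\mu_\infty(D_h)}=1$. I would present the elementary splitting argument as the main proof and mention the log-norm viewpoint as a remark.
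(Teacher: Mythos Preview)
Your proof is correct. The route is different from the paper's, though both are elementary. The paper argues via the forward-Euler limit $e^{tD_h}=\lim_{k\to\infty}(I+\tfrac{t}{k}D_h)^k$: for small enough $t$ (and then $k$ large), the matrix $I+\tfrac{t}{k}D_h$ has nonnegative entries with row sums at most $1$, hence $\|(\,\cdot\,)^k\|_\infty\le 1$; larger $t$ is reduced to the small-$t$ case by writing $e^{tD_h}=(e^{(t/l)D_h})^l$. Your commuting split $D_h=(n+1)^2(N-2I)$ bypasses both the limit argument and the time-splitting: once the scalar part is pulled out, the series bound $\|e^{sN}\|_\infty\le e^{2s}$ follows directly from $\|N\|_\infty=2$, and the scalar factor cancels exactly. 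The log-norm remark you give at the end is the one-line synthesis of the same idea and is perhaps the cleanest statement. (A minor aside: the entrywise nonnegativity of $e^{(n+1)^2 tN}$ that you note in Step~2 is true but not actually used in your bound; submultiplicativity alone does the work.)
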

\begin{proof}
In this proof, we will apply the results,
\begin{equation}
    e^{tD_{h}} =\lim_{k\to \infty} \left(1+\frac{t\Delta_{h}}{k}\right)^k.
\end{equation}
First, we restrict $t\leq t_0:=\frac{1}{(n+1)^2}$. Then
when $k$ is larger enough, i.e. $k\geq 2t_0 (n+1)^2$, one has
\begin{equation}
\norm{1+\frac{tD_h}{k}}_{\infty}\leq  \left(1-2 \frac{t }{k}(n+1)^2\right)+ \frac{t }{k}(n+1)^2+ \frac{t }{k}(n+1)^2\leq 1. 
\end{equation}
Hence,
\begin{equation*}
    \norm{\left(1+\frac{tD_{h}}{k}\right)^k}_{\infty}\leq \norm{1+\frac{tD_{h}}{k}}_{\infty}^k\leq 1.
\end{equation*}
By continuity, one can get that $\norm{e^{tD_h}}_{\infty}\leq 1$. As for $t>t_0$, we choose positive integer $l$ such that $t/l\leq t_0$  and get 
\begin{equation*}
    \norm{e^{tD_h}}_{\infty}=\norm{\left(e^{tD_h/l}\right)^l}_{\infty}\leq \norm{e^{tD_h/l}}_{\infty}^l\leq 1.
\end{equation*}
In this way, we obtain the desired results.
\end{proof}

\begin{lemma}\label{lem:Delta_p_estimate}
Let $D_h^\text{per}$ be the discrete $1$-dimensional Laplacian operator with periodic boundary conditions, then 
\begin{equation}
    \norm{e^{tD_h^\text{per}}}_{\infty}= 1, \quad \forall~t>0.
\end{equation}
\end{lemma}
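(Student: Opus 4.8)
The plan is to mirror the argument of \lem{discrete_max_principle}, exploiting that $D_h^\text{per}$ is (up to scaling) the generator of a continuous-time random walk on the cycle $\Z/n\Z$. Write
\[
D_h^\text{per} = -2n^2 I + n^2 A,
\]
where $A$ is the adjacency matrix of the $n$-cycle, i.e.\ the circulant matrix with $1$'s on the two off-diagonals and in the two corner entries, and $0$'s elsewhere. Since $A$ has nonnegative entries, so does $A^k$ for every $k\ge 0$, and therefore
\[
e^{tD_h^\text{per}} = e^{-2n^2 t}\, e^{n^2 t A} = e^{-2n^2 t}\sum_{k=0}^{\infty}\frac{(n^2 t)^k}{k!}A^k
\]
is entrywise nonnegative for all $t>0$. (Equivalently, one can use $e^{tD_h^\text{per}} = \lim_{k\to\infty}(I+tD_h^\text{per}/k)^k$ as in \lem{discrete_max_principle}: for $k$ large the matrix $I+tD_h^\text{per}/k$ has nonnegative entries, which is inherited by all its powers and the limit.)

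Next I would observe that every row of $D_h^\text{per}$ sums to $n^2(-2+1+1)=0$, i.e.\ $D_h^\text{per}\mathbbm{1}=0$ where $\mathbbm{1}$ is the all-ones vector. Consequently $e^{tD_h^\text{per}}\mathbbm{1}=\mathbbm{1}$, which says precisely that every row of $e^{tD_h^\text{per}}$ sums to $1$. Combining this with entrywise nonnegativity, each row of $e^{tD_h^\text{per}}$ has absolute row sum exactly equal to $1$. Since the operator norm $\norm{\cdot}_\infty$ of a matrix equals its maximum absolute row sum, this gives $\norm{e^{tD_h^\text{per}}}_\infty = 1$ for all $t>0$, as claimed.

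I do not anticipate any real obstacle here; the only points requiring a line of care are (i) justifying the entrywise nonnegativity of $e^{tD_h^\text{per}}$ cleanly (the shift by $-2n^2 I$ makes this immediate), and (ii) recalling that $\norm{\cdot}_\infty$ for matrices is the max absolute row sum, so that the exact value $1$ — rather than merely the bound $\le 1$ obtained for the Dirichlet case — follows from the row sums being exactly $1$ together with nonnegativity.
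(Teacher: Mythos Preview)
Your proposal is correct. The approach is closely related to the paper's but packaged differently: the paper proves the two inequalities separately --- it reuses the limit argument of \lem{discrete_max_principle} verbatim to get $\norm{e^{tD_h^\text{per}}}_\infty\le 1$, and then invokes the eigenvector for the eigenvalue $0$ (which is exactly your $\mathbbm{1}$) to obtain $\norm{e^{tD_h^\text{per}}}_\infty\ge 1$. You instead observe directly that $e^{tD_h^\text{per}}$ is a row-stochastic matrix (entrywise nonnegative with row sums equal to $1$), which gives the exact value in one stroke. Your shift $D_h^\text{per}=-2n^2 I + n^2 A$ makes the nonnegativity transparent without the limit argument, and the row-sum identity is the same eigenvector observation the paper uses for the lower bound, just read as an equality on absolute row sums rather than as a norm inequality. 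Both arguments are short; yours is slightly more self-contained and also identifies $e^{tD_h^\text{per}}$ as a stochastic (in fact doubly stochastic, by symmetry) matrix, which is the structural reason behind the equality.
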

\begin{proof}
Using the same argument in \lem{discrete_max_principle}, we can prove that 
\begin{equation*}
    \norm{e^{tD_h^\text{per}}}_{\infty}\leq 1, \quad \forall~t>0.
\end{equation*}
Now we only need to prove the inequality in the opposite direction. Denote $v$ to be the eigenvector of $D_h^\text{per}$ associated with eigenvalue $0$. Then we know that $e^{tD_h^\text{per}} v=v$, which implies $\norm{e^{tD_h^\text{per}}}_{\infty}\geq 1$. In this way, we obtain that $\norm{e^{tD_h^\text{per}}}_{\infty}= 1$.
\end{proof}

\begin{lemma}\label{lem:Delta_h_estimate}
Let $D_h^\text{Dir}$ be the discrete $1$-dimensional Laplacian operator with homogeneous Dirichlet boundary conditions, and denote $\mu_1\geq \mu_2\geq \cdots\geq \mu_n$ to be the eigenvalues of $D_h^\text{Dir}$. Then 
\begin{equation}
    \norm{e^{tD_h^\text{Dir}}}_{\infty} \leq \left(\frac{4}{\pi}+\frac{2e^{\mu_2 t}}{e^{-2\mu_1 t/\pi}-1}\right)e^{\mu_1 t}, \quad \forall~t>0.
\end{equation}
\end{lemma}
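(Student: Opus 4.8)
The plan is to use the explicit spectral decomposition of $D_h^\text{Dir}$ and reduce the operator $\ell_\infty$-norm to a maximum over row $\ell_1$-norms. Recall that $D_h^\text{Dir}$ has orthonormal eigenvectors $v^{(k)}$ with entries $v^{(k)}_j=\sqrt{\tfrac{2}{n+1}}\sin\!\big(\tfrac{jk\pi}{n+1}\big)$ and eigenvalues $\mu_k=-4(n+1)^2\sin^2\!\big(\tfrac{k\pi}{2(n+1)}\big)$, so that $e^{tD_h^\text{Dir}}=\sum_{k=1}^n e^{\mu_k t}\,v^{(k)}(v^{(k)})^{\top}$ and $\norm{e^{tD_h^\text{Dir}}}_{\infty}=\max_j\sum_l\big|(e^{tD_h^\text{Dir}})_{jl}\big|$. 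I would split off the top mode and bound the $k=1$ contribution and the $k\ge 2$ tail separately by the triangle inequality, which is natural because for large $t$ the norm is governed by the leading eigenvalue $\mu_1$ while the tail is an exponentially smaller correction that \lem{discrete_max_principle} is too crude to detect.

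For the $k=1$ term the eigenvector $v^{(1)}$ is entrywise nonnegative, so the row-$j$ $\ell_1$-norm of $e^{\mu_1 t}v^{(1)}(v^{(1)})^{\top}$ equals $e^{\mu_1 t}\tfrac{2}{n+1}\sin\!\big(\tfrac{j\pi}{n+1}\big)\sum_{l=1}^n\sin\!\big(\tfrac{l\pi}{n+1}\big)$. Using the closed form $\sum_{l=1}^n\sin\!\big(\tfrac{l\pi}{n+1}\big)=\cot\!\big(\tfrac{\pi}{2(n+1)}\big)$, the elementary bound $\cot x< 1/x$ on $(0,\pi/2)$, and $\sin\le 1$, this is at most $\tfrac{2}{n+1}\cdot\tfrac{2(n+1)}{\pi}\,e^{\mu_1 t}=\tfrac{4}{\pi}e^{\mu_1 t}$, which produces the first term of the claimed bound.

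For the tail $\sum_{k\ge 2}e^{\mu_k t}v^{(k)}(v^{(k)})^{\top}$ I would first bound each rank-one block uniformly via $\norm{v^{(k)}(v^{(k)})^{\top}}_{\infty}=\norm{v^{(k)}}_{\infty}\norm{v^{(k)}}_1$, using $\norm{v^{(k)}}_{\infty}\le\sqrt{2/(n+1)}$ and $\norm{v^{(k)}}_1\le\sqrt{2/(n+1)}\,\cot\!\big(\tfrac{\pi}{2(n+1)}\big)$; the latter follows by decomposing $\sum_{l=1}^n|\sin(lk\pi/(n+1))|$ into at most $\gcd(k,n+1)$ full periods, each of length $(n+1)/\gcd(k,n+1)$, of the same closed-form sine sum. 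This reduces matters to estimating $\sum_{k=2}^n e^{\mu_k t}$ by a geometric series, and the decay rate comes from an eigenvalue-spacing bound: with $a:=\tfrac{\pi}{2(n+1)}$, the identity $\sin^2 u-\sin^2 v=\sin(u+v)\sin(u-v)$ gives $\mu_{k-1}-\mu_k=|\mu_1|\,\tfrac{\sin((2k-1)a)}{\sin a}$, and since the angles $(2k-1)a$ for $k=2,\dots,n$ fill $\{3a,5a,\dots,(2n-1)a\}$, which lies in $[3a,\pi-3a]$ symmetrically about $\pi/2$, concavity of $\sin$ on $(0,\pi)$ yields $\sin((2k-1)a)\ge\sin 3a$, hence $\mu_{k-1}-\mu_k\ge|\mu_1|(3-4\sin^2 a)\ge\tfrac{2}{\pi}|\mu_1|$. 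Iterating gives $\mu_k\le\mu_1\big(1+\tfrac{2}{\pi}(k-1)\big)$, so $\sum_{k=2}^n e^{\mu_k t}\le e^{\mu_1 t}\sum_{j\ge 1}e^{2j\mu_1 t/\pi}=\dfrac{e^{\mu_1 t}}{e^{-2\mu_1 t/\pi}-1}$; combining with the $k=1$ estimate yields a bound of the stated shape, the factor $e^{\mu_2 t}$ being obtained by pulling the $k=2$ mode out of the geometric sum before summing.

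The main obstacle is precisely this eigenvalue-spacing step together with the bookkeeping needed to produce exactly the constants $\big(\tfrac{4}{\pi}+\tfrac{2e^{\mu_2 t}}{e^{-2\mu_1 t/\pi}-1}\big)e^{\mu_1 t}$: one must check that the minimum of $\sin((2k-1)a)/\sin a$ over the relevant range really is attained at the symmetric endpoints (so that concavity applies, with the edge cases $n=1,2$ handled directly), control $\norm{v^{(k)}}_1$ uniformly in $k$ rather than only for $k$ coprime to $n+1$, and track how the factors $e^{\mu_1 t}$ and $e^{\mu_2 t}$ distribute between the two summands when the leading mode(s) are separated. The remaining ingredients — the closed-form sine sums, $\cot x< 1/x$, the triple-angle identity $\sin 3\theta=3\sin\theta-4\sin^3\theta$, and summing the geometric series — are routine.
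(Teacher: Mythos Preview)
Your approach matches the paper's: explicit spectral decomposition, separate the $k=1$ mode (the argument and the constant $\tfrac{4}{\pi}$ are identical, via $\sum_l\sin\!\tfrac{l\pi}{n+1}=\cot\!\tfrac{\pi}{2(n+1)}$ and $\tan x\ge x$), and bound the $k\ge 2$ tail by rank-one block estimates together with an eigenvalue-gap inequality that turns $\sum_{k\ge 2}e^{\mu_k t}$ into a geometric series in $e^{2\mu_1 t/\pi}$. The paper's technical choices differ only slightly: it bounds the blocks crudely via $|U_{kj}U_{lj}|\le\tfrac{2}{n+1}$, giving prefactor $\tfrac{2n}{n+1}\le 2$ (so your refined $\|v^{(k)}\|_1$ estimate is not needed), and it compares $\mu_k$ to $\mu_2$ directly, obtaining $\mu_k-\mu_2\le\tfrac{2\mu_1}{\pi}(k+1)$ from the inequalities $\tfrac{2x}{\pi}\le\sin x\le x$, rather than iterating consecutive gaps as you do.

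One caveat worth flagging, since you list it among the obstacles: your proposed route to the $e^{\mu_2 t}$ factor---``pulling the $k=2$ mode out of the geometric sum before summing''---does not by itself recover the stated constant. Iterating your consecutive gap from $\mu_2$ gives $\sum_{k\ge 2}e^{\mu_k t}\le e^{\mu_2 t}\sum_{j\ge 0}e^{2j\mu_1 t/\pi}=\tfrac{e^{\mu_2 t}}{1-e^{2\mu_1 t/\pi}}$, which is strictly weaker than the claimed $\tfrac{e^{(\mu_1+\mu_2)t}}{e^{-2\mu_1 t/\pi}-1}$. In the paper the extra $e^{\mu_1 t}$ enters precisely because the index in their spacing bound is $(k+1)$ rather than $(k-2)$: the geometric sum then starts at $e^{6\mu_1 t/\pi}$, and the elementary inequality $e^{4\mu_1 t/\pi}\le e^{\mu_1 t}$ (equivalently $\pi\le 4$) supplies the remaining factor.
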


\begin{proof}
We use the decomposition of $D_h^\text{Dir}$:
\[
D_h^\text{Dir} =U\text{diag}(\mu_1,...,\mu_n)U^\top \text{ with } U_{kl}=\sqrt{\frac{2}{n+1}}\sin\left(\frac{kl \pi}{n+1}\right).
\]
Here $\mu_k=-4(n+1)^2\left(\sin\left(\frac{k\pi}{2n+2}\right)\right)^2$. Besides, we know that for any $3\leq k\leq n$,
\begin{equation}
    \begin{split}
        \mu_k-\mu_2& = -4(n+1)^2\left(\sin\left(\frac{k\pi}{2n+2}\right)\right)^2+4(n+1)^2\left(\sin\left(\frac{\pi}{n+1}\right)\right)^2\\
        &= 2(n+1)^2\left( \cos\left(\frac{k\pi}{n+1}\right)-\cos\left(\frac{2\pi}{n+1}\right)\right)\\
        &=-4(n+1)^2\sin\left(\frac{(k-2)\pi}{2n+2}\right)\sin\left(\frac{(k+2)\pi}{2n+2}\right)\\
        &\leq -4(n+1)^2\sin\left(\frac{\pi}{2n+2}\right)\sin\left(\frac{(k+1)\pi}{2n+2}\right)\\
        &\leq \mu_1 \frac{\sin\left(\frac{(k+1)\pi}{2n+2}\right)}{\sin\left(\frac{\pi}{2n+2}\right)} \leq \mu_1 \frac{\frac{k+1}{n+1}}{\frac{\pi}{2n+2}}=\frac{2\mu_1}{\pi} (k+1).
    \end{split}
\end{equation}
In the last step, we use the inequality $\frac{2}{x}\leq \sin(x)\leq x$ for $x\in[0,\frac{\pi}{2}]$. 
Note that 
\begin{equation}
    \norm{e^{tD_h^\text{Dir}}}_{\infty} =\max_{1\leq k\leq n} \sum_{l=1}^n \abs{\sum_{j=1}^n U_{kj} e^{\mu_j t} U_{lj}}. 
\end{equation}
For any $1\leq k\leq n$, we have
\begin{equation}
    \begin{split}
        &\sum_{l=1}^n \abs{\sum_{j=1}^n U_{kj} e^{\mu_j t} U_{lj}}\\
        &= \frac{2}{n+1}\sum_{l=1}^n \abs{\sum_{j=1}^n e^{\mu_j t}\sin\left(\frac{kj \pi}{n+1}\right)\sin\left(\frac{lj \pi}{n+1}\right) }\\
        &\leq \frac{2}{n+1}\sum_{l=1}^n \sum_{j=1}^n e^{\mu_j t}\abs{\sin\left(\frac{kj \pi}{n+1}\right)\sin\left(\frac{lj \pi}{n+1}\right) }\\
        &\leq \frac{2n}{n+1}\sum_{j=2}^n e^{\mu_j t} + \frac{2}{n+1} e^{\mu_1 t}\sin\left(\frac{k \pi}{n+1}\right)\sum_{l=1}^n\sin\left(\frac{l \pi}{n+1}\right). 
    \end{split}
\end{equation}
We notice that for any $k$,
\begin{equation}
\begin{split}
    \frac{2}{n+1} e^{\mu_1 t}\sin\left(\frac{k \pi}{n+1}\right)\sum_{l=1}^n\sin\left(\frac{l \pi}{n+1}\right) &\leq \frac{2}{n+1}   e^{\mu_1 t}\sin\left(\frac{k \pi}{n+1}\right)\frac{1}{\tan\left(\frac{ \pi}{2n+2}\right)}\\
    &\leq \frac{4}{\pi}e^{\mu_1 t}\sin\left(\frac{k \pi}{n+1}\right) \leq \frac{4}{\pi}e^{\mu_1 t},
\end{split}
\end{equation}
where we use the equality $\sum_{l=1}^n\sin\left(\frac{l \pi}{n+1}\right)=\frac{1}{\tan\left(\frac{ \pi}{2n+2}\right)}$ and $\tan(x)\geq x$ for $x\in[0,\frac{\pi}{2})$.

We also notice that 
\begin{equation}
\begin{split}
    \frac{2n}{n+1}\sum_{j=2}^n e^{\mu_j t}&= \frac{2n}{n+1}e^{\mu_2 t}\sum_{j=2}^n e^{(\mu_j-\mu_2) t} \leq \frac{2n}{n+1}e^{\mu_1 t}\sum_{j=2}^n e^{2(k+1)\mu_1 t/\pi}\\
    &\leq 2e^{\mu_2 t} \frac{e^{6\mu_1 t/\pi}\left(1-e^{2(n-1)\mu_1 t/\pi}\right)}{1-e^{2\mu_1 t/\pi}}\leq\frac{2}{e^{-2\mu_1 t/\pi}-1}e^{(\mu_1+\mu_2)t}.
\end{split}
\end{equation}
In this way, we obtain 
\begin{equation}
    \norm{e^{t D_h^\text{Dir}}}_{\infty} \leq \left(\frac{4}{\pi}+\frac{2e^{\mu_2 t}}{e^{-2\mu_1 t/\pi}-1}\right)e^{\mu_1 t}.
\end{equation}
\end{proof}
 Combining \lem{discrete_max_principle} and \lem{Delta_h_estimate}, we obtain the following results.
\begin{lemma}
    Let $\mu_1$ be the largest eigenvalue of $1$-dimensional $D_h^\text{Dir}$ with  homogeneous Dirichlet boundary conditions. Given $t\in\R^+$, for arbitrary $\mu>\mu_1$, we have the decay estimate
    \begin{equation}
        \norm{e^{tD_h^\text{Dir}}}_{\infty} \leq \begin{cases}
        1,&  0 < t < \frac{\ln(3)}{2(\mu-\mu_1)}, \\
        e^{t\mu} , &  t \geq \frac{\ln(3)}{2(\mu-\mu_1)}. 
        \end{cases}
    \end{equation}
\end{lemma}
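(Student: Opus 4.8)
The idea is to split the time half-line at the threshold $t_0 := \frac{\ln(3)}{2(\mu-\mu_1)}$ appearing in the statement and to handle each piece with one of the two preceding lemmas. For $0 < t < t_0$ there is nothing to do beyond invoking \lem{discrete_max_principle}, which already gives $\norm{e^{tD_h^\text{Dir}}}_\infty \le 1$ for every $t>0$; this is exactly the first branch of the claimed estimate.

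For $t \ge t_0$ I need the exponential decay $\norm{e^{tD_h^\text{Dir}}}_\infty \le e^{\mu t}$. If $\mu \ge 0$ this is immediate, since $e^{\mu t}\ge 1 \ge \norm{e^{tD_h^\text{Dir}}}_\infty$ by \lem{discrete_max_principle}, so I may assume $\mu_1 < \mu < 0$. Here I would apply \lem{Delta_h_estimate}, which bounds $\norm{e^{tD_h^\text{Dir}}}_\infty$ by $\bigl(\tfrac{4}{\pi}+\tfrac{2e^{\mu_2 t}}{e^{-2\mu_1 t/\pi}-1}\bigr)e^{\mu_1 t}$; dividing through by $e^{\mu_1 t}>0$, it suffices to prove the scalar inequality $\tfrac{4}{\pi}+\tfrac{2e^{\mu_2 t}}{e^{-2\mu_1 t/\pi}-1}\le e^{(\mu-\mu_1)t}$ for all $t\ge t_0$. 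The left-hand side is non-increasing in $t$ (the numerator of the fraction decreases and the denominator increases, as $\mu_1,\mu_2<0$), while the right-hand side is non-decreasing in $t$ because $\mu>\mu_1$. Hence it is enough to verify the inequality at the single endpoint $t=t_0$, where by the choice of $t_0$ the right-hand side equals $e^{(\ln 3)/2}=\sqrt{3}$.

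It therefore remains to check the residual numerical bound $\tfrac{4}{\pi}+\tfrac{2e^{\mu_2 t_0}}{e^{-2\mu_1 t_0/\pi}-1}\le \sqrt{3}$, and for this I would use the explicit eigenvalue data: since $\mu_1<\mu<0$ we have $\tfrac{-\mu_1}{\mu-\mu_1}\ge 1$, hence $-\mu_1 t_0=\tfrac{\ln 3}{2}\cdot\tfrac{-\mu_1}{\mu-\mu_1}\ge \tfrac{\ln 3}{2}$, so that $e^{\mu_1 t_0}\le 3^{-1/2}$ and $e^{-2\mu_1 t_0/\pi}-1\ge 3^{1/\pi}-1$; moreover, from $\mu_1=-4(n+1)^2\sin^2\!\bigl(\tfrac{\pi}{2n+2}\bigr)$, $\mu_2=-4(n+1)^2\sin^2\!\bigl(\tfrac{\pi}{n+1}\bigr)$, and the elementary fact $\sin^2(2\theta)\ge 3\sin^2\theta$ on $(0,\pi/6]$, one gets $\mu_2\le 3\mu_1$, so $e^{\mu_2 t_0}\le 3^{-3/2}$. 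I expect this last step to be the main obstacle: it is the tight part of the argument — it is what pins the constant to $\ln 3$ rather than something smaller in the definition of $t_0$ — and to make it close comfortably one may need to retain the sharper form of the bound in \lem{Delta_h_estimate}, namely the sum $\tfrac{2n}{n+1}\sum_{j\ge 2}e^{\mu_j t}$ before it is coarsened to a geometric series, or, equivalently, to argue through the intermediate bound $\norm{e^{tD_h^\text{Dir}}}_\infty\le \sqrt{3}\,e^{\mu_1 t}$ on $[t_0,\infty)$ and then conclude via $e^{\mu t}=e^{\mu_1 t}e^{(\mu-\mu_1)t}\ge \sqrt 3\,e^{\mu_1 t}$ for $t\ge t_0$.
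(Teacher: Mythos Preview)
Your plan is essentially the paper's own argument: split at $t_0$, cite \lem{discrete_max_principle} below $t_0$, and above $t_0$ combine \lem{Delta_h_estimate} with $\mu_2\le 3\mu_1$ to get the prefactor bounded by $\sqrt 3$, then use $\sqrt 3\le e^{(\mu-\mu_1)t}$ for $t\ge t_0$. Your case split into $\mu\ge 0$ (trivial) versus $\mu_1<\mu<0$ and your monotonicity reduction to the single endpoint $t=t_0$ are clarifications the paper does not spell out, but the skeleton is identical.

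On the step you correctly flag as the obstacle: the paper offers no more than you do. It simply asserts
\[
\frac{4}{\pi}+\frac{2e^{3\mu_1 t}}{e^{-2\mu_1 t/\pi}-1}\le \sqrt 3
\]
for $t\ge t_0$ without further justification. Your worry is well founded: at the worst endpoint $-\mu_1 t_0=\tfrac{\ln 3}{2}$ (i.e.\ $\mu\to 0^-$) one gets $\tfrac{4}{\pi}+\tfrac{2\cdot 3^{-3/2}}{3^{1/\pi}-1}\approx 2.19>\sqrt 3$, so the coarsened bound from \lem{Delta_h_estimate} does not close there. Your suggested fix---returning to the sharper intermediate sum inside the proof of \lem{Delta_h_estimate} before it is relaxed to a geometric series---is exactly what one would need; the paper's proof has the same gap and does not resolve it.
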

\begin{proof}
    The case when $t < \frac{\ln(3)}{2(\mu-\mu_1)}$ directly follows from \lem{discrete_max_principle}. 
    When $t \geq \frac{\ln(3)}{2(\mu-\mu_1)}$, recall that
    \begin{equation}
        \mu_2 =-4(n+1)^2\left(\sin\left(\frac{2\pi}{2n+2}\right)\right)^2 =4\mu_1 \left(\cos\left(\frac{\pi}{2n+2}\right)\right)^2\leq 3\mu_1.
    \end{equation}
    Then we have 
    \begin{equation}
        \frac{4}{\pi}+\frac{2e^{\mu_2 t}}{e^{-2\mu_1 t/\pi}-1}\leq \frac{4}{\pi}+\frac{2e^{3\mu_1 t}}{e^{-2\mu_1 t/\pi}-1}\leq \sqrt{3}.
    \end{equation}
    Notice that 
    \begin{equation}
        \sqrt{3}e^{t\mu_1} \le e^{t(\mu-\mu_1)}e^{t\mu_1} = e^{t\mu}, \quad t \geq \frac{\ln(3)}{2(\mu-\mu_1)}
    \end{equation}
    for an $\mu>\mu_1$, we get $\norm{e^{tD_h^\text{Dir}}}_{\infty}\leq e^{t\mu}$.
    Together with \lem{exp_decay}, we obtain the desired bound. 
\end{proof}

We introduce a lemma about the decay estimate of the discrete heat semigroup.

\begin{lemma}[Decay of Discrete Heat Semigroup]\label{lem:exp_decay} 
Let $\Delta_h$ be the discrete $d$-dimensional Laplacian operator defined in \eqref{eqn:dis_lap_highD_tensor} with homogeneous Dirichlet boundary conditions imposed on $d_1$ dimensions and periodic boundary conditions imposed on $d_2$ dimensions ($d_1 + d_2 = d$). Given $t\in\R^+$, for arbitrary $\mu>\mu_1$, we have the decay estimate
\begin{equation}
        \|e^{t D \Delta_h}\|_{\infty} \leq \begin{cases}
        1,&  0 < t< \frac{\ln(3)}{2D(\mu-\mu_1)}, \\
        e^{t D d_1 \mu} , &  t \geq  \frac{\ln(3)}{2D(\mu-\mu_1)},
        \end{cases}
    \end{equation}
where $\mu_1=-4(n+1)^2\sin^2\left(\frac{\pi}{2n+2}\right)$.
\end{lemma}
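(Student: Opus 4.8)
The plan is to reduce the $d$-dimensional estimate to the one-dimensional building blocks already established, namely \lem{discrete_max_principle} (the Dirichlet case, $\norm{e^{tD_h^\text{Dir}}}_\infty \le 1$), \lem{Delta_p_estimate} (the periodic case, $\norm{e^{tD_h^\text{per}}}_\infty = 1$), and the interpolated Dirichlet decay bound stated just above (which gives $\norm{e^{tD_h^\text{Dir}}}_\infty \le 1$ for $t < \frac{\ln 3}{2(\mu-\mu_1)}$ and $\le e^{t\mu}$ for $t \ge \frac{\ln 3}{2(\mu-\mu_1)}$). First I would use the tensor-sum structure \eqref{eqn:dis_lap_highD_tensor}: since the summands of $\Delta_h$ act on distinct tensor factors and hence commute, the matrix exponential factorizes as a tensor product,
\begin{equation}
e^{tD\Delta_h} = \bigotimes_{k=1}^{d_1} e^{tD\, D_h^\text{Dir}} \;\otimes\; \bigotimes_{k=1}^{d_2} e^{tD\, D_h^\text{per}}.
\end{equation}
The $\ell_\infty$ operator norm is submultiplicative with respect to tensor products (the $\ell_\infty$ norm of a tensor product of matrices is the product of the $\ell_\infty$ norms, since the max-row-sum of a Kronecker product is the product of the max-row-sums), so
\begin{equation}
\norm{e^{tD\Delta_h}}_\infty \le \prod_{k=1}^{d_1}\norm{e^{tD\,D_h^\text{Dir}}}_\infty \cdot \prod_{k=1}^{d_2}\norm{e^{tD\,D_h^\text{per}}}_\infty = \norm{e^{tD\,D_h^\text{Dir}}}_\infty^{d_1}.
\end{equation}

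Next I would substitute the one-dimensional bounds. The periodic factors contribute exactly $1$ each by \lem{Delta_p_estimate}, so they drop out. For the $d_1$ Dirichlet factors, I apply the interpolated decay lemma above with the rescaling $t \mapsto Dt$ (equivalently replacing $D_h^\text{Dir}$ by $D\,D_h^\text{Dir}$, whose largest eigenvalue is $D\mu_1$). In the short-time regime $0 < t < \frac{\ln 3}{2D(\mu-\mu_1)}$ each Dirichlet factor is bounded by $1$, hence so is the product; in the long-time regime $t \ge \frac{\ln 3}{2D(\mu-\mu_1)}$ each Dirichlet factor is bounded by $e^{tD\mu}$, so the product of $d_1$ of them is $e^{tDd_1\mu}$. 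This is exactly the claimed piecewise bound, with the threshold $\frac{\ln 3}{2D(\mu-\mu_1)}$ matching.

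I expect no serious obstacle here; the lemma is essentially bookkeeping on top of the one-dimensional results. The one point requiring a little care is the claim that the $\ell_\infty$-induced operator norm is multiplicative (not merely submultiplicative) on Kronecker products, and in particular that $\norm{A \otimes I}_\infty = \norm{A}_\infty$; this follows from the explicit formula $\norm{A}_\infty = \max_i \sum_j |A_{ij}|$ and the fact that the rows of $A\otimes B$ are indexed by pairs and have entries $A_{ij}B_{kl}$, so the max row sum factorizes. A secondary subtlety is ensuring the two regimes are patched consistently — i.e.\ that at $t = \frac{\ln 3}{2D(\mu-\mu_1)}$ the bound $1 \le e^{tDd_1\mu}$ is vacuous since $\mu_1 < \mu < 0$ would make $e^{tDd_1\mu} < 1$; in fact one checks this is fine because the one-dimensional lemma already guarantees each factor is $\le \sqrt{3}\,e^{tD\mu_1} \le e^{tD\mu}$ and $\le 1$ simultaneously in the overlap, so there is no inconsistency in raising to the $d_1$ power.
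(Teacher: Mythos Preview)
Your proposal is correct and follows essentially the same approach as the paper: factor the semigroup as a tensor product via the commuting sum structure \eqref{eqn:dis_lap_highD_tensor}, use multiplicativity of the $\ell_\infty$ operator norm on Kronecker products to reduce to $\norm{e^{tD D_h^\text{Dir}}}_\infty^{d_1}\norm{e^{tD D_h^\text{per}}}_\infty^{d_2}$, and then invoke the one-dimensional Dirichlet and periodic bounds. The paper's argument is identical, just slightly more terse about the tensor factorization and the norm multiplicativity that you spell out.
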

\begin{proof}
Thanks to \eqref{eqn:dis_lap_highD_tensor}, one has
\begin{equation} \label{eqn:exp_lap_norm_highD_tensor}
    \norm{e^{tD\Delta_h}}_{\infty} 
    = \norm{e^{tD D_h} \otimes e^{tD D_h} \otimes \cdots \otimes e^{tD D_h}}_{\infty} 
    \le \norm{e^{tD D_h^\text{Dir}}}_{\infty}^{d_1}\norm{e^{tD D_h^\text{per}}}_{\infty}^{d_2}.
\end{equation}
It suffices to consider the heat kernel for each dimension. For the dimensions with homogeneous Dirichlet boundary conditions, we have
\begin{equation}\label{eqn:1d_decay_heat_kernel_dirichlet} 
    \norm{e^{t D D_h }}_\infty \leq \begin{cases}
        1,&  0 < t <  \frac{\ln(3)}{2D(\mu-\mu_1)}, \\
        e^{tD\mu} , &  t \geq  \frac{\ln(3)}{2D(\mu-\mu_1)} . 
        \end{cases}
\end{equation}
Similarly, for the dimensions with periodic boundary conditions, we have
\begin{equation}\label{eqn:1d_decay_heat_kernel_periodic}
\norm{e^{t D D_h^\text{per} }}_\infty \le 1, \quad \forall~t>0.
\end{equation}
Combining these, we obtain 
\begin{equation}
    \norm{e^{tD\Delta_h}}_{\infty}\leq \norm{e^{tD D_h}}_{\infty}^{d_1}\leq \begin{cases}
        1,&  0 < t< \frac{\ln(3)}{2D(\mu-\mu_1)} , \\
        e^{tDd_1\mu} , &  t \geq \frac{\ln(3)}{2D(\mu-\mu_1)}. 
        \end{cases}
\end{equation}
\end{proof}

\begin{lemma}\label{lem:exp_decay_RD}
Let $\Delta_h$ be the discrete $d$-dimensional Laplacian operator defined in \eqref{eqn:dis_lap_highD_tensor} with homogeneous Dirichlet boundary conditions imposed on $d_1$ dimensions and periodic boundary conditions imposed on $d_2$ dimensions ($d_1 + d_2 = d$). Given $j\in\Z^+$, $t\in\R^+$, and $a\in\R$ such that $\lambda_1 = D d_1 \mu_1 + a < 0$, for arbitrary $\lambda$ such that $\lambda_1<\lambda < 0$, we have the decay estimate
\begin{equation}
\begin{aligned}
    \int_0^t \|e^{j(t-s) (D \Delta_h + a)}\|_{\infty}  \,d s \le \begin{cases}
        \frac{1}{ja} (e^{\frac{\ln(3)d_1}{2(\lambda-\lambda_1)}a}-1) + \frac{1}{j|\lambda|}, & a \neq 0, \\
        \frac{\ln(3)d_1}{2j(\lambda-\lambda_1)} + \frac{1}{j|\lambda|}, &  a = 0,
        \end{cases}
\end{aligned}
\end{equation}
where $\mu_1=-4(n+1)^2\sin^2\left(\frac{\pi}{2n+2}\right)$.
\end{lemma}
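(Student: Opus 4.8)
The plan is to deduce this from the discrete heat-semigroup decay estimate of \lem{exp_decay}, after stripping off the scalar reaction contribution $a$ and carrying out the $s$-integral by hand.

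First I would observe that $aI$ commutes with $D\Delta_h$, so $e^{j(t-s)(D\Delta_h+aI)} = e^{j(t-s)a}\,e^{j(t-s)D\Delta_h}$ and, since $e^{j(t-s)a}>0$ is a scalar, $\|e^{j(t-s)(D\Delta_h+a)}\|_\infty = e^{j(t-s)a}\,\|e^{j(t-s)D\Delta_h}\|_\infty$. Next I would set $\mu \coloneqq (\lambda-a)/(Dd_1)$, so that $\lambda = Dd_1\mu + a$; then the hypothesis $\lambda_1<\lambda$ is exactly $\mu_1<\mu$, which is the condition required by \lem{exp_decay}, and $D(\mu-\mu_1) = (\lambda-\lambda_1)/d_1$. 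Writing $\tau^* \coloneqq \frac{\ln(3)d_1}{2j(\lambda-\lambda_1)}$, so that $j\tau^* = \frac{\ln(3)}{2D(\mu-\mu_1)}$ is precisely the threshold appearing in \lem{exp_decay}, that lemma gives
\[
\|e^{jrD\Delta_h}\|_\infty \le 1 \quad (0<r<\tau^*), \qquad \|e^{jrD\Delta_h}\|_\infty \le e^{jrDd_1\mu} \quad (r\ge\tau^*).
\]
Combining with the commutation identity, the integrand $e^{jra}\|e^{jrD\Delta_h}\|_\infty$ is bounded by $e^{jra}$ on $(0,\tau^*)$ and by $e^{jr(Dd_1\mu+a)} = e^{jr\lambda}$ on $[\tau^*,\infty)$.

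Then I would substitute $r = t-s$, which maps $[0,t]$ bijectively onto $[0,t]$, to rewrite the quantity of interest as $\int_0^t e^{jra}\|e^{jrD\Delta_h}\|_\infty\,dr$, split this integral at $\min\{t,\tau^*\}$, and bound the small-$r$ part by $\int_0^{\tau^*}e^{jra}\,dr$ and the large-$r$ part by $\int_{\tau^*}^\infty e^{jr\lambda}\,dr$. Both enlargements of the domain are legitimate because the integrands are nonnegative, and the tail integral converges since $\lambda<0$, with value $e^{j\tau^*\lambda}/(j|\lambda|) \le 1/(j|\lambda|)$. It remains to evaluate $\int_0^{\tau^*}e^{jra}\,dr$: for $a\neq 0$ it equals $\frac{1}{ja}\bigl(e^{j\tau^* a}-1\bigr) = \frac{1}{ja}\bigl(e^{\frac{\ln(3)d_1}{2(\lambda-\lambda_1)}a}-1\bigr)$, a positive quantity whatever the sign of $a$; for $a=0$ it equals $\tau^* = \frac{\ln(3)d_1}{2j(\lambda-\lambda_1)}$. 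Adding the tail bound yields the two cases in the statement.

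I do not expect a genuine obstacle: the substantive content is entirely contained in \lem{exp_decay}, and what remains is a one-line change of variables plus an elementary integration. The only points needing a little care are the regime $t<\tau^*$, where the large-$r$ piece is vacuous and one simply drops the tail term, using $\int_0^t e^{jra}\,dr \le \int_0^{\tau^*}e^{jra}\,dr$ since the integrand is positive, and checking that $\frac{1}{ja}(e^{j\tau^* a}-1)$ indeed dominates $\int_0^{\min\{t,\tau^*\}}e^{jra}\,dr$ for both $a>0$ and $a<0$, which is an immediate monotonicity argument.
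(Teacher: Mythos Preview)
Your proposal is correct and follows essentially the same route as the paper: define $\mu=(\lambda-a)/(Dd_1)$, apply \lem{exp_decay} to get the two-regime pointwise bound on $\|e^{jr(D\Delta_h+a)}\|_\infty$, substitute $r=t-s$, split the integral at the threshold $\tau^*$, and bound the tail using $\lambda<0$. The only cosmetic difference is that you make the commutation $e^{jr(D\Delta_h+a)}=e^{jra}e^{jrD\Delta_h}$ explicit before invoking \lem{exp_decay}, whereas the paper absorbs this step into a direct statement of the two-regime bound for $\|e^{jr(D\Delta_h+a)}\|_\infty$; the substance is identical.
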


\begin{proof}
Given any $\lambda$, define $\mu:=\frac{\lambda-a}{D d_1}$. Then $\lambda> \lambda_1$ implies that $\mu>\mu_1$.
According to \lem{exp_decay},  we have
\begin{equation}
       \|e^{jt (D \Delta_h + a)}\|_{\infty} \leq 
        \begin{cases}
        e^{jta}  , &  0 < t< \frac{\ln(3)}{2jD(\mu-\mu_1)}, \\
        e^{jt \lambda} = e^{jt (D d_1 \mu + a)} , &  t \geq  \frac{\ln(3)}{2jD(\mu-\mu_1)}. 
        \end{cases}
\end{equation}
We now integrate the upper bound of $\|e^{jt (D \Delta_h + a)}\|_{\infty}$. When $a>0$ or $a<0$, we have such an estimate
\begin{equation}
        \int_0^t \|e^{j(t-s) (D \Delta_h + a)}\|_{\infty}  \,d s \leq 
        \begin{cases}
        \frac{1}{ja} (e^{jta}-1) , &  0 < t< \frac{\ln(3)}{2jD(\mu-\mu_1)}, \\
        \frac{1}{ja} (e^{\frac{\ln(3)}{2D(\mu-\mu_1)}a}-1) + \frac{1}{j|\lambda|} e^{\frac{\ln(3)}{2D(\mu-\mu_1)} \lambda}, &  t \geq  \frac{\ln(3)}{2jD(\mu-\mu_1)}.
        \end{cases}
\end{equation}
When $a=0$, the integration is reduced to
\begin{equation}
        \int_0^t \|e^{j(t-s) (D \Delta_h + a)}\|_{\infty}  \,d s \leq 
        \begin{cases}
        t, &  0 < t< \frac{\ln(3)}{2jD(\mu-\mu_1)}, \\
        \frac{\ln(3)}{2jD(\mu-\mu_1)} + \frac{1}{j|\lambda|} e^{\frac{\ln(3)}{2D(\mu-\mu_1)} \lambda}, &  t \geq  \frac{\ln(3)}{2jD(\mu-\mu_1)}.
        \end{cases}
\end{equation}
Since $\int_0^t \|e^{j(t-s) (D \Delta_h + a)}\|_{\infty}$ is non-decreasing, we consider the upper bound as $t \geq  \frac{\ln(3)}{2jD(\mu-\mu_1)}$ to have
\begin{equation}
\begin{aligned}
    \int_0^t \|e^{j(t-s) (D \Delta_h + a)}\|_{\infty}  \,d s \le \begin{cases}
        \frac{1}{ja} (e^{\frac{\ln(3)}{2D(\mu-\mu_1)}a}-1) + \frac{1}{j|\lambda|} e^{\frac{\ln(3)}{2D(\mu-\mu_1)} \lambda}, & a \neq 0, \\
        \frac{\ln(3)}{2jD(\mu-\mu_1)} + \frac{1}{j|\lambda|} e^{\frac{\ln(3)}{2D(\mu-\mu_1)} \lambda}, &  a = 0.
        \end{cases}
\end{aligned}
\end{equation}
Then, noticing $e^{\frac{\ln(3)}{2D(\mu-\mu_1)} \lambda} < 1$, 
\begin{equation}
\begin{aligned}
    \int_0^t \|e^{j(t-s) (D \Delta_h + a)}\|_{\infty}  \,d s \le \begin{cases}
        \frac{1}{ja} (e^{\frac{\ln(3)}{2D(\mu-\mu_1)}a}-1) + \frac{1}{j|\lambda|}, & a \neq 0, \\
        \frac{\ln(3)}{2jD(\mu-\mu_1)} + \frac{1}{j|\lambda|}, &  a = 0.
        \end{cases}
\end{aligned}
\end{equation}
Exploiting the definition of $\mu$, we get 
\begin{equation}
\begin{aligned}
    \int_0^t \|e^{j(t-s) (D \Delta_h + a)}\|_{\infty}  \,d s \le \begin{cases}
        \frac{1}{ja} (e^{\frac{\ln(3)d_1}{2(\lambda-\lambda_1)}a}-1) + \frac{1}{j|\lambda|}, & a \neq 0, \\
        \frac{\ln(3)d_1}{2j(\lambda-\lambda_1)} + \frac{1}{j|\lambda|}, &  a = 0. 
        \end{cases}
\end{aligned}
\end{equation}
\end{proof}

\section{A naive estimate of $\norm{U(t)}$}
\label{sec:l2_estimate}
In this section, we present two estimates of $\norm{U(t)}$. Note that this estimate is by no means sharp, a tighter bound would depend on the specific forms of the nonlinearity and initial data.

\begin{lemma} \label{lem:l2_estimate}
Let $U(t)$ be a solution to \eqref{eq:NODE} that satisfies the assumptions of Lemma \ref{lem:apriori}, i.e., the initial condition $\norm{U(0)}_{\infty}\leq \gamma$. Then one has
\begin{equation}
    \norm{U(t)} \leq e^{\left(\lambda_1+\abs{b} \gamma^{M-1}\right) t}\norm{U(0)},
\end{equation}
where $\lambda_1$ is the largest eigenvalue of $D\Delta_h+a I$.
\end{lemma}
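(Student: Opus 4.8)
The plan is to run a standard energy (Gr\"onwall) estimate on $\norm{U(t)}^2$ and to close it using the $\ell_\infty$ maximum principle of \lem{apriori} rather than an $\ell_2$ bound, which would be circular. Since $U(t)$ solves the polynomial ODE \eq{RDODE}, it is smooth in $t$, so $t\mapsto\norm{U(t)}^2$ is differentiable and
\[
\frac{\d}{\d t}\norm{U(t)}^2 = 2\langle U(t),\dot U(t)\rangle = 2\langle U,(D\Delta_h+aI)U\rangle + 2b\,\langle U,U^{.M}\rangle,
\]
where I use that, for the reaction--diffusion system, $F_1 = D\Delta_h+aI$ and $F_MU^{\otimes M} = bU^{.M}$ with $U^{.M}$ the entrywise $M$-th power.

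First I would bound the linear term: the mixed Dirichlet/periodic discrete Laplacian $\Delta_h$ associated with \eq{boundary} is real symmetric, so $F_1 = D\Delta_h+aI$ is symmetric with largest eigenvalue $\lambda_1$, and hence $\langle U,F_1U\rangle \le \lambda_1\norm{U}^2$. Next I would bound the nonlinear term using the entrywise structure: $\langle U,U^{.M}\rangle = \sum_j U_j^{M+1}$, so
\[
\bigl|\,b\,\langle U,U^{.M}\rangle\,\bigr| \le |b|\sum_j |U_j|^{M-1}|U_j|^2 \le |b|\,\norm{U}_\infty^{M-1}\norm{U}^2.
\]
By \lem{apriori}, the hypothesis $\norm{U(0)}_\infty \le \gamma$ propagates in time, giving $\norm{U(t)}_\infty \le \gamma$ for all $t\ge0$, so the right-hand side is at most $|b|\gamma^{M-1}\norm{U}^2$.

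Combining the two bounds yields $\frac{\d}{\d t}\norm{U(t)}^2 \le 2\bigl(\lambda_1+|b|\gamma^{M-1}\bigr)\norm{U(t)}^2$, and Gr\"onwall's inequality then gives $\norm{U(t)}^2 \le e^{2(\lambda_1+|b|\gamma^{M-1})t}\norm{U(0)}^2$; taking square roots is the claim. Working with $\norm{U}^2$ rather than $\norm{U}$ also sidesteps any differentiability concern at zeros of $U$ (and if $U$ vanishes at any time, uniqueness for \eq{RDODE} forces $U\equiv0$ and the estimate is trivial).

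There is essentially no hard step here; the only real content is recognizing that pairing $U$ against the entrywise power $bU^{.M}$ produces $\sum_j U_j^{M+1}$, which is controlled by $\norm{U}_\infty^{M-1}$ times the energy, so that the $\ell_\infty$ a priori bound of \lem{apriori}, and not a self-referential $\ell_2$ bound, suffices to complete the estimate. The point I would check most carefully is the symmetry of $F_1$, which is what legitimizes $\langle U,F_1U\rangle \le \lambda_1\norm{U}^2$ with $\lambda_1$ the \emph{largest} eigenvalue; this holds precisely because of the structure \eq{boundary} of the discrete Laplacian with the prescribed mixed boundary conditions.
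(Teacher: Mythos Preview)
Your proof is correct and follows essentially the same approach as the paper: differentiate $\norm{U(t)}^2$, bound the linear quadratic form via the largest eigenvalue $\lambda_1$ of the symmetric operator $D\Delta_h+aI$, control the nonlinear pairing $b\langle U,U^{.M}\rangle$ by $|b|\norm{U}_\infty^{M-1}\norm{U}^2$, invoke the maximum principle of \lem{apriori} to replace $\norm{U}_\infty$ by $\gamma$, and conclude by Gr\"onwall. Your presentation is in fact slightly more explicit than the paper's about the symmetry of $F_1$ and the differentiability of $\norm{U}^2$, but the argument is the same.
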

\begin{proof}
Consider the derivative of $\norm{U(t)}$,
\begin{equation}
    \begin{split}
        \frac{\rd \norm{U(t)}^2}{\rd t}&=2U^\dagger (D\Delta_h + a I ) U+ bU^\dagger U^{.M}+ b \left(U^{.M}\right)^\dagger U\\
        &\leq 2\lambda_1\norm{U}^2 +2\abs{b} \sum_{j\in \range{n_d}} U_j(t)^{M+1}\\
        &\leq  2\lambda_1\norm{U}^2 +2\abs{b} \norm{U(t)}_{\infty}^{M-1} \norm{U(t)}^2_2\\
        &\leq  2\lambda_1\norm{U}^2 +2\abs{b} \gamma^{M-1} \norm{U(t)}^2.
    \end{split}
\end{equation}
In the last inequality, we use the maximum principle described in  Lemma \ref{lem:apriori}, that is, $\norm{U(t)}_{\infty}\leq \gamma$ for any $t>0$. Hence, we obtain 
\begin{equation*}
    \norm{U(t)}^2 \leq e^{2\left(\lambda_1+\abs{b} \gamma^{M-1}\right) t}\norm{U(0)}^2.
\end{equation*}
\end{proof}

\begin{remark}
Given $U(t)$ satisfying the assumption of \ref{lem:l2_estimate}, we have the following estimate
\begin{equation}
    \norm{U(t)} \leq \sqrt{n^d} \norm{U(t)}_{\infty}\leq \sqrt{n^d} \gamma,
\end{equation}
which follows the maximal principle.
\end{remark}

\begin{remark}
When $R_D<1$, we notice that $C(\lambda)\geq 1$. So $\frac{|b|}{|\lambda_1|}\gamma^{M-1} \leq R_D<1$, which implies that $\lambda_1 +|b|\gamma^{M-1}<0$. Therefore, we know that the 2-norm of the solution decays.
\end{remark}

\begin{lemma} \label{lem:l2_estimate2}
Let $U(t)$ be a solution to \eqref{eq:NODE}. Suppose that the largest eigenvalue of $D\Delta_h+a I$, $\lambda_1$ is negative. If $\abs{b}\norm{U(0)}_2^{M-1}+\lambda_1\geq 0$, then one has
\begin{equation}
    \norm{U(t)} \leq \norm{U(0)},\quad \forall~t\geq 0.
\end{equation}
\end{lemma}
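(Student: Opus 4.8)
The plan is to reuse the energy computation from the proof of \lem{l2_estimate}, but without invoking the maximum principle, and then to close the resulting differential inequality by a continuation (bootstrap) argument. Differentiating $\norm{U(t)}^2$ along \eqref{eq:NODE} (here $F_1 = D\Delta_h + aI$ is symmetric with largest eigenvalue $\lambda_1<0$, and $F_M U^{\otimes M} = b\,U^{.M}$) gives, exactly as in \lem{l2_estimate},
\[
\frac{\d}{\d t}\norm{U(t)}^2 = 2\,U^\top(D\Delta_h + aI)\,U + 2b\sum_{j\in\range{n_d}} U_j^{M+1} \;\le\; 2\lambda_1\norm{U}^2 + 2\abs{b}\sum_{j\in\range{n_d}} \abs{U_j}^{M+1}.
\]
The one change is that, in place of the maximum-principle bound $\sum_j\abs{U_j}^{M+1}\le\norm{U}_\infty^{M-1}\norm{U}^2\le\gamma^{M-1}\norm{U}^2$, I would use only $\norm{\cdot}_\infty\le\norm{\cdot}_2$ to obtain the closed, self-referential inequality
\[
\frac{\d}{\d t}\norm{U(t)}^2 \;\le\; 2\bigl(\lambda_1 + \abs{b}\,\norm{U(t)}^{M-1}\bigr)\norm{U(t)}^2 .
\]

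The bootstrap then proceeds as follows. Let $\tau \coloneqq \sup\{t\ge 0 : \norm{U(s)}\le\norm{U(0)}\ \text{for all } s\in[0,t]\}$, which is positive by continuity of $U$. For $s\in[0,\tau)$ we have $\norm{U(s)}\le\norm{U(0)}$, and since $r\mapsto r^{M-1}$ is nondecreasing on $[0,\infty)$ (recall $M\ge 2$), the assumption on the initial $\ell_2$ norm ensures
\[
\lambda_1 + \abs{b}\,\norm{U(s)}^{M-1} \;\le\; \lambda_1 + \abs{b}\,\norm{U(0)}^{M-1} \;\le\; 0, \qquad s\in[0,\tau).
\]
Hence the right-hand side of the differential inequality is nonpositive on $[0,\tau)$, so $t\mapsto\norm{U(t)}^2$ is nonincreasing there; in particular it stays bounded, the solution does not blow up and extends past $\tau$, and by continuity $\norm{U(\tau)}\le\norm{U(0)}$. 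If $\tau<\infty$, this contradicts the maximality of $\tau$, since the inequality $\norm{U(s)}\le\norm{U(0)}$ would then persist on a slightly larger interval. Therefore $\tau=\infty$, i.e. $\norm{U(t)}\le\norm{U(0)}$ for all $t\ge 0$.

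The computation of $\tfrac{\d}{\d t}\norm{U}^2$, the bound $U^\top(D\Delta_h+aI)U\le\lambda_1\norm{U}^2$, and $\sum_j\abs{U_j}^{M+1}\le\norm{U}_\infty^{M-1}\norm{U}^2\le\norm{U}^{M+1}$ are elementary and already appear (up to the last step) in the proof of \lem{l2_estimate}. The part I expect to require the most care is the bootstrap: one has to observe that the growth coefficient $\lambda_1+\abs{b}\norm{U(t)}^{M-1}$ is automatically $\le 0$ on the entire interval where $\norm{U}$ has not yet exceeded its initial value — so the estimate is genuinely self-improving — and one should treat the endpoint $\tau$ by a continuity/continuation argument rather than by differentiating at a putative first crossing time. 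Beyond that I do not anticipate any obstacle; in particular, unlike \lem{l2_estimate}, this argument needs neither the maximum principle nor the a priori $\ell_\infty$ bound of \lem{apriori}.
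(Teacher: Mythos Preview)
Your bootstrap step asserts that the hypothesis gives $\lambda_1 + |b|\,\|U(0)\|^{M-1} \le 0$, but the lemma as stated assumes the \emph{opposite} inequality $|b|\,\|U(0)\|^{M-1} + \lambda_1 \ge 0$. Under that hypothesis the growth coefficient in your estimate is only bounded above by a nonnegative number, so $\tfrac{d}{dt}\|U\|^2\le 0$ does not follow and the continuation does not close. In fact the statement as printed is false: for the one-dimensional Dirichlet discretization with $n=2$, $D=1$, $a=0$, $b=1$, $M=2$ one has $\lambda_1=-9$; taking the symmetric datum $U(0)=(10,10)$ gives $\|U(0)\|=10\sqrt2$ so the hypothesis $10\sqrt2-9>0$ holds, yet by symmetry each component satisfies $v'=v(v-9)$ with $v(0)=10$ and hence $\|U(t)\|=\sqrt2\,v(t)$ strictly increases and blows up in finite time. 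The intended hypothesis is evidently $|b|\,\|U(0)\|^{M-1}+\lambda_1\le 0$ (initial norm at or below the equilibrium of the scalar comparison ODE), and your argument is a correct proof of that corrected version.

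For that corrected statement the paper proceeds differently: after reaching the same scalar inequality $y'\le\lambda_1 y+|b|y^M$ for $y=\|U\|$, it solves the comparison Bernoulli equation explicitly via the substitution $z=1/\bigl((M-1)y^{M-1}\bigr)$ and an integrating factor, producing a closed-form upper bound for $\|U(t)\|$. The paper's write-up contains the matching sign slip --- after dividing through by $e^{\lambda_1(M-1)t}$ the exponent should change sign --- and with that corrected the resulting bound is informative precisely when $\tfrac{1}{\|U(0)\|^{M-1}}+\tfrac{|b|}{\lambda_1}\ge 0$, i.e.\ again $|b|\,\|U(0)\|^{M-1}+\lambda_1\le 0$. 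So both routes establish the same corrected lemma; the explicit Bernoulli solution gives a quantitative decay bound as a byproduct, while your bootstrap is shorter if one only wants $\|U(t)\|\le\|U(0)\|$.
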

\begin{proof}
Similarly, we consider the derivative of $\norm{U(t)}$,
\begin{equation}
    \begin{split}
        2 \norm{U(t)} \frac{\rd \norm{U(t)}}{\rd t}&=\frac{\rd \norm{U(t)}^2}{\rd t}\\
        &=2U^\dagger (D\Delta_h + a I ) U+ bU^\dagger U^{.M}+ b \left(U^{.M}\right)^\dagger U\\
        &\leq 2\lambda_1\norm{U}^2 +2\abs{b} \norm{U(t)}_{M+1}^{M+1}\\
        &\leq  2\lambda_1\norm{U}^2 +2\abs{b} \norm{U(t)}^{M+1}.
    \end{split}
\end{equation}
In the last inequality, we use the inequality $\norm{u}_{M+1}\leq \norm{u}_2$ for any $M\geq 1$. For simplicity, we denote $\norm{U(t)}_2$ as $y$ and now we study the following equation instead,
\begin{equation}
    \frac{\rd y}{\rd t}\leq \lambda_1 y + \abs{b} y^M.
\end{equation}
This is a Bernoulli differential equation  and we only consider its positive solution. Define $z:=\frac{1}{(M-1)y^{M-1}}$ and it fulfils
\begin{equation}
\begin{split}
    \frac{\rd z}{\rd t}&=\frac{\rd }{\rd t}\left(\frac{1}{(M-1)y^{M-1}}\right)=-\frac{1}{y^M} \frac{\rd y}{\rd t} \\
    & \geq -\frac{\lambda_1}{y^{M-1}} -\abs{b} = -\lambda_1 (M-1) z-\abs{b}.
\end{split}
\end{equation}
By using the integrating factor, one gets
\begin{equation}\label{eq:z_eq}
    e^{\lambda_1(M-1) t} z \geq z(0)+\frac{\abs{b}}{\lambda_1 (M-1)} -\frac{\abs{b}}{\lambda_1 (M-1)} e^{\lambda_1(M-1) t}.
\end{equation}
If $\abs{b}\norm{U(0)}^{M-1}+\lambda_1>0$, then
\begin{equation}
    \frac{1}{(M-1)y^{M-1}} = z\geq \left(z(0)+\frac{\abs{b}}{\lambda_1 (M-1)} \right)e^{\lambda_1(M-1) t} - \frac{\abs{b}}{\lambda_1 (M-1)}.
\end{equation}
In terms of $\norm{U(t)}$, one has
\begin{equation}
    \norm{U(t)} =y \leq \left[ \left(\frac{1}{\norm{U(0)}_2^{M-1}} +\frac{\abs{b}}{\lambda_1}\right)e^{\lambda_1(M-1) t}-\frac{\abs{b}}{\lambda_1}\right]^{-\frac{1}{M-1}}.
\end{equation}
The coefficient $\frac{1}{\norm{U(0)}^{M-1}} +\frac{\abs{b}}{\lambda_1}$ is negative and by the monotonic decreasing of $e^{\lambda_1 (M-1)t}$, one obtains the desired results.

If $\abs{b}\norm{U(0)}^{M-1}+\lambda_1=0$, \eqref{eq:z_eq} implies $z \geq -\frac{\abs{b}}{\lambda_1 (M-1)}$. Hence $\norm{U(t)}\leq \left(\frac{-\lambda_1}{\abs{b}}\right)^{\frac{1}{M-1}}=\norm{U(0)}$, which completes the proof.
\end{proof}

\section{Proof of \texorpdfstring{\thm{Carleman_0}}{Theorem \ref{thm:Carleman_0}}}\label{sec:proof_Carleman_0}

\begin{proof}
The truncation error $\eta_j$ satisfies the equation
\begin{equation}
\label{eqn:eta_0_A}
\eta'_j(t) = A^j_j \eta_j(t) + A^j_{j+M-1}\left( y_{j+M-1}(t)-\hat{y}_{j+M-1}(t)\delta_{j+M-1 \le N} \right), \qquad 1 \le j \le N.
\end{equation}
Applying the variation of constant formula to \eqref{eqn:eta_0_A}, one has
\begin{align}
\eta_j(t) = \int_0^t e^{A^j_j (t-s)} A^j_{j+M-1} y_{j+M-1}(s) \,d s, \quad  N-M+2 \le j \le N.
\end{align}
Note that it follows from \lem{apriori} that
\[
\norm{y_{j+M-1}(s)} = \|U^{\otimes(j+M-1)}(s)\|
\le \|U(s)\|^{j+M-1}
\le \max_t\|U(t)\|^{j+M-1}.
\]
Therefore, according to \lem{ineq_Ajk}, we have for $N-M+2 \le j \le N$,
\begin{equation}
\begin{aligned}
    \norm{\eta_j(t)}
    & \le  \int_0^t \norm{e^{A^j_j (t-s)}} \|A^j_{j+M-1}\| \|y_{j+M-1}(s)\| \,d s  \\
    & \le  \int_0^t e^{j \lambda_1 (t-s)} j|b| \norm{y_{j+M-1}(s)}  \,d s  \\
    & \le  \max_t\|U(t)\|^{j+M-1}\int_0^t e^{j \lambda_1 (t-s)} j|b| \,d s \\
    &= \max_t\|U(t)\|^{j+M-1}\frac{|b|}{|\lambda_1|}\left(1 - e^{-j|\lambda_1| t} \right) \\
    &\le \max_t\|U(t)\|^{j+M-1}\frac{|b|}{|\lambda_1|}.
\end{aligned}
\end{equation}
For $N-2M+3 \le j \le N-M+1$,
\begin{equation}
\begin{aligned}
    \norm{\eta_j(t)} & \le  \int_0^t e^{j \lambda_1 (t-s)} j|b| \norm{\eta_{j+M-1}(s)} \,d s  \\
    & \le \max_t\|U(t)\|^{j+2M-2}\frac{|b|}{|\lambda_1|} \int_0^t e^{j \lambda_1 (t-s)} j|b|  \,d s \\
    & \le \max_t\|U(t)\|^{j+2M-2}(\frac{|b|}{|\lambda_1|})^2 \left(1 - e^{-j|\lambda_1| t} \right) \\
    & \le \max_t\|U(t)\|^{j+2M-2}(\frac{|b|}{|\lambda_1|})^2.
\end{aligned}
\end{equation}
One can continue by mathematical induction for every group of $M-1$ terms and arrive at
\begin{equation}
\begin{aligned}
    \norm{\eta_j(t)} & \le  \int_0^t e^{j \lambda_1 (t-s)} j|b| \norm{\eta_{j+M-1}(s)} \,d s \\
  & \le \max_t\|U(t)\|^{j+(M-1)\lceil \frac{N-j+1}{M-1} \rceil}(\frac{|b|}{|\lambda_1|})^{\lceil \frac{N-j+1}{M-1} \rceil} \left(1 - e^{j\lambda_1 t} \right) \\
  & \le \max_t\|U(t)\|^{j}\overline{R}^{\lceil \frac{N-j+1}{M-1} \rceil} \left(1 - e^{j\lambda_1 t} \right) \\
  &\le \max_t\|U(t)\|^{j}\overline{R}^{\lceil \frac{N-j+1}{M-1} \rceil},
\end{aligned}
\end{equation}
where we use $\overline{R}=\frac{\|F_M\|}{|\lambda_1|}\max_t\|U(t)\|^{j+M-1}$ as in \eq{A}.

In practice, we set $N-1$ as some integer multiples of $M-1$, such that 
\begin{equation}
\norm{\eta_1}_{\infty} \le \max_t\|U(t)\| \overline{R}^{ \frac{N}{M-1} } \left(1 - e^{\lambda_1 t} \right). 
\end{equation}
Finally, if $R\leq1$, according to \lem{l2_estimate2}, we have $\|U_{\mathrm{in}}\| < \|U(t)\|$ for $t>0$, and then $R = \overline{R}$. This completes the proof of the desired result.
\end{proof}
\section{Estimate of the preconstant}\label{sec:preconstant}

According to the definition of $R_D$, the value of $R_D$ depends on the choice of $\lambda$. In order to obtain a sharper estimate of the approximation error, we hope to find the optimal value of $\lambda$ such that it minimizes $C(\lambda)$. When $a=0$, this optimization problem is easy to solve. For any $\lambda_1<\lambda<0$, one has 
\begin{equation}
    \frac{\ln(3)d_1}{2(\lambda-\lambda_1)} |\lambda_1| + \frac{|\lambda_1|}{|\lambda|}=\left(\frac{\ln(3)d_1}{2(\lambda-\lambda_1)} + \frac{1}{-\lambda}\right)\left(\lambda-\lambda_1-\lambda\right)\geq \left(\sqrt{\frac{\ln(3)}{2}d_1}+1\right)^2,
\end{equation}
and the equality holds when $\lambda=\frac{\lambda_1}{\sqrt{\frac{\ln(3)}{2}d_1}+1}$. The minimum of $C(\lambda)$ is $\left(\sqrt{\frac{\ln(3)}{2}d_1}+1\right)^2$, which is $O(d)$,  and the corresponding value of $R_D$ is
\begin{equation}
    \frac{|b|}{|\lambda_1|}\gamma^{M-1}\left(\sqrt{\frac{\ln(3)}{2}d_1}+1\right)^2.
\end{equation}

When $a\ne 0$, the optimal value of $\lambda$ can be obtained by solving the following equation
\begin{equation}
    \sqrt{\frac{\ln(3)}{2} d_1} e^{\frac{\ln(3)d_1}{4(\lambda-\lambda_1)}a}\frac{1}{\lambda-\lambda_1}+ \frac{1}{\lambda}=0.
\end{equation}
In real applications, we suggest tuning the parameter $\frac{\lambda}{\lambda_1}$ to obtain a value of $R_D$ around the optimum, instead of directly solving the above equation. Besides, we have the following theoretical results regarding $\min_{\lambda_1<\lambda<0} C(\lambda)$.

\begin{lemma}\label{lem:upperbound}
    Suppose that $\lambda^*:=-\pi^2Dd_1+a<0$ and $a\ne 0$, there exists an upper bound of 
$\min_{\lambda_1<\lambda<0} C(\lambda)$, where $\lambda_1=Dd_1 \mu_1 +a$ and $\mu_1=-4(n+1)^2\sin^2\left(\frac{\pi}{2n+2}\right)$. The upper bound is independent of $n$.
\end{lemma}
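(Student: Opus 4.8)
The plan is to exhibit an explicit choice of $\lambda$ in the admissible range $\lambda_1 < \lambda < 0$ for which $C(\lambda)$ is bounded by a quantity that does not depend on $n$, and then note that $\min_{\lambda_1<\lambda<0} C(\lambda)$ is no larger. The natural candidate is to pick $\lambda$ to be a fixed fraction of $\lambda^* = -\pi^2 D d_1 + a$, or more robustly to set $\lambda = \lambda^*/2$, say; the key point is that $\lambda^*$ is the ``continuum'' largest eigenvalue of $D\Delta_h + aI$ and satisfies $\lambda_1 \le \lambda^* < 0$ for all $n$ (since $\mu_1 = -4(n+1)^2\sin^2(\frac{\pi}{2n+2}) \le -\pi^2$ by the inequality $\sin x \ge \frac{2}{\pi}x$ on $[0,\pi/2]$, while $\mu_1 \to -\pi^2$ as $n\to\infty$). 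This guarantees $\lambda_1 < \lambda < 0$ is a valid choice uniformly in $n$.

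First I would record the elementary bound $\lambda_1 \le \lambda^* = -\pi^2 D d_1 + a < 0$, valid for all $n$, and note also that $|\lambda_1|$ is increasing in $n$ with $|\lambda_1| \ge |\lambda^*| = \pi^2 D d_1 - a > 0$. Second, with the concrete choice $\lambda = \lambda^*/2 \in (\lambda_1, 0)$, I would substitute into the two branches of the formula \eq{E} for $C(\lambda)$. In both the $a\neq 0$ and (if one wanted) $a = 0$ cases, $C(\lambda)$ is a product/sum of terms of the form $\frac{|\lambda_1|}{|\lambda|}$, $\frac{|\lambda_1|}{a}(e^{\text{(something)}\,a}-1)$, and the exponent $\frac{\ln(3)d_1}{2(\lambda-\lambda_1)}$. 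Since $|\lambda-\lambda_1| = \lambda_1 - \lambda \ge \lambda_1 - \lambda^*/2$, which grows with $n$, the exponent $\frac{\ln(3)d_1}{2(\lambda-\lambda_1)}$ is bounded above by its value at the smallest admissible $n$, i.e. it is $\Or(1)$ in $n$ (in fact it tends to $0$ as $n\to\infty$). Hence the factor $\frac{|\lambda_1|}{a}(e^{\frac{\ln(3)d_1}{2(\lambda-\lambda_1)}a}-1)$ is controlled: writing $e^x - 1 \le x e^{|x|}$ and using that $\frac{|\lambda_1|}{|a|}\cdot\frac{\ln(3)d_1}{2(\lambda-\lambda_1)}|a| = \frac{\ln(3)d_1 |\lambda_1|}{2(\lambda-\lambda_1)}$, one gets a bound in terms of $\frac{|\lambda_1|}{|\lambda-\lambda_1|}$, which is itself uniformly bounded in $n$ because $\lambda = \lambda^*/2$ forces $|\lambda - \lambda_1| = |\lambda_1| - \tfrac12|\lambda^*| \ge \tfrac12|\lambda_1|$, so $\frac{|\lambda_1|}{|\lambda-\lambda_1|} \le 2$. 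Likewise $\frac{|\lambda_1|}{|\lambda|} = \frac{2|\lambda_1|}{|\lambda^*|}$, which is \emph{not} bounded in $n$ — so this naive choice fails, and one must instead scale $\lambda$ with $\lambda_1$, e.g. $\lambda = \lambda_1/2$ (valid since $\lambda_1 < 0$ implies $\lambda_1 < \lambda_1/2 < 0$), for which $\frac{|\lambda_1|}{|\lambda|} = 2$ and $\lambda - \lambda_1 = -\lambda_1/2$ gives exponent $\frac{\ln(3)d_1}{|\lambda_1|} \le \frac{\ln(3)d_1}{|\lambda^*|}$, uniformly bounded.

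So the cleaner route is: take $\lambda = \lambda_1/2$. Then $\frac{|\lambda_1|}{|\lambda|} = 2$, and $\lambda - \lambda_1 = |\lambda_1|/2 \ge |\lambda^*|/2$, hence $\frac{\ln(3)d_1}{2(\lambda-\lambda_1)} = \frac{\ln(3)d_1}{|\lambda_1|} \le \frac{\ln(3)d_1}{\pi^2 D d_1 - a} =: E_0$, a constant independent of $n$. Then for $a \neq 0$,
\begin{equation}
C(\lambda) = \frac{|\lambda_1|}{a}\bigl(e^{E_1 a}-1\bigr) + 2, \qquad E_1 := \frac{\ln(3)d_1}{|\lambda_1|} \le E_0,
\end{equation}
and using $\bigl|\tfrac{1}{a}(e^{E_1 a}-1)\bigr| \le E_1 e^{|E_1 a|} \le E_1 e^{E_0|a|}$ together with $E_1|\lambda_1| = \ln(3) d_1$, we obtain $\frac{|\lambda_1|}{a}(e^{E_1 a}-1) \le \ln(3) d_1 \, e^{E_0|a|}$, so $C(\lambda) \le \ln(3)d_1 e^{E_0|a|} + 2$, independent of $n$. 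Therefore $\min_{\lambda_1<\lambda<0} C(\lambda) \le \ln(3)d_1 e^{E_0|a|} + 2$, which is the desired $n$-independent upper bound.

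\textbf{Main obstacle.} The delicate point is the choice of $\lambda$: one must scale it with $\lambda_1$ (not with the fixed continuum value $\lambda^*$) so that the ratio $|\lambda_1|/|\lambda|$ stays bounded as $n\to\infty$, while simultaneously keeping $\lambda$ bounded away from $0$ so that the exponential factor $e^{\frac{\ln(3)d_1}{2(\lambda-\lambda_1)}a}$ does not blow up; verifying that a single choice (here $\lambda = \lambda_1/2$) achieves both, and that $\lambda^* < 0$ guarantees the uniform lower bound $|\lambda_1| \ge |\lambda^*| > 0$ needed to control $E_1$, is the crux. The rest is a routine substitution into \eq{E} plus the elementary inequality $e^x - 1 \le x e^{|x|}$.
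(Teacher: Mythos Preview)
Your overall strategy---choose $\lambda=\lambda_1/2$, substitute into \eq{E}, and bound the resulting expression by something independent of $n$---is exactly what the paper does. However, your proof contains a sign error at the most critical step, and this error invalidates the uniform bound you claim.

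You assert that $\mu_1 = -4(n+1)^2\sin^2\bigl(\tfrac{\pi}{2n+2}\bigr) \le -\pi^2$, hence $\lambda_1 \le \lambda^*$ and $|\lambda_1| \ge |\lambda^*|$. This is false. The inequality $\sin x < x$ gives $\sin^2\bigl(\tfrac{\pi}{2n+2}\bigr) < \tfrac{\pi^2}{(2n+2)^2}$, hence $\mu_1 > -\pi^2$ for every finite $n$ (for instance, $\mu_1=-8$ at $n=1$), with $\mu_1\to -\pi^2$ from above. Consequently $\lambda_1 > \lambda^*$ and $|\lambda_1| < |\lambda^*|$, the reverse of what you use. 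Your Jordan-type inequality $\sin x \ge \tfrac{2}{\pi}x$ only yields $\mu_1 \le -4$, not $\mu_1 \le -\pi^2$. Because the inequality runs the wrong way, your key bound $E_1 = \tfrac{\ln(3)d_1}{|\lambda_1|} \le E_0 = \tfrac{\ln(3)d_1}{|\lambda^*|}$ fails; in fact $E_1 \ge E_0$, and $E_1$ is largest for small $n$, where $|\lambda_1|$ can be much smaller than $|\lambda^*|$ (and, under the sole hypothesis $\lambda^*<0$, $\lambda_1$ need not even be negative for small $n$).

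The paper fixes exactly this gap by splitting into two regimes: for $n\ge n^*$ large enough that $|\lambda_1|\ge 0.9|\lambda^*|$, one gets an explicit $n$-free bound $C_1$ by the computation you outline; for the finitely many $n<n^*$, one simply takes $C_2=\max_{1\le n\le n^*} C(\lambda_1/2)$. Your argument is salvageable along these lines, but as written it relies on a false inequality and does not go through.
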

\begin{proof}
    The inequality $\sin(x)<x$ implies that $-\pi^2<\mu_1$ for any $n$. Besides, $\lim_{n\to \infty} \mu_1=-\pi^2$. Then there exists an positive integer $n^*$ such that for any $n\geq n^*$, $\mu_1 \leq -0.9\pi^2 -\frac{a}{Dd_1}$, which leads to $\lambda^*\leq\lambda_1\leq0.9 \lambda^* $. 
    
    When $n\geq n^*$, one has
    \begin{equation}
    \begin{split}
        \min_{\lambda_1<\lambda<0} C(\lambda)&\leq C\left(\frac{\lambda_1}{2}\right)=\frac{|\lambda_1|}{a} (e^{-\frac{\ln(3) d_1}{\lambda_1}a }-1) + 2\\
        &\leq \max_{\lambda^*\leq\lambda_1\leq0.9 \lambda^*} \frac{|\lambda_1|}{a} (e^{-\frac{\ln(3) d_1}{\lambda_1}a }-1) + 2\\
        &\leq \frac{|\lambda^*|}{a} \left(e^{\frac{\ln(3) d_1}{0.9|\lambda^*|}a }-1\right) + 2=:C_1. 
    \end{split}
    \end{equation}
    Note that $C\left(\frac{\lambda_1}{2}\right)$ is continuous over $[\lambda^*, 0.9\lambda^*]$, one gets that $C_1$ is finite. Let $C_2$ be the maximum of $C\left(\frac{\lambda_1}{2}\right)$ for $1\leq n\leq n^*$. Then $\max(C_1, C_2)$ is the upper bound we desired.

\end{proof}

\begin{remark}
From the proof of \lem{upperbound}, we know that when $a\ne 0$ and $n>n^*$, $C_1$ is an upper bound of $\min_{\lambda_1<\lambda<0} C(\lambda)$, where
\begin{equation}
    C_1=\frac{|\lambda^*|}{a} \left(e^{\frac{\ln(3) d_1}{0.9|\lambda^*|}a }-1\right) + 2=\frac{\pi^2Dd_1-a}{a} \left(e^{\frac{\ln(3) d_1}{0.9(\pi^2Dd_1-a)}a }-1\right) + 2= \mathcal{O}(d).
\end{equation}
Here, we notice that $\frac{\ln(3) d_1}{0.9(\pi^2Dd_1-a)}a$ turns out to be $o(1)$ for large $d$. Combining the discussion when $a=0$, we know that for $n$ large enough, $\min_{\lambda_1<\lambda<0} C(\lambda)$ is $\mathcal{O}(d)$ regardless of $a$.
\end{remark}

\section{An illustration on approximating derivatives using discrete Fourier transform}\label{sec:append_DFT}

Here we briefly explain the reason why discrete Fourier transform can be applied to compute derivatives of a function in classical computing. 
For simplicity, we use a $1$-dimensional example. 
Let $f(x)$ denote a smooth function defined on the interval $[0,1]$, and our goal is to transform the vector 
$$\vec{f} = \sum_{k=0}^{n-1} f(k/n)\ket{k}$$
to the vector
$$\vec{f'} = \sum_{k=0}^{n-1} f'(k/n)\ket{k}.$$

Let $\mathcal{F}$ denote the one-dimensional quantum Fourier transform operator, then the discrete Fourier transform $\mathcal{F}^{-1}$ acts on a vector $v = (v_0,\cdots, v_{n-1})$ and maps it to another vector according to the formula 
\begin{equation}
    (\mathcal{F}^{-1} v)_l = \frac{1}{\sqrt{n}}\sum_{k=0}^{n-1} v_k \omega_n^{-kl}
\end{equation}
where $\omega_n = e^{2\pi i/n}$. 
$\mathcal{F}^{-1} v$ can be interpreted as the set of discrete Fourier coefficients of the vector $v$. 
To see this, let the function $f$ allow the following complex Fourier series expansion 
\begin{equation}\label{eqn:app_F_series_1_append}
    f(x) \approx \sum_{m=-\theta}^{\theta} c_m e^{2\pi i mx} 
\end{equation}
for a positive integer $\theta$. 
The error of this approximation is exponentially small in terms of $\theta$ for any smooth function $f$. 
Then the Fourier transform of the vector $\vec{f} = (f(k/n))_0^{n-1}$ becomes
\begin{equation}
    \begin{split}
        (\mathcal{F}^{-1} \vec{f})_l & = \frac{1}{\sqrt{n}}\sum_{k=0}^{n-1} f(k/n) \omega_n^{-kl} \\
        & \approx \frac{1}{\sqrt{n}}\sum_{k=0}^{n-1} \sum_{m=-\theta}^{\theta} c_m e^{2\pi i mk/n} \omega_n^{-kl} \\
        & = \frac{1}{\sqrt{n}} \sum_{m=-\theta}^{\theta} c_m \left(\sum_{k=0}^{n-1}\omega_n^{(m-l)k}\right). 
    \end{split}
\end{equation}
Noticing that 
\begin{equation}
    \left(\sum_{k=0}^{n-1}\omega_n^{(m-l)k}\right) = \begin{cases}
    n, & \text{if }m-l = jn \text{ for some integer } j, \\ 
    0, & \text{else}, 
    \end{cases}
\end{equation}
we have 
\begin{equation}\label{eqn:app_fourier_coeff_append}
    (\mathcal{F}^{-1} \vec{f})_l \approx \begin{cases}
    \sqrt{n} c_l , & \text{if }0 \leq l \leq \theta, \\
    \sqrt{n} c_{l-n}, & \text{if } n-\theta \leq l \leq n-1, \\
    0, & \text{ else.}
    \end{cases}
\end{equation}
This implies that, up to a normalization factor, each non-zero entry of the transformed vector matches one of the Fourier coefficients. 

Fourier transform of the derivative can be computed similarly. 
Starting from the Fourier series again, we have 
\begin{equation}
    f'(x) \approx \sum_{m=-\theta}^{\theta} 2\pi i m c_m e^{2\pi i mx}. 
\end{equation}
Replacing the coefficient $c_m$ by $2\pi i m c_m$ in \eqref{eqn:app_fourier_coeff_append}, we have 
\begin{equation}\label{eqn:app_fourier_coeff_deriv_append}
    (\mathcal{F}^{-1} \vec{f'})_l \approx \begin{cases}
     2\pi i \sqrt{n} l c_l , & \text{if }0 \leq l \leq \theta, \\
     2\pi i \sqrt{n} (l-n) c_{l-n}, & \text{if } n-\theta \leq l \leq n-1, \\
     0 , & \text{else.}
    \end{cases}
\end{equation}
This is just a multiplication of the diagonal matrix $\hat{D}$ on the vector $\mathcal{F}^{-1} \vec{f}$, where 
\begin{equation}
    \hat{D} = 2\pi i \text{~diag}(0,1,\cdots,\theta,0,\cdots,0,-\theta,-\theta+1,\cdots,-1). 
\end{equation}
Therefore, 
\begin{equation}
    \vec{f'} \approx  \mathcal{F} \hat{D} \mathcal{F}^{-1} \vec{f}, 
\end{equation}
which implies that the derivative operator $\vec{f'}$ can be numerically computed by first performing an inverse quantum Fourier transform, then multiplying from the left by a diagonal matrix $\hat{D}$, and finally performing a quantum Fourier transform.

\section{Proofs of the results in \sec{application}}\label{sec:append_proof_app}

In this section we present technical details on proving the results presented in \sec{application}. 

\subsection{Proof of \thm{app_ratio}}
\begin{proof}
    The error in estimating the ratio can be decomposed into two parts, the error due to the approximate quantum state, and the error within the amplitude estimate. 
    The first part of the error can be directly bounded such that 
    \begin{equation}\label{eq:app_ratio_error_2}
    \begin{split}
        & \quad \left|\frac{\sum_{kT/m\in\D_t} \sum_{(l_1/n,\cdots,l_d/n)\in \D_x} |f(kT/m,l_1/n,\cdots,l_d/n)|^2 }{\sum_{k=0}^{m-1} \sum_{l_1=0}^{n-1} \cdots \sum_{l_d=0}^{n-1} |f(kT/m,l_1/n,\cdots,l_d/n)|^2 } -\langle \hat{f}|P|\hat{f} \rangle \right| \\
        & = \left| \langle f|P|f \rangle - \langle \hat{f}|P|\hat{f} \rangle\right| \\
        & \leq 2 \|\ket{f}-\ket{\hat{f}}\| \\
        & \leq 2\epsilon'
    \end{split}
    \end{equation}
    given that $\ket{\hat{f}}$ is produced by $\mathcal{A}(\epsilon')$ with tolerated error $\epsilon'$, which will be determined later. 
    Let $E$ be the quantity obtained by amplitude estimate algorithm. 
    Then, according to~\cite{BHM02}, by querying $\Or(q)$ times to $\mathcal{A}$ and $(I-2P)$, we can bound the error with probability at least $8/\pi^2$ as 
    \begin{equation}\label{eq:app_ratio_error_3}
        |\langle \hat{f}|P|\hat{f} \rangle - E| \leq 2\pi \frac{\sqrt{\langle \hat{f}|P|\hat{f} \rangle (1-\langle \hat{f}|P|\hat{f} \rangle)}}{q} + \frac{\pi^2}{q^2} \leq \frac{2\pi}{q}. 
    \end{equation}
    By the powering lemma~\cite{jerrum1986random}, we can boost the success probability to at least $(1-\delta)$ by repeating the procedure $\Or(\log(1/\delta))$ times and taking the median, leading to total $\Or(q\log(1/\delta))$ queries to $\mathcal{A}$ and $(I-2P)$. 
    Combining \eq{app_ratio_error_2} and \eq{app_ratio_error_3}, we have, with probability at least $(1-\delta)$, 
    \begin{equation}
    \begin{split}
          \left| \frac{\sum_{kT/m\in\D_t} \sum_{(l_1/n,\cdots,l_d/n)\in \D_x} |f(kT/m,l_1/n,\cdots,l_d/n)|^2 }{\sum_{k=0}^{m-1} \sum_{l_1=0}^{n-1} \cdots \sum_{l_d=0}^{n-1} |f(kT/m,l_1/n,\cdots,l_d/n)|^2 }  - E \right| \leq  2\epsilon' + \frac{2\pi}{q}. 
    \end{split}
    \end{equation}
    The proof then can be completed by choosing $\epsilon' = \epsilon/4$ and $q = 4\pi/\epsilon$. 
\end{proof}

\subsection{Proof of \thm{app_deriv_state_smooth}}

The proof of \thm{app_deriv_state_smooth} can be decomposed into two main steps. 
The first step is to bound the classical error of using discrete Fourier transform to compute the derivatives, which is given in \lem{application_deriv_fourier_error}. 
Then, we can apply this error bound to estimate the overall complexity of constructing the desired quantum state within the error $\epsilon$. 
Here we present and prove a more general result in \thm{application_deriv_state_main}, which can be viewed as a generalization of \thm{app_deriv_state_smooth} with a weaker regularity assumption.

\begin{lemma}\label{lem:application_deriv_fourier_error}
    Let $f(x)$ be a $C^p$ function with $p \geq 3$ defined on $[0,1]^d$, $$\vec{f} = \sum_{l_1=0}^{n-1}\cdots \sum_{l_d=0}^{n-1} f(l_1/n_1,\cdots,l_d/n_d) \ket{l_1}\cdots \ket{l_d}$$ be the possibly unnormalized vector encoding $f(x)$ evaluated at discrete grids, and $\mathcal{F}$ denote the one-dimensional quantum Fourier transform with $n$ nodes.
    Furthermore, for any positive integer $\theta \leq n/2$, let $D_{j,\theta}$ be a diagonal matrix 
    \begin{equation}
    D_{j,\theta} = 2\pi i \sum_{l_1,\cdots,l_{j-1},l_{j+1},\cdots l_d=0}^{n-1}  \ket{l_1}\bra{l_1}\otimes\cdots\otimes\ket{l_{j-1}}\bra{l_{j-1}} \otimes\widetilde{D}_{\theta}\otimes\ket{l_{j+1}}\bra{l_{j+1}}\otimes \cdots \otimes \ket{l_d}\bra{l_d}, 
    \end{equation}
    \begin{equation}
    \widetilde{D}_{\theta} = \ket{1}\bra{1}+2\ket{2}\bra{2}+\cdots + \theta \ket{\theta}\bra{\theta} - \theta \ket{n-\theta}\bra{n-\theta} - (\theta-1)\ket{n-\theta+1}\bra{n-\theta+1} - \cdots - \ket{n-1}\bra{n-1}. 
\end{equation}
Then 
\begin{equation}
    (\otimes_{j-1}I \otimes \mathcal{F} \otimes_{d-j} I)D_{j,\theta}(\otimes_{j-1}I \otimes \mathcal{F}^{-1} \otimes_{d-j} I)\vec{f} = \sum_{l_1=0}^{n-1}\cdots \sum_{l_d=0}^{n-1} g_l \ket{l_1}\cdots \ket{l_d}
\end{equation}
where 
$$\left\|\vec{g} - \vec{\nabla_{x_j} f}\right\| \leq \frac{8\|\partial_{x_j}^p f\|_{\infty}}{\pi^{p-1}}\frac{1}{n^{p-2}} + \frac{2\sqrt{2}\|\partial_{x_j}^p f\|_{\infty}}{(2\pi)^{p-1}} \frac{n}{\theta^{p-1}}.$$
\end{lemma}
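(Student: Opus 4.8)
The plan is to control the two sources of error separately: the error from truncating the Fourier series at frequency $\theta$ (the aliasing/truncation tail), and the error from approximating the $j$-th partial derivative of a genuinely band-limited trigonometric polynomial by the discrete Fourier spectral differentiation operator $\mathcal{F} D_{j,\theta}\mathcal{F}^{-1}$ (which is exact on frequencies below $\theta$, so this second piece only picks up the contribution of the discarded high frequencies through aliasing). Because every operation acts only in the $j$-th coordinate and tensors trivially with the identity in the other $d-1$ coordinates, I would fix $x$ in all but the $j$-th variable and reduce to the one-dimensional statement, then reinstate the other coordinates at the end by noting that the $\ell^2$ norm over the full grid is the $\ell^2$ norm of the one-dimensional errors summed over the transverse grid, each bounded by the same one-dimensional estimate (here I would use that the transverse sampling does not increase the relevant $L^\infty$ norm of $\partial_{x_j}^p f$). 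So the heart of the matter is the $1$-D claim: for $g(x)$ on $[0,1]$ that is $C^p$, $p\ge 3$, with $\vec g=(g(k/n))_{k=0}^{n-1}$, the vector $\mathcal{F}\hat D_\theta\mathcal{F}^{-1}\vec g$ approximates $(g'(k/n))_k$ with $\ell^2$ error at most $\tfrac{8\|g^{(p)}\|_\infty}{\pi^{p-1}} n^{-(p-2)} + \tfrac{2\sqrt2\,\|g^{(p)}\|_\infty}{(2\pi)^{p-1}}\, n\,\theta^{-(p-1)}$.

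For the $1$-D estimate I would proceed via Fourier coefficients. Write $g(x)=\sum_{m\in\Z} c_m e^{2\pi i m x}$; integration by parts $p$ times gives the decay bound $|c_m|\le \|g^{(p)}\|_\infty/(2\pi|m|)^p$ for $m\ne0$. Now split $g = g_{\le\theta} + g_{>\theta}$ where $g_{\le\theta}$ keeps only $|m|\le\theta$. The operator $\mathcal{F}\hat D_\theta\mathcal{F}^{-1}$ applied to the samples of $g_{\le\theta}$ returns exactly the samples of $g_{\le\theta}'$, as reviewed in \app{DFT} (Appendix~\ref{sec:append_DFT}), because the discrete Fourier transform recovers the coefficients $c_m$ for $|m|\le\theta$ exactly up to the known aliasing wrap-around, and $\hat D_\theta$ multiplies coefficient $c_m$ by $2\pi i m$. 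Hence the error vector is
\[
\vec g' - \mathcal{F}\hat D_\theta\mathcal{F}^{-1}\vec g \;=\; \big(g_{>\theta}'(k/n)\big)_k \;-\; \mathcal{F}\hat D_\theta\mathcal{F}^{-1}\big(g_{>\theta}(k/n)\big)_k .
\]
I would bound the first term's $\ell^2$ norm by $\sqrt n\,\|g_{>\theta}'\|_{L^\infty}$ and then $\|g_{>\theta}'\|_{L^\infty}\le \sum_{|m|>\theta}2\pi|m|\,|c_m|\le \|g^{(p)}\|_\infty\sum_{|m|>\theta}\tfrac{2\pi}{(2\pi|m|)^{p-1}}$, which after comparing the sum to an integral yields something of order $\|g^{(p)}\|_\infty (2\pi)^{-(p-1)}\theta^{-(p-2)}$, and multiplied by $\sqrt n$ gives the $n\,\theta^{-(p-1)}$-type term once one is slightly generous. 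For the second term, since $\mathcal{F}$ is unitary and $\hat D_\theta$ has operator norm $\le 2\pi\theta$, it is bounded by $2\pi\theta$ times the $\ell^2$ norm of the samples of $g_{>\theta}$, i.e. $\le 2\pi\theta\sqrt n\,\|g_{>\theta}\|_{L^\infty}\le 2\pi\theta\sqrt n\,\|g^{(p)}\|_\infty\sum_{|m|>\theta}(2\pi|m|)^{-p}$; the tail sum is $O(\|g^{(p)}\|_\infty(2\pi)^{-p}\theta^{-(p-1)})$, so this term is of order $\sqrt n\,\|g^{(p)}\|_\infty (2\pi)^{-(p-1)}\theta^{-(p-2)}$ — still the same shape. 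I would then organize the two contributions so that the dominant $\theta$-dependent part collects into the stated $\tfrac{2\sqrt2\,\|g^{(p)}\|_\infty}{(2\pi)^{p-1}}\,n\,\theta^{-(p-1)}$ bound, being careful with the constant (the $\sqrt2$ and factor-$2$ come from combining the two pieces and from bounding $\sum_{|m|>\theta}|m|^{-r}$ by $2\int_\theta^\infty x^{-r}dx = \tfrac{2}{(r-1)\theta^{r-1}}$).

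The remaining $\tfrac{8\|g^{(p)}\|_\infty}{\pi^{p-1}}n^{-(p-2)}$ term comes from a second, distinct source that I glossed above: the discrete samples $(g(k/n))_k$ are not the samples of $g_{\le\theta}$ but of $g$, and the aliasing identity says $\mathcal{F}^{-1}$ of these samples returns $c_m + \sum_{j\ne0}c_{m+jn}$ for the retained indices. The difference between $\sum_{j\ne0}c_{m+jn}$-corrections, after applying $\hat D_\theta$ and $\mathcal F$, produces an additional grid-scale error governed by the coefficients with $|m|\gtrsim n$, which by the decay bound is $O(\|g^{(p)}\|_\infty (2\pi)^{-p} n^{-(p-1)})$ in $\ell^\infty$ and hence $O(\|g^{(p)}\|_\infty (2\pi)^{-p} n^{-(p-1)/2})$\dots actually more carefully this should be reorganized so the exponent matches $n^{-(p-2)}$; I would handle it by noting $\theta\le n/2$ so that on the retained band $|m|\le\theta$ the aliases $m\pm n$ have $|m\pm n|\ge n/2$, giving the clean $n^{-(p-2)}$ scaling after multiplying the per-mode bound by $2\pi\theta\le \pi n$ and by $\sqrt{2\theta+1}=O(\sqrt n)$ for the $\ell^2$ conversion. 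The main obstacle is precisely this bookkeeping: cleanly separating the ``truncation tail'' ($\theta$-dependent) from the ``aliasing at the grid scale'' ($n$-dependent), keeping track of which powers of $n$, $\theta$, and $2\pi$ vs $\pi$ appear, and making the constants come out as stated rather than merely up to an absolute constant. Everything else — the tensor reduction, the integration-by-parts decay of $c_m$, the unitarity of $\mathcal{F}$, and the operator-norm bound on $\hat D_\theta$ — is routine.
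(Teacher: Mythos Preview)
Your ingredients (integration-by-parts decay $|c_m|\le \|\partial_{x_j}^p f\|_\infty/(2\pi|m|)^p$, the aliasing identity, unitarity of $\mathcal F$) are the same as the paper's, but the organization via the split $g=g_{\le\theta}+g_{>\theta}$ together with the crude bounds $\|(g'_{>\theta}(k/n))_k\|_2\le \sqrt{n}\,\|g'_{>\theta}\|_\infty$ and $\|\mathcal F\hat D_\theta\mathcal F^{-1}\|\le 2\pi\theta$ does \emph{not} reproduce the stated exponents. Your first term gives $\sqrt{n}\,\|g'_{>\theta}\|_\infty=O(\sqrt{n}\,\theta^{-(p-2)})$, and your second term gives the same order; being ``slightly generous'' does not turn $\sqrt{n}\,\theta^{-(p-2)}$ into $n\,\theta^{-(p-1)}$ unless $\theta\le\sqrt{n}$, which is not assumed. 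The loss is precisely the $L^\infty$ step: $\|g'_{>\theta}\|_\infty$ sums contributions from \emph{all} high frequencies, whereas at a fixed Fourier index $k_j$ only the aliases $m=k_j+qn$ contribute, and the dominant one already has $|2\pi m\,c_m|\lesssim \theta^{-(p-1)}$. Your attempted patch for the $n^{-(p-2)}$ term is also misdirected: in your decomposition $g_{\le\theta}$ genuinely produces zero error (no aliasing, since $\theta\le n/2$), so there is no ``second source'' --- that term is hidden inside the $g_{>\theta}$ contribution on the retained band, and your crude bound cannot separate it out.

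The paper avoids the function-level split entirely and works directly in the Fourier domain. It applies $\mathcal F^{-1}$ to both $\vec f$ and $\vec{\partial_{x_j}f}$ and writes each entry exactly via aliasing as $\sqrt{n}\sum_q c_{k_j+qn}$ and $2\pi i\sqrt{n}\sum_q(k_j+qn)c_{k_j+qn}$. The difference $\bigl(\mathcal F^{-1}\vec{\partial_{x_j}f}-D_{j,\theta}\mathcal F^{-1}\vec f\bigr)_k$ is then computed explicitly and the \emph{index} $k_j$ is split into three ranges rather than splitting the function: on the retained band ($0\le k_j\le\theta$ or $n-\theta\le k_j\le n-1$) the error at mode $k_j$ is exactly $2\pi i\sqrt{n}\sum_{q\ne0}qn\,c_{k_j+qn}$, controlled by coefficients at frequencies $\ge n/2$ and yielding the $n^{-(p-2)}$ term after an $\ell^2$ sum over $\sqrt{2\theta+1}$ modes; on the middle band ($\theta<k_j<n-\theta$) the diagonal $D_{j,\theta}$ kills the mode, so the error is the full $2\pi i\sqrt{n}\sum_q(k_j+qn)c_{k_j+qn}$, whose per-entry size is $O(\theta^{-(p-1)})$ and whose $\ell^2$ sum over $\sqrt{n-2\theta-1}$ modes gives the $n\,\theta^{-(p-1)}$ term. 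If you want to rescue your decomposition, replace the $L^\infty$ step by a Parseval/aliasing computation on the samples of $g_{>\theta}$ --- but once you do that you have essentially reproduced the paper's per-mode argument.
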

\begin{proof}
    We start with the definition of discrete Fourier transform acting on a vector $\vec{v} = \sum_l v_{l} \ket{l_1}\cdots\ket{l_d}$ that 
    \begin{equation}
        (\otimes_{j-1}I \otimes \mathcal{F}^{-1} \otimes_{d-j} I)\vec{v} = \frac{1}{\sqrt{n}} \sum_k \sum_{l_j=0}^{n-1} v_{(k_1,\cdots,k_{j-1},l_j,k_{j+1},\cdots,k_d)} \omega_n^{-k_jl_j}\ket{k_1}\cdots\ket{k_d}
    \end{equation}
    where $\omega_n = e^{2\pi i/n}$. 
    $\mathcal{F}^{-1} v$ can be interpreted as the set of discrete Fourier coefficients of the vector $v$.
    To see this, let the function $f$ allow the following complex Fourier series expansion along $j$-th direction
    \begin{equation}\label{eqn:app_F_series_1}
        f(l_1/n,\cdots,l_{j-1}/n,x,l_{j+1}/n,\cdots,l_d/n) = \sum_{m_j=-\infty}^{\infty} c_{\left\{l\right\}\backslash l_j, m_j} e^{2\pi i m_j x}. 
    \end{equation}
    Then the Fourier transform of the vector $\vec{f}$ becomes
    \begin{equation}
    \begin{split}
        (\otimes_{j-1}I \otimes \mathcal{F}^{-1} \otimes_{d-j} I)\vec{f} &= \frac{1}{\sqrt{n}} \sum_k \sum_{l_j=0}^{n-1} \sum_{m_j=-\infty}^{\infty} c_{\left\{k\right\}\backslash k_j, m_j} e^{2\pi i m_j l_j/n} \omega_n^{-k_jl_j}\ket{k_1}\cdots\ket{k_d} \\
        & = \frac{1}{\sqrt{n}} \sum_k  \sum_{m_j=-\infty}^{\infty} c_{\left\{k\right\}\backslash k_j, m_j}  \left(\sum_{l_j=0}^{n-1}\omega_n^{(m_j-k_j)l_j}\right)\ket{k_1}\cdots\ket{k_d}. 
    \end{split}
    \end{equation}
    Noticing that 
    \begin{equation}
    \left(\sum_{l_j=0}^{n-1}\omega_n^{(m_j-k_j)l_j}\right) = \begin{cases}
    n, & \text{if } m_j-k_j = qn \text{ for some integer } q, \\ 
    0, & \text{else}, 
    \end{cases}
    \end{equation}
    we have 
    \begin{equation}\label{eqn:app_fourier_coeff}
    (\otimes_{j-1}I \otimes \mathcal{F}^{-1} \otimes_{d-j} I)\vec{f} = \sqrt{n} \sum_k  \sum_{q=-\infty}^{\infty} c_{\left\{k\right\}\backslash k_j, k_j+qn}  \ket{k_1}\cdots\ket{k_d}. 
    \end{equation}
    Fourier transform of the derivative can be computed similarly. 
    Starting from the Fourier series again, we have 
    \begin{equation}
    \partial_{x_j} f(l_1/n,\cdots,l_{j-1}/n,x,l_{j+1}/n,\cdots,l_d/n) = \sum_{m_j=-\infty}^{\infty} c_{\left\{l\right\}\backslash l_j, m_j} 2\pi i m_j e^{2\pi i m_j x}. 
    \end{equation}
    Replacing the coefficient $c$ by $2\pi i m_j c$ in \eqref{eqn:app_fourier_coeff}, we have 
    \begin{equation}\label{eqn:app_fourier_coeff_deriv}
    (\otimes_{j-1}I \otimes \mathcal{F}^{-1} \otimes_{d-j} I)\vec{\partial_{x_j}f} = 2\pi i \sqrt{n} \sum_k  \sum_{q=-\infty}^{\infty} (k_j+qn)c_{\left\{k\right\}\backslash k_j, k_j+qn}  \ket{k_1}\cdots\ket{k_d}. 
    \end{equation}
    
    \eqref{eqn:app_fourier_coeff} and \eqref{eqn:app_fourier_coeff_deriv} only differ by multiplication of corresponding frequency factors, and multiplying \eqref{eqn:app_fourier_coeff} by the matrix $D_{j,\theta}$ will remove such a difference for bounded frequencies. 
    Based on this observation, for each $k = (k_1,\cdots,k_d)$, we can compute the difference between the entries as 
    \begin{equation}
        \begin{split}
            & \quad \left((\otimes_{j-1}I \otimes \mathcal{F}^{-1} \otimes_{d-j} I)\vec{\partial_{x_j}f} - D_{j,\theta} (\otimes_{j-1}I \otimes \mathcal{F}^{-1} \otimes_{d-j} I)\vec{f}\right)_k \\
            & = \begin{cases}
               2\pi i \sqrt{n} \sum_{q\neq 0} qn c_{\left\{k\right\}\backslash k_j, k_j+qn} , &\text{ if } 0 \leq k_j \leq \theta, \\
               2\pi i \sqrt{n} \sum_{q\neq -1} (q+1)n c_{\left\{k\right\}\backslash k_j, k_j+qn} , &\text{ if } n-\theta \leq k_j \leq n-1, \\
               2\pi i \sqrt{n} \sum_{q} (k_j+qn) c_{\left\{k\right\}\backslash k_j, k_j+qn} , &\text{ else}.
            \end{cases}
         \end{split}
    \end{equation}
    
    Now we study how to bound the difference here by first estimating the decay rate of the Fourier coefficients. 
    According to the definition of the Fourier coefficients, for any $m \neq 0$, 
    \begin{equation}
        c_{\left\{k\right\}\backslash k_j, m} = \int_0^1 f(k_1/n,\cdots, k_{j-1}/n,x,k_{j+1}/n,\cdots,k_d/n) e^{2\pi i m x} dx, 
    \end{equation}
    and by using integration by parts formula for $p$ times, we obtain 
    \begin{equation}
    \begin{split}
        |c_{\left\{k\right\}\backslash k_j, m}| &= \left|\left(-\frac{1}{2\pi i m}\right)^p\int_0^1 \partial_{x_j}^p f(k_1/n,\cdots, k_{j-1}/n,x,k_{j+1}/n,\cdots,k_d/n) e^{2\pi i m x} dx\right| \\
        & \leq \frac{\|\partial_{x_j}^p f\|_{\infty}}{|2\pi m|^p}. 
    \end{split}
    \end{equation}
    Then we have, for $0 \leq k_i \leq \theta$, 
    \begin{equation}
        \begin{split}
            \left|\sum_{q\neq 0} qn c_{\left\{k\right\}\backslash k_j, k_j+qn}\right| & \leq \|\partial_{x_j}^p f\|_{\infty} \left(\sum_{q = 1}^{\infty} \frac{nq}{(2\pi(qn+k_j))^p} + \sum_{q=1}^{\infty} \frac{nq}{(2\pi(qn-k_j))^p}\right)\\
            & \leq \|\partial_{x_j}^p f\|_{\infty} \left(\frac{n}{(2\pi(n+k_j))^p} + \int_1^{\infty} \frac{ny}{(2\pi(ny+k_j))^p}dy\right) \\
            & \quad + \|\partial_{x_j}^p f\|_{\infty} \left(\frac{n}{(2\pi(n-k_j))^p} + \int_1^{\infty} \frac{ny}{(2\pi(ny-k_j))^p}dy\right) \\
            & = \frac{\|\partial_{x_j}^p f\|_{\infty}}{(2\pi)^p} \left(\frac{n}{(n+k_j)^p} + \frac{1}{(p-2)n(n+k_j)^{p-2}} - \frac{k_j}{(p-1)n(n+k_j)^{p-1}}\right) \\
            & \quad + \frac{\|\partial_{x_j}^p f\|_{\infty}}{(2\pi)^p} \left(\frac{n}{(n-k_j)^p} + \frac{1}{(p-2)n(n-k_j)^{p-2}} + \frac{k_j}{(p-1)n(n-k_j)^{p-1}}\right) \\
            & \leq \frac{2\|\partial_{x_j}^p f\|_{\infty}}{\pi^p n^{p-1}},  
        \end{split}
    \end{equation}
    for $n-\theta \leq k_j \leq n-1$, 
    \begin{equation}
        \begin{split}
            \left|\sum_{q\neq -1} (q+1)n c_{\left\{k\right\}\backslash k_j, k_j+qn}\right| & \leq \frac{\|\partial_{x_j}^p f\|_{\infty}}{(2\pi)^p} \left(\sum_{q=1}^{\infty} \frac{qn}{(k_j+(q-1)n)^p} + \sum_{q=1}^{\infty} \frac{qn}{((q+1)n-k_j)^p}\right) \\
            & \leq \frac{\|\partial_{x_j}^p f\|_{\infty}}{(2\pi)^p} \left( \frac{n}{k_j^p} + \int_1^{\infty} \frac{ny}{(k_j+(y-1)n)^p} dy \right) \\
            & \quad + \frac{\|\partial_{x_j}^p f\|_{\infty}}{(2\pi)^p} \left(\frac{n}{(2n-k_j)^p} + \int_1^{\infty} \frac{ny}{((y+1)n-k_j)^p} dy\right) \\
            & = \frac{\|\partial_{x_j}^p f\|_{\infty}}{(2\pi)^p} \left( \frac{n}{k_j^p} + \frac{1}{(p-2)nk_j^{p-2}} + \frac{n-k_j}{(p-1)n k_j^{p-1}} \right) \\
            & \quad + \frac{\|\partial_{x_j}^p f\|_{\infty}}{(2\pi)^p} \left(\frac{n}{(2n-k_j)^p} + \frac{1}{(p-2)n(2n-k_j)^{p-2}} - \frac{n-k_j}{(p-1)n(2n-k_j)^{p-1}}\right) \\
            & \leq \frac{2\|\partial_{x_j}^p f\|_{\infty}}{\pi^p n^{p-1}}, 
        \end{split}
    \end{equation}
    and for $\theta+1\leq k_j \leq n-\theta-1$, 
    \begin{equation}
        \begin{split}
            \left|\sum_{q} (k_j+qn) c_{\left\{k\right\}\backslash k_j, k_j+qn}\right| & \leq \frac{\|\partial_{x_j}^p f\|_{\infty}}{(2\pi)^p} \left( \frac{1}{k_j^{p-1}} + \sum_{q=1}^{\infty} \frac{1}{(k_j+qn)^{p-1}} + \sum_{q=1}^{\infty} \frac{1}{(qn-k_j)^{p-1}} \right) \\
            & \leq \frac{\|\partial_{x_j}^p f\|_{\infty}}{(2\pi)^p} \left(\frac{1}{k_j^{p-1} } + \frac{1}{(k_j+n)^{p-1}} + \int_1^{\infty} \frac{dy}{(k_j+ny)^{p-1}}\right) \\
            & \quad + \frac{\|\partial_{x_j}^p f\|_{\infty}}{(2\pi)^p} \left( \frac{1}{(n-k_j)^{p-1}} + \int_1^{\infty} \frac{dy}{(ny-k_j)^{p-1}}\right) \\
            & = \frac{\|\partial_{x_j}^p f\|_{\infty}}{(2\pi)^p} \left(\frac{1}{k_j^{p-1} } + \frac{1}{(k_j+n)^{p-1}} + \frac{1}{(p-2)n(n+k_j)^{p-2}}\right) \\
            & \quad + \frac{\|\partial_{x_j}^p f\|_{\infty}}{(2\pi)^p} \left( \frac{1}{(n-k_j)^{p-1}} + \frac{1}{(p-2)n(n-k_j)^{p-2}}\right) \\
            & \leq \frac{\|\partial_{x_j}^p f\|_{\infty}}{(2\pi)^p} \left(\frac{2}{\theta^{p-1}} + \frac{3}{n^{p-1}}\right). 
        \end{split}
    \end{equation}
    Combining these three estimates, we can bound
    \begin{equation}
    \begin{split}
        & \quad \left\|(\otimes_{j-1}I \otimes \mathcal{F}^{-1} \otimes_{d-j} I)\vec{\partial_{x_j}f} - D_{j,\theta} (\otimes_{j-1}I \otimes \mathcal{F}^{-1} \otimes_{d-j} I)\vec{f}\right\| \\
        & \leq \sqrt{2\theta+1} \frac{4\|\partial_{x_j}^p f\|_{\infty}}{\pi^{p-1}n^{p-3/2}} + \sqrt{n-2\theta-1} \frac{\sqrt{2n}\|\partial_{x_j}^p f\|_{\infty}}{(2\pi)^{p-1}} \left(\frac{2}{\theta^{p-1}} + \frac{3}{n^{p-1}}\right) \\
        & \leq \frac{8\|\partial_{x_j}^p f\|_{\infty}}{\pi^{p-1}}\frac{1}{n^{p-2}} + \frac{2\sqrt{2}\|\partial_{x_j}^p f\|_{\infty}}{(2\pi)^{p-1}} \frac{n}{\theta^{p-1}}. 
    \end{split}
    \end{equation}
    This completes the proof by further using the fact that the quantum Fourier transform operator has unit 2-norm. 
    
\end{proof}

\begin{theorem}\label{thm:application_deriv_state_main}
    Let $f(t,x)$ be a function such that $f$ satisfies periodic boundary conditions for $x$, and its spatial partial derivatives exist and are continuous up to order $p \geq 3$. 
    Let the vector $\vec{f}$ be 
    \begin{equation}
        \vec{f} = \sum_{k=0}^{m-1}\sum_{l_1,\cdots,l_d = 0 }^{n-1} f(kT/m, l_1/n,\cdots,k_d/n) \ket{k}\ket{l_1}\cdots\ket{l_d}, 
    \end{equation}
    and the vector $\vec{g}$ be 
    \begin{equation}
    \begin{split}
        \vec{g} &= \sum_{j=0}^{d-1}\sum_{k=0}^{m-1}\sum_{l_1,\cdots,l_d=0}^{n-1} g_{j,k,l} \ket{j}\ket{k}\ket{l_1}\cdots\ket{l_d} \\
            &= \sum_{j=0}^{d-1}\sum_{k=0}^{m-1}\mathcal{F}_{j}D_{j,\theta}\mathcal{F}_{j}^{-1} \sum_{l_1,\cdots,l_d=0}^{n-1} f(kT/m,l_1/n,\cdots,l_d/n) \ket{j}\ket{k}\ket{l_1}\cdots\ket{l_d}.
    \end{split}
    \end{equation}
    Then,  
    \begin{enumerate}
        \item we have 
        \begin{equation}
        \begin{split}
        & \quad \left\|\vec{g} - \sum_{j=0}^{d-1}\sum_{k=0}^{m-1}\sum_{l_1,\cdots,l_d=0}^{n-1} \partial_{x_j} f(kT/m,l_1/n,\cdots,l_d/n) \ket{j}\ket{k}\ket{l_1}\cdots\ket{l_d}\right\| \\
        &\leq \left(\frac{8}{\pi^{p-1}}\frac{m}{n^{p-2}} + \frac{2\sqrt{2}}{(2\pi)^{p-1}} \frac{mn}{\theta^{p-1}}\right)\sum_{j=0}^{d-1} \|\partial_{x_j}^p f\|_{\infty}, 
        \end{split}
        \end{equation}
        \item for any $0 < \epsilon < 1$, $0< \delta <1$, there exists a quantum algorithm which outputs an $\epsilon$-approximation of $\vec{g}/\|\vec{g}\|$ with probability at least $(1-\delta)$, using queries to $\mathcal{A}(\epsilon/Q)$ and $O_D$ for $2Q\log(1/\delta)$ times and additional $\Or(d^3(\log n)^2)$ gates, 
        where $Q = \frac{4\pi\theta\sqrt{d}\|\vec{f}\|}{\|\vec{g}\|}$. 
    \end{enumerate}
\end{theorem}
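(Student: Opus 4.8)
The plan is to handle the two parts separately. For the first part I would apply \lem{application_deriv_fourier_error} one spatial direction and one time slice at a time. Fixing $j\in\{0,\dots,d-1\}$ and a time index $k$, the restriction of $\vec g$ to the block $\ket j\ket k$ is exactly $\mathcal F_j D_{j,\theta}\mathcal F_j^{-1}$ applied in the $j$-th spatial coordinate to $\big(f(kT/m,l_1/n,\dots,l_d/n)\big)_l$; \lem{application_deriv_fourier_error} then bounds its $\ell_2$ distance from $\big(\partial_{x_j}f(kT/m,l/n)\big)_l$ by $C_j:=\frac{8\|\partial_{x_j}^p f\|_\infty}{\pi^{p-1}n^{p-2}}+\frac{2\sqrt 2\,\|\partial_{x_j}^p f\|_\infty}{(2\pi)^{p-1}}\frac{n}{\theta^{p-1}}$, where I use $\|\partial_{x_j}^p f(kT/m,\cdot)\|_{L^\infty([0,1]^d)}\le\|\partial_{x_j}^p f\|_\infty$ uniformly in $k$. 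Since the $dm$ blocks indexed by $(j,k)$ are mutually orthogonal, the triangle inequality over blocks gives $\big\|\vec g-\sum_{j,k,l}\partial_{x_j}f(kT/m,l/n)\ket j\ket k\ket l\big\|\le\sum_{j=0}^{d-1}\sum_{k=0}^{m-1}C_j=m\sum_{j=0}^{d-1}C_j$, which is the claimed bound.

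For the second part, the algorithm I would use is the circuit of Fig.~\ref{fig:circuit_derivative}, and the key step is to track the amplitude in the success branch (rotation register $=0$) and read off the subnormalization. Starting from $\ket 0_{\mathrm{rot}}\ket 0_D\ket 0_{\mathrm{dim}}\,\ket{\hat f}$ with $\ket{\hat f}$ produced by $\mathcal A(\epsilon')$, the dimension-register Hadamards introduce a factor $1/\sqrt d$, the controlled inverse quantum Fourier transform is an exact isometry, and applying $O_D$ to load $D_{j,\theta}(l)$, a rotation on the rotation register conditioned on the $D$-register, and $O_D^{-1}$ to uncompute leaves, in the $\ket 0_{\mathrm{rot}}$ branch, the vector obtained from the normalized diagonal $\diag(D_{j,\theta}(l))$ of entries of modulus at most $1$; the final controlled quantum Fourier transform turns this branch into $\frac{1}{2\pi i\theta\sqrt d\,\|\vec f\|}\vec g$ up to the $\mathcal A$-error, using $D_{j,\theta}=2\pi i\theta\,\diag(D_{j,\theta}(l))$. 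Hence the ideal success probability is $P=\|\vec g\|^2/\!\big((2\pi\theta)^2 d\|\vec f\|^2\big)$, so $1/\sqrt P=Q/2$ with $Q=4\pi\theta\sqrt d\,\|\vec f\|/\|\vec g\|$, and I would apply $\Or(Q)$ rounds of amplitude amplification \cite{BHM02} to raise the success probability to $\Omega(1)$ without perturbing the normalized success-branch state.

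The last ingredient is error propagation, which I expect to be the main obstacle. Because the output is obtained by postselecting on a branch of amplitude $\sqrt P=\Theta(1/Q)$, the input error $\epsilon'$ is amplified by $\Theta(Q)$: writing $v$ and $\hat v$ for the ideal and actual success-branch components, every operation between $\mathcal A$ and the measurement is a contraction, so $\|\hat v-v\|\le\epsilon'$ while $\|v\|=\sqrt P$, and therefore $\big\|\hat v/\|\hat v\|-v/\|v\|\big\|\le 2\|\hat v-v\|/\|v\|\le 2\epsilon'/\sqrt P=\epsilon' Q$; choosing $\epsilon'=\epsilon/Q$ makes the output an $\epsilon$-approximation of $\vec g/\|\vec g\|$. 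I would then verify success by measuring the rotation (and $D$) register and repeat $\Or(\log(1/\delta))$ times to push the failure probability below $\delta$, for a total of $\Or(Q\log(1/\delta))$ (concretely $2Q\log(1/\delta)$) queries to $\mathcal A(\epsilon/Q)$ and to $O_D$; the remaining gates are $\Or(\log d)$ Hadamards and the $d$ controlled quantum Fourier transforms with their select logic, $\Or(d^3(\log n)^2)$ in total. The delicate points are thus concentrated here: checking that the intervening maps are genuine contractions so no extra blow-up occurs, that amplitude amplification preserves the postselected state, and that the constants in $\epsilon'=\epsilon/Q$ and in the number of amplification rounds are chosen correctly. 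Part~1 and the gate count are routine once \lem{application_deriv_fourier_error} is available.
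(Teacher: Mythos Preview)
Your proposal matches the paper's proof in both parts: Part~1 is exactly the blockwise application of \lem{application_deriv_fourier_error} over the orthogonal $(j,k)$ sectors followed by the triangle inequality, and Part~2 is the circuit of Fig.~\ref{fig:circuit_derivative} together with the same error-propagation inequality $\bigl\|\hat v/\|\hat v\|-v/\|v\|\bigr\|\le 2\|\hat v-v\|/\|v\|$ that the paper states and proves as its auxiliary ``Result,'' leading to the same choice $\epsilon'=\epsilon/Q$ and the same $\Or(Q\log(1/\delta))$ query count. Your accounting of amplitude amplification is in fact cleaner than the paper's (which somewhat conflates amplitude and probability in the intermediate step), but the route and the conclusion are the same.
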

\begin{proof}
    Let $\epsilon'$ denote the tolerated error in the algorithm $\mathcal{A}$. 
    Then, since all the operations are unitary, the obtained final quantum state (before measurement) is also an $\epsilon'$-approximation of the exact state. 
    We write the exact final state as 
    \begin{equation}
        \begin{split}
            \frac{1}{2\pi\theta \sqrt{d}\|\vec{f}\|} \sum_{j=0}^{d-1}\sum_{k=0}^{m-1}\sum_{l_1,\cdots,l_d=0}^{n-1} g_{j,k,l} \ket{0}\ket{j}\ket{k}\ket{l_1}\cdots\ket{l_d} + \ket{\perp} 
        \end{split}
    \end{equation}
    where
    \begin{equation}
             \sum_{l_1,\cdots,l_d=0}^{n-1} g_{j,k,l} \ket{l_1}\cdots\ket{l_d} \\
            = \mathcal{F}_{j}D_{j,\theta}\mathcal{F}_{j}^{-1} \sum_{l_1,\cdots,l_d=0}^{n-1} f(kT/m,l_1/n,\cdots,l_d/n) \ket{l_1}\cdots\ket{l_d}. 
    \end{equation}
    According to \lem{application_deriv_fourier_error}, we have 
    \begin{equation}
    \begin{split}
        & \quad \left\|\sum_{j=0}^{d-1}\sum_{k=0}^{m-1}\sum_{l_1,\cdots,l_d=0}^{n-1} g_{j,k,l} \ket{j}\ket{k}\ket{l_1}\cdots\ket{l_d} - \sum_{j=0}^{d-1}\sum_{k=0}^{m-1}\sum_{l_1,\cdots,l_d=0}^{n-1} \partial_{x_j} f(kT/m,l_1/n,\cdots,l_d/n) \ket{j}\ket{k}\ket{l_1}\cdots\ket{l_d}\right\| \\
        & \leq \left(\frac{8}{\pi^{p-1}}\frac{m}{n^{p-2}} + \frac{2\sqrt{2}}{(2\pi)^{p-1}} \frac{mn}{\theta^{p-1}}\right)\sum_{j=0}^{d-1} \|\partial_{x_j}^p f\|_{\infty}. 
    \end{split}
    \end{equation}
    
    It remains to estimate errors in the quantum state after successful measurement and the success probability. 
    For this purpose, we need some linear algebra results and we will state and prove here with slight off from the main proof. 
    
    \paragraph{Result} Let $\left\{e_i,f_j\right\}$ form an orthonormal basis of a Hilbert space, and let 
    $\psi = a+b, \widetilde{\psi} = \widetilde{a} + \widetilde{b}$ with $\|\psi\| = \|\widetilde{\psi}\| = 1$, $a,\widetilde{a} \in \text{span}\left\{e_i\right\}$, and $b,\widetilde{b} \in \text{span}\left\{f_j\right\}$. 
    If $\|\psi-\widetilde{\psi}\| < \epsilon$, then 
    \begin{enumerate}
        \item $\|a/\|a\| - \widetilde{a}/\|\widetilde{a}\|\| < 2\epsilon/\|a\|$, 
        \item $\|\widetilde{a}\| > \|a\| - \epsilon$. 
    \end{enumerate}
    
    This result can be straightforwardly proved by direct computations as
    \begin{equation}
        \begin{split}
            \left\|a/\|a\| - \widetilde{a}/\|\widetilde{a}\|\right\| & \leq \left\|a/\|a\| - \widetilde{a}/\|a\|\right\| + \left\|\widetilde{a}/\|a\| - \widetilde{a}/\|\widetilde{a}\|\right\| \\
            & = \frac{1}{\|a\|} \|a-\widetilde{a}\| + \|\widetilde{a}\|\left|\frac{1}{\|a\|} - \frac{1}{\|\widetilde{a}\|}\right| \\
            & = \frac{1}{\|a\|} \|a-\widetilde{a}\| + \frac{1}{\|a\|} |\|a\|-\|\widetilde{a}\|| \\
            & \leq \frac{2}{\|a\|} \|a-\widetilde{a}\| \\
            & \leq \frac{2}{\|a\|} \|\psi-\widetilde{\psi}\| \\
            & < \frac{2\epsilon}{\|a\|}, 
        \end{split}
    \end{equation}
    and 
    \begin{equation}
        \begin{split}
            \|\widetilde{a}\| \geq \|a\| - \|a-\widetilde{a}\| > \|a\| - \epsilon. 
        \end{split}
    \end{equation}
    
    The errors in the quantum state after successful measurement and the success probability can be directly bounded using this result and amplitude amplification, viewing $\psi$ as the exact state and $\widetilde{\psi}$ as the obtained state. 
    Specifically, for a single run and measurement, errors in the quantum state after successful measurement can be bounded by 
    \begin{equation}
        \frac{4\pi\theta\sqrt{d}\|\vec{f}\|\epsilon'}{\|\vec{g}\|}, 
    \end{equation}
    and can be further bounded by $\epsilon$ by choosing 
    \begin{equation}
        \epsilon' = \frac{\|\vec{g}\|\epsilon}{4\pi\theta\sqrt{d}\|\vec{f}\|}.
    \end{equation}
    The success probability for a single run, after amplitude amplification, is bounded from below by 
    \begin{equation}
        \frac{\|\vec{g}\|}{2\pi\theta\sqrt{d}\|\vec{f}\|} - \epsilon' = \frac{\|\vec{g}\|}{2\pi\theta\sqrt{d}\|\vec{f}\|}\left(1-\epsilon/2\right) \geq \frac{\|\vec{g}\|}{4\pi\theta\sqrt{d}\|\vec{f}\|}. 
    \end{equation}
    The overall probability of getting success at least once can be boosted to $(1-\delta)$ by repeating the algorithm $M$ times with 
    \begin{equation}
        M = \log(1/\delta)/\log\left(1/\left(1-\frac{\|\vec{g}\|}{4\pi\theta\sqrt{d}\|\vec{f}\|}\right)\right) \leq \frac{8\pi\theta\sqrt{d}\|\vec{f}\|}{\|\vec{g}\|}\log\left(\frac{1}{\delta}\right). 
    \end{equation}
    
\end{proof}

Finally, under the further regularity assumption specified in \thm{app_deriv_state_smooth}, we can obtain a simpler complexity estimate which shows a poly-logarithmic dependence in terms of the precision. 

\begin{proof}[Proof of \thm{app_deriv_state_smooth}]
    According to \thm{application_deriv_state_main}, the successful output of the algorithm is an $\epsilon$-approximation of $\vec{g}/\|\vec{g}\|$ where, for any $p \geq 3$, 
    \begin{equation}
        \|\vec{g} - \vec{\nabla f}\| \leq \left(\frac{8}{\pi^{p-1}}\frac{m}{n^{p-2}} + \frac{2\sqrt{2}}{(2\pi)^{p-1}} \frac{mn}{\theta^{p-1}}\right)\sum_{j=0}^{d-1} \|\partial_{x_j}^p f\|_{\infty}. 
    \end{equation}
    Let $c = \sup_{j,p} (\|\partial_{x_j}^p f\|_{\infty})^{1/p}$, and by choosing $\theta = c/\pi+1$, we have 
    \begin{equation}
    \begin{split}
        \|\vec{g} - \vec{\nabla f}\| &\leq \left(\frac{8}{\pi^{p-1}}\frac{m}{n^{p-2}} + \frac{2\sqrt{2}}{(2\pi)^{p-1}} \frac{mn}{\theta^{p-1}}\right)dc^p \\
        & \leq 8\pi dmn^2 \left(\frac{c}{\pi n}\right)^p + 4\sqrt{2} \pi d mn \theta \left(\frac{1}{2}\right)^p. 
    \end{split}
    \end{equation}
    Since $c/(\pi n) < 1$, we obtain $\vec{g} = \vec{\nabla f}$ by taking $p \rightarrow \infty$. 
    Therefore the claims in \thm{app_deriv_state_smooth} directly follow from \thm{application_deriv_state_main}. 
\end{proof}

\end{document}